%
%
%
%
%
%
%
%
%
%
%
%
%
%
%
%
%
%
%
%
%
%
%
%
%
%
%
%
%
%
%
%
%

\documentclass[11pt]{article}
\usepackage[utf8]{inputenc}
\usepackage[margin=1in]{geometry}
\usepackage{amsmath, amssymb, amsthm, dsfont}
\usepackage{hyperref}
\usepackage{natbib}
\bibliographystyle{abbrvnat}
\setcitestyle{authoryear,open={(},close={)}} 

\usepackage{booktabs} 
\usepackage[ruled]{algorithm2e} 

\SetAlFnt{\small}
\SetAlCapFnt{\small}
\SetAlCapNameFnt{\small}
\SetAlCapHSkip{0pt}
\IncMargin{-\parindent}

 \usepackage[english]{babel}
\usepackage{bm}
\usepackage{cleveref}
\usepackage{csquotes}
\usepackage{mathtools}
\usepackage{multirow}
\usepackage[colorinlistoftodos]{todonotes}

\DeclareMathOperator*{\argmax}{argmax}


\newcommand{\jonnote}[1]{}
\newcommand{\joshnote}[1]{}
\newcommand{\ggnote}[1]{}
\newcommand{\junyaonote}[1]{}


\newcommand{\E}{\mathbb{E}}

\newcommand{\R}{\mathbb{R}}


\newcommand{\poly}{\mathrm{poly}}

  
  \newcommand{\costvector}{\bm{c}}
  \newcommand{\Cost}[1]{c_{#1}}
  \newcommand{\CostVector}{\bm{c}}

  
  \newcommand{\TypedForecast}[3]{F^{(#1)}_{#2, #3}}

  \newcommand{\ForecastTensor}{\bm{F}}
  
  \newcommand{\reward}[1]{r_{#1}}
  \newcommand{\rewardvector}{\bm{r}}
  \newcommand{\Reward}[1]{r_{#1}}
  \newcommand{\RewardVector}{\bm{r}}
  

  \newcommand{\PrincipalAgentProblem}{\left(\costvector, \ForecastTensor, \rewardvector\right)}

  \newcommand{\Contract}[1]{x_{#1}}
  \newcommand{\TypedContract}[2]{x^{(#1)}_{#2}}
  \newcommand{\ContractVector}{\bm{x}}
  \newcommand{\TypedContractVector}[1]{\bm{x}^{(#1)}}

  \newcommand{\ContractMatrix}{\bm{X}}
  
  \newcommand{\ContractDistributions}{\mathcal{X}}


  \newcommand{\OptimalAction}{i_*}
  \newcommand{\TypedOptimalAction}[1]{\OptimalAction^{(#1)}}

  \newcommand{\Profit}{\textsc{Profit}}
  \newcommand{\TypedProfit}[1]{\Profit^{(#1)}}
  \newcommand{\Utility}{U}

  
  \newcommand{\contract}[1]{x_{#1}}


\newcommand{\ODetMenu}{\textsc{Opt-DetMenu}}
\newcommand{\ORndMenu}{\textsc{Opt-RndMenu}}
\newcommand{\ORndMenuLinear}{\textsc{Opt-RndMenuLinear}}
\newcommand{\OSingle}{\textsc{Opt-Single}}
\newcommand{\OLinear}{\textsc{Opt-Linear}}






\allowdisplaybreaks

\newtheorem{theorem}{Theorem}[section]
\newtheorem{lemma}[theorem]{Lemma}
\newtheorem{corollary}[theorem]{Corollary}

\newtheorem{example}[theorem]{Example}

\title{The Power of Menus in Contract Design}
\author{Guru Guruganesh\\ Google Research\\\texttt{gurug@google.com} \and Jon Schneider\\ Google Research\\\texttt{jschnei@google.com} \and Joshua Wang\\ Google Research\\\texttt{joshuawang@google.com} \and Junyao Zhao\\ Stanford University\\\texttt{junyaoz@stanford.edu}}
\date{}

\begin{document}

\maketitle

\begin{abstract}
We study the power of menus of contracts in principal-agent problems with adverse selection (agents can be one of several types) and moral hazard (we cannot observe agent actions directly). For principal-agent problems with $T$ types and $n$ actions, we show that the best menu of contracts can obtain a factor $\Omega(\max(n, \log T))$ more utility for the principal than the best individual contract, partially resolving an open question of Guruganesh et al.~(2021). We then turn our attention to randomized menus of linear contracts, where we likewise show that randomized linear menus can be $\Omega(T)$ better than the best single linear contract. As a corollary, we show this implies an analogous gap between deterministic menus of (general) contracts and randomized menus of contracts (as introduced by Castiglioni et al.~(2022)).
\end{abstract}

\section{Introduction}
The principal-agent problem studies a setting where one party (the \textit{principal}) wishes to incentivize a second party (the \textit{agent}) to exert effort on behalf of the principal. The agent has different potential actions they could take (e.g., levels of effort they exert), each of which stochastically results in one of several outcomes. The principal cannot directly observe the agent's action (``moral hazard'') but they have preferences over the different outcomes and would like to incentivize the agent to take a favorable action for the principal. To accomplish this, the principal can offer the agent a \textit{contract} -- a mechanism describing how the principal will reward the agent contingent on the ultimate realized outcome. Principal-agent problems arise in a wide variety of different disciplines (e.g., law, insurance, employment) and the development of contract theory has proven to be an invaluable tool in the economics literature for the analysis of such problems.

Over the past few years, there has been a surge of activity in the application of computational methods to contract theory (much in the same way that computational methods have been successfully applied to mechanism design and auction theory). Many of these papers take the perspective of understanding the power of one class of contracting mechanisms as it compares to another. For example, in \citet{dutting2019simple}, the authors show that in principal-agent settings with $n$ actions, the optimal general contract obtains at most $\Omega(n)$ times as much utility as the optimal \emph{linear} contract (a much simpler subclass of contracts where the principal promises to share a fixed proportion of their utility with the agent), and that there are cases where this is tight. This methodology has been applied to many variants of the principal-agent problem (e.g. variants with combinatorially structured actions and outcomes).

One particularly interesting variant of the principal-agent problem introduces  private information (or ``adverse selection'') to the problem instance. In this setting an agent may be one of several types (unknown to the principal), and their relevant properties (e.g., the probability they induce a certain outcome by playing a specific action) may differ from type to type, and this allows us to model the uncertainty the principal may have over the agent they are contracting (and capture a much more realistic set of problems).

In the typed principal-agent problem, there are (at least two) natural classes of mechanisms to compare. One option is to, as before, offer a fixed single contract to the agent. However, with the presence of types, the principal also has the option to offer the agent a \emph{menu} of different contracts (from which the agent selects their favorite). Is it ever in the principal's interest to do this? This question was originally proposed by \citet{guruganesh2021contracts}, who showed there exist instances where it is strictly beneficial to offer a menu over a single contract. However, \citet{guruganesh2021contracts} were only able to show that the gap in power between menus and single contracts lies somewhere between $\Omega(n\log T)$ (for a setting with $n$ actions and $T$ types) and $2$. This gap (and even the question of whether it is super-constant) remains open to this day. 

Making things even more interesting, \citet{castiglioni2022designing} recently showed that there is a third, even more powerful class of mechanisms where the principal offers the agent a menu of \emph{randomized contracts} (i.e., the agent selects a distribution over contracts and receives a contract randomly drawn from this distribution). Although much is known about the computational properties of such mechanisms\footnote{Intriguingly, while single contracts and ordinary menus are computationally hard to optimize in the typed setting, there exist efficient algorithms to find the best menu of randomized contracts.}, essentially nothing is currently known about the power of such mechanisms -- either in comparison to ordinary menus or to individual contracts. 

\subsection{Our results}

In this paper we pin down more closely the power of menus in contract design. We begin by significantly closing the gap present between single contracts and menus of (deterministic) contracts present in \citet{guruganesh2021contracts}. In particular, \citet{guruganesh2021contracts} prove that in any principal-agent problem with $n$ actions and $T$ types, the profit of the best menu of contracts is at most $O(n\log T)$ larger than the profit of the best individual contract. However, the only lower bound they provide (and the only lower bound known to date) is that there are menus of contracts that outperform individual contracts by a constant factor. In our first two theorems, we show that both the dependence on $n$ and $T$ in the upper bound are asymptotically tight.

\begin{theorem}[Restatement of~\Cref{thm:Omega_n_gap}]
\label{thm:informal_omega_n_gap}
There exists a principal-agent problem with $3$ types and $n$ actions where the profit of the optimal menu of  (deterministic) contracts is at least $\Omega(n)$ larger than the profit of the optimal single contract. 
\end{theorem}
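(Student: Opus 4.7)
My plan is to construct an explicit family of principal-agent instances with $3$ types $T_1, T_2, T_3$ and $n$ actions in which a carefully designed menu extracts $\Omega(1)$ profit from the principal while every single contract extracts only $O(1/n)$. The qualitative picture is that each type has a ``signature'' action producing a type-specific outcome pattern; a menu can offer each type a narrowly tailored contract targeting that action, whereas a single contract must simultaneously play all three of these roles, and the instance will be engineered so that any such compromise is off by a factor of $n$.

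For the construction, I would take inspiration from the chain-style instances used by \citet{dutting2019simple} to separate linear from general contracts. Roughly, each type has one ``good'' action together with $\Theta(n)$ ``trap'' actions that differ in which outcome they produce, and the good action is attractive to the agent only when the contract places a sharp reward on that type's signature outcome. The three type probabilities are chosen so that each contributes a constant fraction of the total, and the outcome space can be taken of size $O(n)$ with rewards in $\{0,1\}$ to keep the bookkeeping tractable.

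The menu side is the easier half: once the instance is fixed, I would display three contracts $x^{(1)}, x^{(2)}, x^{(3)}$, one per type, verify incentive compatibility (each $T_t$ strictly prefers $x^{(t)}$ over $x^{(t')}$ for $t' \neq t$, using a small additive slack to make preferences strict), and compute the expected profit directly to obtain $\Omega(1)$.

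The main obstacle is the single-contract upper bound. To prove that no single contract $x$ achieves profit better than $O(1/n)$, I plan a case analysis on the induced action profile $(i^*_1, i^*_2, i^*_3)$, where $i^*_t$ is $T_t$'s best response to $x$. For each candidate profile the IC constraints become a system of linear inequalities in $x$, and we want to bound $\sum_t p_t \sum_j F^{(t)}_{i^*_t, j}(r_j - x_j)$ subject to them. The expectation is that in every case, either some type is forced into a low-reward trap action (so that type's contribution is $O(1/n)$), or the payments required to keep all three types on their good actions simultaneously are so large that the net profit is $O(1/n)$. Because there are only three types, I hope to collapse the $n^3$ raw action profiles into a constant number of qualitatively different cases (for example, by grouping trap actions that are symmetric from the principal's perspective) and handle each uniformly, so that the argument does not blow up combinatorially.
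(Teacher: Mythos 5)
Your proposal correctly identifies the high-level skeleton — a three-type instance, a tailored menu that extracts roughly the welfare, a single-contract upper bound obtained by arguing about best responses — but it is missing the specific mechanism that makes the single-contract upper bound provable, which is the genuinely hard part of this result.

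The paper's construction uses only \emph{four} outcomes, not $O(n)$, and the load-bearing feature is an \emph{overlapping-outcome conflict} between the two productive types. Type $1$ can only produce outcomes $1,2,4$ and type $2$ can only produce outcomes $2,3,4$; the rewards sit on outcomes $1$ and $3$, and outcomes $2$ and $4$ have zero reward. Type $1$'s chain of $n$ actions (with geometrically scaled forecasts and costs, in the spirit of the linear-vs-general gap of \citet{dutting2019simple}) forces a single contract into a dilemma: either set $x_1$ close to $r_1=1$, which destroys the profit from type $1$ (Lemma~\ref{lemma:agent_1_low_profit_for_large_x1_Omega_n_gap}), or incentivize type $1$ via a large $x_2$ (Lemma~\ref{lemma:lower_bound_of_x2_Omega_n_gap}); but every non-null action of type $2$ places the \emph{same} constant probability mass on outcome $2$, so a large $x_2$ hands type $2$ a free transfer that exactly cancels the entire surplus the principal could have extracted from type $2$ (Lemma~\ref{lemma:agent_2_profit_upper_bound_Omega_n_gap}). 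The menu escapes because type $2$'s contract can pay only on outcome $3$, which type $1$ never produces, leaving type $1$'s contract free to pay on outcome $2$. Your sketch does not contain this tension. In particular, the phrase ``a single contract must simultaneously play all three of these roles'' is not an obstruction on its own: if the three types used disjoint sets of outcomes, the single-contract problem would fully decouple across types and a single contract could mimic the menu exactly, so a gap of this form is only possible because of a carefully engineered outcome overlap.

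There is also a structural mismatch. You propose three types that each contribute a constant fraction of the profit. In the paper, type $3$ contributes \emph{nothing}: it is a dummy type, present with probability $1 - n^{-2n}$, whose sole role is to tax payments on the zero-reward outcome $4$ so that $x_4$ is forced to be $O(n^{-3n})$ (Lemma~\ref{lemma:x4_upper_bound_Omega_n_gap}). The two productive types are each present with probability $1/(2n^{2n})$, and the absolute profits compared are $\Theta(n^{-n})$ for the menu versus $O(n^{-(n+1)})$ for the single contract — not $\Omega(1)$ versus $O(1/n)$. This is an unavoidable consequence of a geometric chain of length $n$ with probabilities in $[0,1]$, and you cannot renormalize it away without introducing such a dummy mechanism. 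Finally, your plan to ``collapse the $n^3$ action profiles into a constant number of cases'' is asserted rather than enabled by a concrete instance. The paper avoids case-splitting on action profiles entirely: the three lemmas above hold uniformly over the contract space, and the soundness argument is a short combination of them. To turn your outline into a proof you would need to commit to an explicit forecast tensor, build in the overlapping-outcome conflict (or some equally specific substitute), and carry out the upper bound without leaning on an unproven collapse of cases.
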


\begin{theorem}[Restatement of~\Cref{thm:Omega_log_T_gap}]
\label{thm:informal_omega_log_t_gap}
There exists a principal-agent problem with $T$ types and $2$ actions where the profit of the optimal menu of  (deterministic) contracts is at least $\Omega(\log T)$ larger than the profit of the optimal single contract.
\end{theorem}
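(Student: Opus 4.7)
The plan is to exhibit an explicit principal--agent instance with $T$ types and $2$ actions per type that achieves an $\Omega(\log T)$ gap between the optimal menu and the optimal single contract. A useful preliminary observation is that the construction must involve strictly more than two outcomes: with two outcomes and two actions per type, each type's agent utility is the maximum of two linear functions of the effective contract slope $y = x_1 - x_0$ with nonnegative coefficients, and is therefore convex and nondecreasing in $y$. Combined with the fact that an optimal contract sets $x_0 = 0$ (the principal has no reason to pay on the null outcome), every agent in a menu picks the option with the largest $x_1$, so any menu is equivalent to a single contract in this case. Hence the witnessing instance uses $\Omega(\log T)$ shared outcomes.

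The idea is to partition the $T$ types into $L = \lfloor \log_2 T \rfloor$ geometrically-spaced scales, with roughly $T/L$ types per scale. For each scale $s$, I would design the action 2 outcome distribution (with support over a scale-dependent subset of outcomes), cost $c_s$, and outcome rewards $r_\ell$ so that the ``threshold contract'' needed to incentivize action 2 for scale $s$ differs by a factor of $2$ across consecutive scales, and so that each scale contributes $\Omega(1/L)$ profit to an optimally tailored menu, for total menu profit $\Omega(1)$. Crucially, outcomes must be \emph{shared} across scales: a promising choice is to let type $t$ at scale $s$ have action 2 sample its outcome from the prefix $\{1, 2, \ldots, s\}$ (say, uniformly), so that the incentive constraint for scale $s$ couples $x_1, \ldots, x_s$ and forces any single contract to trade off across scales.

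With the instance in hand, I would argue two bounds. For the menu I would write down one tailored contract per scale, verify incentive compatibility (each type at scale $s$ weakly prefers the designated scale-$s$ contract, with ties broken in the principal's favor), and compute the $\Omega(1)$ aggregate profit. For single contracts I would argue via a case analysis on the ``scale'' of the contract (essentially the magnitude of its largest coordinate): only scales within a constant factor of this scale are efficiently served, while the remaining $L - O(1)$ scales lose profit either because they are not incentivized to take action 2, or because the payment needed to incentivize them wastes almost all of the available margin. This bounds single-contract profit at $O(1/L)$, giving the claimed ratio.

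The main obstacle is designing the outcome-sharing structure so that this $\log T$-way trade-off is genuinely unavoidable. In particular, one must rule out clever reallocations of payments across the shared outcomes that let a single contract simultaneously serve multiple distant scales at low cost; this is exactly the step that fails in the most natural ``one outcome per scale'' construction (where single contracts and menus coincide). I expect the construction and analysis to be dual to the $O(n \log T)$ upper bound of \citet{guruganesh2021contracts}, which presumably covers contract space by $\Theta(\log T)$ representative contracts, with the lower bound being tight against this covering.
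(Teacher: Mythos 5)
Your high-level strategy is on the right track: both you and the paper organize the $T$ types into $\Theta(\log T)$ geometrically-spaced scales (the paper sets $T = N\cdot 2^N + 1$, $N$ scales of $2^N$ types each, with each scale's action-1 success probability on the rewarding outcome scaled by a factor of $(1-2^{-k})^{-1}$), and both argue that any single contract can only efficiently serve $O(1)$ of these scales while a menu serves all of them. Your observation that two outcomes cannot work is also correct and is explicitly used by the paper elsewhere. However, your proposed construction diverges in an important and unresolved way.

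You propose a ``prefix-sharing'' instance on $O(\log T)$ outcomes, with action 2 at scale $s$ sampling uniformly from outcomes $\{1,\dots,s\}$, and you acknowledge that ``designing the outcome-sharing structure so that this $\log T$-way trade-off is genuinely unavoidable'' is the main obstacle. This obstacle is exactly where the paper's construction does something different and nontrivial. The paper does not use $O(\log T)$ outcomes at all — it uses $T$ outcomes, one ``private'' outcome per nondummy type (plus outcome $1$ which is the unique rewarded outcome and a dummy outcome $T$). The crucial asymmetry is this: for each type $t$, outcome $t$ appears (i) in type $t$'s action-1 support (so the menu can cheaply ``top up'' type $t$ via $x_t$ to cover the cost of action 1), (ii) in every \emph{other} type's action-2 support, and (iii) \emph{not} in type $t$'s own action-2 support. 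Properties (i) and (iii) let the menu incentivize action 1 for each type in isolation, while property (ii) is what kills single contracts: a single contract needing to top up many types must put positive payment on many private outcomes simultaneously, but then every type's action 2 (which samples all other types' private outcomes) becomes too attractive. The paper's Lemma 5.5 formalizes this with a bootstrapping argument showing that if more than $\approx 2^{N+2}$ types are incentivized to play action 1, the required $x_t$ values diverge. A prefix structure with only $O(\log T)$ outcomes loses this per-type asymmetry: types within a scale would necessarily share top-up coordinates, the coupling between a type's action-1 and action-2 supports changes, and the counting/bootstrapping argument does not obviously survive. Until you exhibit the full forecast tensor and verify that the prefix instance blocks single contracts, this remains a gap in the proposal rather than an alternative proof.

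A minor note: the paper makes all rewards $\{1,0,\dots,0\}$, all types share the same two action costs $1/(2T)$ and $0$, and the final dummy type taxes the spare outcome $T$ so that $x_T$ is negligible (Lemma 5.3). You envision scale-dependent costs and rewards, which is not needed in the paper's construction and would complicate the verification of incentive compatibility. You also speculate the lower bound is dual to the \citet{guruganesh2021contracts} covering upper bound; the paper does not proceed that way but gives a direct construction with direct completeness and soundness arguments.
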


We remark that both the optimal menu of deterministic contracts and the optimal single contract are known to be computationally intractable and thus can be tricky to work with. Previous approaches to show such gaps (such as~\citet{guruganesh2021contracts}) worked via adding gadgets to equate the value of one class to the value of welfare and the value of the other side to the value of linear contracts. Our proof introduces some novel elements that help bar single contracts from doing as well as menus, and does not attempt to reduce to welfare versus linear contracts.

We then switch our attention to the setting of linear contracts. Linear contracts are an important subclass of contracts in which the principal gives the agent a fixed percentage of the principal's total reward. Linear contracts are commonly used in practice, and much of the recent algorithmic work in contract theory has been focused on better understanding linear contracts in a variety of settings.

At first, it may not seem that menus of linear contracts are particularly useful -- given a menu of different fixed percentages, it is always in an agent's interest to choose the highest fixed percentage (and thus the principal could simply have offered that single linear contract). However, this assumes that we are offering the agent a menu of deterministic linear contracts. This is not the case for menus of \textit{randomized} linear contracts (in the same sense as the menus of randomized general contracts of \citet{castiglioni2022designing}) which \textit{do} have the potential to outperform a single linear contract. The interaction between the principal and the agent for a menu of randomized linear contracts proceeds as follows:

\begin{enumerate}
    \item The principal begins by constructing a menu of randomized linear contracts. Each menu item is a distribution over transfer coefficients $\alpha \ge 0$, where a specific $\alpha$ represents the linear contract where the principal offers an $\alpha$ fraction of their reward to the agent.
    \item The agent (with a randomly drawn type) selects a single distribution $\mathcal{D}$ from this menu.
    \item The principal then samples a linear contract $\alpha$ from $\mathcal{D}$ and reveals it to the agent.
    \item The agent observes the linear contract $\alpha$ and then takes an action in response. An outcome occurs, the principal receives some reward $R$, gives $\alpha R$ to the agent, and keeps $(1-\alpha)R$. 
\end{enumerate}

It turns out that randomized menus of linear contracts are in general more powerful than individual linear contracts. In particular, we show that there exist principal-agent problems where the principal can obtain a strictly larger profit by offering a menu of randomized linear contracts than by offering a single linear contract (see Example~\ref{ex:rlc-v-linear}).

We then demonstrate how to find the optimal randomized menu of linear contracts by solving a single linear program with $\poly(n, T)$ variables and constraints (\Cref{thm:polytime_menu_rlc}). In the course of doing so, we also prove the following facts about the structure of randomized menus of linear contracts, which may be of independent interest. First, we prove that without loss of generality, all randomized linear contracts in the optimal menu have identical support, and this support has size at most $O(nT)$ (see~\Cref{lem:brkpoint}). Secondly (and much more counterintuitively), this optimal support sometimes contains contracts where the principal is \emph{guaranteed to lose money}; i.e., sometimes it is necessary to include linear contracts with $\alpha > 1$ in the menu to achieve the optimal profit for the principal. We give one example of this in~\Cref{lem:unbounded_rlc_example}. 

Finally, we attempt to characterize the gap in power between menus of randomized linear contracts and individual linear contracts. 
An upper bound immediately results from the gap of \citet{guruganesh2021contracts} between a single linear contract and the maximum achievable welfare in a principal agent problem. We prove a tight lower bound for one specific regime of $n$ and $T$ (specifically, where $n$ and $T$ have roughly the same magnitude). 

\begin{theorem}[Restatement of~\Cref{thm:rlc_gap}]\label{thm:intro_rlc}
There exists a principal-agent problem with $T$ types and $O(T)$ actions where the profit of the optimal menu of randomized linear contracts is at least $\Omega(T)$ as large as the profit of the optimal individual linear contract.
\end{theorem}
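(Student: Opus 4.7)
The plan is to exhibit an explicit principal--agent instance with $T$ types and $O(T)$ actions, and separately to establish a lower bound of $\Omega(V)$ on the optimal menu of randomized linear contracts and an upper bound of $O(V/T)$ on the optimal single linear contract, for a suitable normalization $V$. In the spirit of the $\Omega(n)$ welfare-vs-linear gap of \citet{dutting2019simple}, each type $t \in [T]$ will be equipped with $O(1)$ carefully parameterized actions (a ``null'' plus one or two ``flagship'' actions) whose breakpoints $\alpha_t^*$---the smallest $\alpha$ at which the agent switches to a high-reward action---are geometrically spaced, say $\alpha_t^* = 1 - 2^{-t}$, with flagship rewards $R_t$ and prior probabilities $p_t$ tuned so that each type contributes $\Theta(V/T)$ to the total welfare.

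For the single linear upper bound, at any fixed $\alpha$ only types with $\alpha_t^* \le \alpha$ take their flagship, each contributing $p_t (1-\alpha) R_t$; the geometric scaling causes this sum to telescope to $O(V/T)$ uniformly in $\alpha$, while the aggregate welfare remains $\Theta(V)$. This part is a routine calculation.

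The crux is to construct randomized linear contracts $\{\mathcal{D}_t\}_{t=1}^T$ satisfying IC and yielding total profit $\Omega(V)$. A \emph{deterministic} menu of linear contracts cannot separate types, since every type strictly prefers the item with the largest $\alpha$, so randomization is essential. Each $\mathcal{D}_t$ will place most of its mass near $\alpha_t^*$ (to extract $\Omega(V/T)$ profit from type $t$), together with additional mass either at small $\alpha$ values where higher-type flagships are not incentivized (making $\mathcal{D}_t$ less attractive to types $t' > t$) or, following \Cref{lem:unbounded_rlc_example}, at values $\alpha > 1$ that produce positive utility for the targeted type but negative expected principal payoff, providing exactly the information rent needed for IC without giving away too much expected profit.

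The main obstacle is verifying IC while preserving the per-type profit. Because the breakpoints and rewards are geometrically scaled, small perturbations in the support of some $\mathcal{D}_{t'}$ can produce factor-$T$ changes in $\mathbb{E}_{\alpha \sim \mathcal{D}_{t'}}[u_t(\alpha)]$, so the weights in each menu item must be tuned tightly, and the construction may need to exploit multiple flagship actions per type so that $u_t$ has a type-specific kink that randomization can leverage. I expect the cleanest route is to exhibit an explicit primal solution to the LP underlying \Cref{thm:polytime_menu_rlc} (which by \Cref{lem:brkpoint} may be assumed to have small common support), and to verify each IC constraint by a case analysis on the position of each $\alpha_t^*$ relative to the supports of the $\mathcal{D}_{t'}$'s. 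Summing the per-type profits against the priors then gives $\Omega(V)$ menu profit, which combined with the $O(V/T)$ single-linear bound yields the claimed $\Omega(T)$ gap.
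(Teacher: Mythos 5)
Your high-level plan is on the right track and correctly identifies the key ingredients: geometrically spaced breakpoints, per-type profit on the order of $\Theta(V/T)$ for a total welfare of $\Theta(V)$, a telescoping argument for the $O(V/T)$ single-linear upper bound, and IC as the crux. However, there is a genuine gap: you have not resolved the IC problem, and the two mechanisms you float ($\mathcal{D}_t$ placing ``additional mass at small $\alpha$'' or using $\alpha > 1$ as in Lemma~\ref{lem:unbounded_rlc_example}) are both different from what actually makes the construction go through, and neither is developed far enough to see whether it works.

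The paper's construction resolves the IC tension with a symmetric ``two-block'' structure that is absent from your proposal. Concretely, the $n = 2T+3$ breakpoints are shared by all types (set $\alpha_i = 1 - \eta^i$, $\eta = 1/2T$), and each derivative $U'_t$ equals a common baseline plus four $\Theta(T)$-sized ``spikes'' placed at symmetric pairs of indices: two consecutive indices $\{t-1, t\}$ in the first block and their mirror images $\{n-t-2, n-t-1\}$ in the second block (equation~\eqref{eq:u_der_spec}). The menu item $\gamma^{(t)}$ then splits its mass $50/50$ between the two breakpoints $\alpha_t$ and $\alpha_{n-t-1}$ --- one small, one large, symmetric about the middle --- rather than concentrating near any single $\alpha_t^*$. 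This symmetry is what makes IC clean: type $t$ collects extra utility from all four of its spikes under $\gamma^{(t)}$ ($1.75\,T/10n$ total), but at most three of four under any $\gamma^{(t')}$, $t' \neq t$ ($1.5\,T/10n$), regardless of whether $t' > t$ or $t' < t$. Your alternative ideas don't obviously achieve this: if each type has only ``one or two flagship actions,'' the utility functions lack the shared breakpoint lattice and the mirrored spike pairs that the IC case analysis relies on, and the $\alpha > 1$ idea (which is a separate phenomenon from Lemma~\ref{lem:unbounded_rlc_example}) is not used in this construction at all --- the paper's supports stay in $[0,1]$. You correctly flag the LP of Theorem~\ref{thm:polytime_menu_rlc} as a possible route, but the paper instead verifies IC directly by hand, and doing so requires first having the right construction; without the symmetric spike structure, the ``tuning'' you allude to is genuinely hard, precisely because (as you observe) the geometric scaling makes the IC constraints brittle.
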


Combining Theorem \ref{thm:intro_rlc} with the facts that (i) there are some principal-agent problems where any contract can be equivalently written as a linear contract and (ii) deterministic menus of linear contracts are equal in power to individual linear contracts, we immediately obtain an analogous gap between menus of (general) deterministic contracts and menus of (general) randomized contracts. 

\begin{corollary}[Restatement of Corollary \ref{cor:det_vs_rnd}]
There exists a principal-agent problem with $T$ types and $O(T)$ actions where the profit of the optimal menu of randomized contracts is at least $\Omega(T)$ as large as the profit of the optimal menu of deterministic contracts.
\end{corollary}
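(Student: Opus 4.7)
The plan is to derive the corollary essentially for free from \Cref{thm:intro_rlc}, by instantiating that theorem on an instance with just two distinct outcome rewards, so that the notions of "general contract" and "linear contract" collapse. Concretely, I would take (or lightly modify) the hard instance from \Cref{thm:intro_rlc} so that there are only two outcomes, one with reward $0$ and one with reward $R > 0$. On any such instance, a deterministic contract is specified by payments $(x_0, x_1)$, and since setting $x_0 > 0$ only transfers money to the agent without changing any incentive constraint, it is without loss of generality to take $x_0 = 0$; the contract is then exactly the linear contract with transfer coefficient $\alpha = x_1/R$. Hence on such an instance every (single) deterministic contract is equivalent to a linear contract.

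With this equivalence in hand, I would chain together three facts to get the corollary. First, on the chosen instance, the optimal deterministic menu of general contracts equals the optimal deterministic menu of linear contracts, because each menu item can be replaced by its linear equivalent and the agent's choice and incentives are unchanged. Second, the optimal deterministic menu of linear contracts equals the best single linear contract: in any deterministic menu of linear contracts, every type strictly prefers the largest offered $\alpha$, so the menu reduces to that single item (this is the observation from the introduction). Third, the optimal menu of randomized general contracts is at least the optimal menu of randomized linear contracts, simply because randomized linear menus form a subclass of randomized general menus. Putting these together, the $\Omega(T)$ gap between randomized linear menus and single linear contracts from \Cref{thm:intro_rlc} lifts directly to an $\Omega(T)$ gap between randomized menus of general contracts and deterministic menus of general contracts.

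The main (and really only) obstacle is ensuring that the hard instance underlying \Cref{thm:intro_rlc} can be taken to satisfy the two-outcome property without destroying the $\Omega(T)$ gap. There are two routes I would consider. The first, and preferable, is to inspect the construction used to prove \Cref{thm:intro_rlc} and verify that it already uses only two outcome values, or is easily normalized to do so by rescaling rewards (since the gap argument is invariant under positive scaling of $R$). The second fallback is a black-box reduction: given any instance exhibiting the linear-menu gap, replace each original outcome $j$ with reward $r_j$ by a two-outcome "lottery gadget" in which an action that previously induced outcome $j$ now induces the high-reward outcome with probability $r_j/r_{\max}$ and the zero-reward outcome otherwise. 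Such a gadget preserves expected rewards under every linear contract, and a short argument shows it preserves both the optimal randomized-linear-menu profit and the optimal single-linear-contract profit, while forcing the two-outcome structure needed above. Either route yields an instance to which the three-step chain of the previous paragraph applies, completing the proof.
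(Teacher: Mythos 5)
Your overall strategy matches the paper's: both exploit that the instance from Theorem~\ref{thm:rlc_gap} has only one non-null outcome, so that general contracts collapse to (affine, hence essentially linear) contracts, and then chain the trivial containment $\ORndMenu \ge \ORndMenuLinear$ with an equality $\ODetMenu = \OLinear$ for two-outcome instances. Your worry about whether the hard instance can be taken with two outcomes is unnecessary --- the paper observes it already is, so neither rescaling nor a lottery gadget is needed.

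However, there is a genuine gap in your first chaining step. You assert that the optimal deterministic menu of general contracts equals the optimal deterministic menu of linear contracts ``because each menu item can be replaced by its linear equivalent and the agent's choice and incentives are unchanged.'' This is false as stated. In a two-outcome instance a general contract $(x_0, x_1)$ is an \emph{affine} contract with slope $\alpha = (x_1 - x_0)/r_1$ and intercept $\beta = x_0$; the agent's utility from each action shifts by the constant $\beta$, so indeed the agent's optimal \emph{action} is unchanged by zeroing out $\beta$. But in a menu, different items have different intercepts $\beta^{(t)} = x_0^{(t)}$, so zeroing them out does change which \emph{menu item} each type selects: an item with a high $\beta$ is uniformly more attractive to all types, and removing that base payment can break the incentive-compatibility constraints. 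Thus you cannot simply replace each item by its linear equivalent. The conclusion you want --- $\ODetMenu = \OLinear$ on two-outcome instances --- is true, but it is a non-trivial structural fact about menus; the paper establishes it by citing Lemma~7 of \citet{guruganesh2021contracts} rather than by the local replacement argument you propose. You should either cite that lemma or give an actual proof (e.g., first normalize so $\min_t \beta^{(t)} = 0$, and then argue carefully about the remaining intercepts), rather than claiming the agent's choice is unchanged.
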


It is natural to ask whether it is possible to strengthen Theorem \ref{thm:intro_rlc} to understand how this gap scales individually with $n$ and $T$. That is, how large is the gap between these two types of mechanisms 
in the regime where there are a constant number of types but the number of actions for each type grows large (analogous to Theorem \ref{thm:informal_omega_n_gap}) or in the regime where there is a fixed number of actions per type, but the number of different types of agents grows large (analogous to Theorem \ref{thm:informal_omega_log_t_gap})? We do not resolve this question in this paper -- however, we make the following related observation: if every type has exactly one non-null action (e.g., every type decides between working and not working), the ratio in the revenue achieved between menus of randomized linear contracts and a single linear contract is at most a (universal) constant. 

\begin{theorem}[Restatement of~\Cref{thm:Omega_1_lower}]
In any principal-agent problem where each type of agent has one null action (guaranteeing the principal zero utility) and one non-null action, the profit of the best single linear contract is at least a constant fraction of the profit of the best menu of randomized linear contracts.
\end{theorem}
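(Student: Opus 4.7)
My plan is to parameterize each type $t$ by its threshold $\alpha_t = c_t/R_t$ (the smallest linear share making type $t$ willing to exert effort) and its reward $R_t$, sorting so that $\alpha_1 \leq \alpha_2 \leq \cdots \leq \alpha_T$. Given any menu $\{\mathcal{D}_t\}$, I would define $q_t = \Pr_{\alpha \sim \mathcal{D}_t}[\alpha \geq \alpha_t]$ (the probability type $t$ works under its own contract) and $\mu_t \geq 0$ (type $t$'s expected utility). Since the welfare generated each time type $t$ works is $R_t - c_t = w_t$ (principal plus agent), a short manipulation shows that the principal's profit from type $t$ is exactly $w_t q_t - \mu_t$, so $\Pi^* = \sum_t p_t(w_t q_t - \mu_t)$.

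Next I would extract rent lower bounds from IC. For any $s > t$, type $t$'s utility from $\mathcal{D}_s$ is at least $R_t(\alpha_s - \alpha_t)\,q_s$: restrict to the mass of $\mathcal{D}_s$ lying above $\alpha_s$, on which type $t$ earns $R_t(\alpha - \alpha_t) \geq R_t(\alpha_s - \alpha_t)$. Hence $\mu_t \geq R_t \max_{s > t}(\alpha_s - \alpha_t)\, q_s$, yielding the LP-style upper bound
\[
\Pi^* \;\leq\; U(\mathbf{q}) \;:=\; \sum_t p_t\Bigl[w_t q_t - R_t \max_{s > t}(\alpha_s - \alpha_t)\, q_s\Bigr], \qquad \mathbf{q} \in [0,1]^T.
\]
The heart of the argument is a direct evaluation of $U$ at the hypercube vertices $\mathbf{q} \in \{0,1\}^T$. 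Letting $S = \{t : q_t = 1\}$ and $k = \max S$, each $t \in S$ with $t < k$ contributes $p_t R_t(1-\alpha_t) - p_t R_t(\alpha_k - \alpha_t) = p_t R_t(1-\alpha_k)$; the $t=k$ contribution is $p_k R_k(1-\alpha_k)$; and each $t \notin S$ with $t < k$ only subtracts from $U$. One therefore obtains $U(\mathbf{q}) \leq (1-\alpha_k)\sum_{t \leq k} p_t R_t$, which is exactly the profit of the single linear contract at $\alpha_k$. Hence $U \leq \Pi_L^*$ on every vertex.

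The main obstacle is extending this vertex bound to fractional $\mathbf{q}$, since $U$ is concave (a linear term minus convex max-functions) and its maximum may lie strictly inside $[0,1]^T$. When $\mathbf{q}$ is monotone ($q_1 \geq \cdots \geq q_T$), the telescoping form of IC, $\mu_t \geq R_t \sum_{s > t}(\alpha_s - \alpha_{s-1})\, q_s$, lets one rewrite the upper bound as $(1-\alpha_T)\sum_t p_t R_t q_t + \sum_k (q_k - q_{k+1})\,B_k(\alpha_T - \alpha_k)$ with $B_k = \sum_{t \leq k} p_t R_t$; since $B_k(\alpha_T - \alpha_k) \leq L_k := (1-\alpha_k)B_k \leq \Pi_L^*$ and $\sum_k(q_k - q_{k+1}) \leq 1$, this cleanly gives $\Pi^* \leq 2\Pi_L^*$. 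The non-monotone case is the crux, and I expect it to require either (a) a swap/monotonization argument showing WLOG that the worst $\mathbf{q}$ is non-increasing in $t$—exploiting that inflating $q_s$ for some $s > t$ only tightens type $t$'s IC and thus grows $\mu_t$—or (b) LP duality on the relaxation defined by $U$, exhibiting a feasible dual of value $O(\Pi_L^*)$ that routes each welfare coefficient $p_t w_t$ along an IC flow terminating at the threshold where the best single linear contract is attained.
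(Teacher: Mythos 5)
Your LP-relaxation route is genuinely different from the paper's proof, which works by bucketizing types geometrically in $1-\alpha_t$, discarding every other bucket (losing a factor of $2$), picking one representative type per surviving bucket, normalizing profit and rent by each type's welfare, and then telescoping the normalized \emph{utilities} $u_{i-1}-u_i\ge(1-\tfrac1w)\rho_i$ to conclude $\sum_i\rho_i=O(1)$. Your profit decomposition $\Pi^* = \sum_t p_t(w_t q_t - \mu_t)$, the IC rent bound $\mu_t\ge R_t\max_{s>t}(\alpha_s-\alpha_t)q_s$, and the vertex evaluation $U(\mathbf q)\le(1-\alpha_k)\sum_{t\le k}p_tR_t$ are all correct, and the telescoping rent bound $\mu_t\ge R_t\sum_{s>t}(\alpha_s-\alpha_{s-1})q_s$ does in fact hold once you chain the IC inequalities $\mu_t\ge R_t(\alpha_{t+1}-\alpha_t)q_{t+1}+\tfrac{R_t}{R_{t+1}}\mu_{t+1}$ and observe that the ratios $R_t/R_{t+1}$ cancel telescopically; combined with your Abel rewriting this gives a clean $\Pi^*\le 2\Pi_L^*$ when the relaxation's optimum is monotone. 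If completed, your route would yield a noticeably sharper constant than the paper's (which gives roughly $40$).

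The gap, as you yourself flag, is the fractional non-monotone case, and it is a genuine one rather than a technicality. The function $U(\mathbf q)$ is concave in $\mathbf q$ (linear minus a sum of maxima), so its maximum over $[0,1]^T$ need not occur at a vertex, and for the Abel-rewritten form the residual term $\sum_k B_k(\alpha_T-\alpha_k)(q_k-q_{k+1})$ can be dominated by $\Pi_L^*\cdot\sum_{k:q_k>q_{k+1}}(q_k-q_{k+1})$, whose sum can grow like $\Theta(T)$ for oscillating $\mathbf q$ --- so the telescoping bound alone does not close the fractional case, and you would need the pointwise $\max_s$ bound there. Neither of your proposed fixes is carried out: the monotonization claim ``inflating $q_s$ for $s>t$ tightens type $t$'s IC'' is a statement about the original mechanism, not about $U$ (where raising $q_s$ also raises the $w_sq_s$ term and changes which $s$ achieves each max), so it is not clear it gives WLOG monotonicity; and the LP-duality sketch requires actually exhibiting a feasible dual of value $O(\Pi_L^*)$, which you have not done. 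The paper's bucketing sidesteps this entirely: by choosing one representative per geometric bucket and telescoping the \emph{utilities} rather than relaxing the probabilities, it never has to control the optimum of a concave program over the hypercube. Until you close the fractional case by one of your two routes (or show the true $\mathbf q^*$ from an IC menu is WLOG monotone), the argument is incomplete.
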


\subsection{Related work}

The principal-agent problem has a long history in economics, stemming back to the foundational papers of \citet{Holmstrom79} and \citet{GrossmanH83}. Our paper fits into a recent line of work applying techniques from algorithmic mechanism design to contract theory \citep{alon2021contracts, castiglioni2021bayesian, castiglioni2022designing, dutting2019simple, DuttingRT20, DuttingEFK21, guruganesh2021contracts,cohen2022learning}. 

Of these, the most closely relevant are a handful of recent works that consider the intersection of contract theory with private information (``contracts with types''). We build most directly off of \citet{guruganesh2021contracts}, which introduced the problem of determining the gap in power between single contracts and menus of contracts, and \citet{castiglioni2022designing}, which introduced the concept of menus of randomized contracts and demonstrated they can be computed efficiently. Other works in this space include \citet{alon2021contracts}, which studies a one-dimensional subclass of the typed principal-agent problem where the type simply scales the cost, and \citet{castiglioni2021bayesian}, which studies the computational hardness and power of optimal single contracts in the typed principal-agent setting. Also of note is the recent paper of \citet{gan2022optimal}, which provides a revelation principle for general principal-agent problems which implies (among other things) that menus of randomized contracts are the most general possible mechanisms for our setting.

We briefly mention that the theme of more powerful mechanisms (menus / randomized menus in particular) outperforming simpler mechanisms is a common theme throughout many other parts of algorithmic game theory (for example, the classic work of \citet{briest2010pricing}, which proves that randomization is essential to get approximately revenue-optimal mechanisms for selling multiple goods). These settings generally lack the property of moral hazard (i.e., hidden actions) inherent to the contract design setting, and hence these existing results and techniques do not seem to transfer over to the principal-agent problem. 







\section{Preliminaries}\label{sec:prelims}



We introduce the typed principal-agent problem following the notation of \citet{guruganesh2021contracts}. In this problem, there are $T$ different types of agent (drawn from a known prior distribution), each of which has $n$ actions which result in one of $m$ possible outcomes. When the agent of type $t \in [T]$ takes action $i \in [n]$ they incur a cost $\Cost{i} \geq 0$ for the agent (regardless of their type) and cause outcome $j \in [m]$ to occur with probability $\TypedForecast{t}{i}{j}$. When outcome $j$ occurs, the principal receives a reward $\Reward{j}$. Notably, the principal can only observe the eventual outcome $j$ and cannot observe the type $t$ of the agent or the action $i$ taken by the agent (\textit{the hidden-action model with private information}). We write $\CostVector$, $\ForecastTensor$, and $\RewardVector$ to denote the collections of costs, outcome probabilities, and rewards respectively. Together, the tuple $\PrincipalAgentProblem$ completely specifies a typed principal-agent problem instance.


We further always make the assumption that there exists a null action with zero cost ($\Cost{0} = 0$) which deterministically results in a null outcome with zero reward for the principal ($\Reward{0} = 0$; other actions may also lead to the null outcome with some probability). In this way we model the possibility for an agent to opt out of participating entirely. 

We consider three classes of contracting mechanisms for the principal (and their linear variants, which we define later): (i) single contracts, (ii) menus of deterministic contracts, and (iii) menus of randomized contracts. We discuss these in order.

\paragraph{Single contracts}
A contract is simply an $m$-dimensional vector $\ContractVector$ with non-negative entries. For each $j \in [m]$, $\Contract{j}$ specifies the amount the principal promises to transfer to the agent in the case that outcome $j$ occurs. In response to a contract $\ContractVector$, the agent of type $t$ chooses the action $\TypedOptimalAction{t}(\ContractVector)$ that optimizes their utility, namely

\begin{align*}
  \TypedOptimalAction{t}(\ContractVector) \in \argmax_i \left( \sum_{j=1}^m \TypedForecast{t}{i}{j} \Contract{j} \right) - \Cost{i} \text{.}
\end{align*}

We call the principal's net expected utility from offering contract $\ContractVector$ their \emph{profit} from offering contract $\ContractVector$. The principal's profit from an agent of type $t$ is given by

\begin{align*}
  \TypedProfit{t}(\CostVector, \ForecastTensor, \RewardVector, \ContractVector)
    \triangleq \sum_{j=1}^m \TypedForecast{t}{\TypedOptimalAction{t}(\ContractVector)}{j} (\reward{j} - \contract{j})
\end{align*}
and their overall expected profit is given by 
\begin{align*}
  \Profit(\CostVector, \ForecastTensor, \RewardVector, \ContractVector)
    \triangleq \E_t \left[ \TypedProfit{t}(\CostVector, \ForecastTensor, \RewardVector, \ContractVector) \right] \text{.}
\end{align*}

We are interested in the largest profit obtainable for the principal via a single contract. We write $\OSingle \PrincipalAgentProblem$ to denote this maximal profit for the principal agent problem $\PrincipalAgentProblem$.

\paragraph{Menus of deterministic contracts}

In the presence of types, the principal can sometimes obtain additional profit by offering the agent a menu of individual contracts (from which the agent picks the contract that maximizes their expected utility). By the revelation principle, it suffices to specify a single contract for each type, and we can write a general menu of deterministic contracts in the form $\ContractMatrix = \left(\TypedContractVector{1}, \ldots, \TypedContractVector{T}\right)$, where an agent who reports their type to be $t$ receives the contract $\TypedContractVector{t}$. Such a menu is ``incentive-compatible'' (IC) if no type $t$ has an incentive to misreport their type as a different type $t' \ne t$:
\begin{align}
  \label{ineq:ic}
  \forall t, t' \in [T] \quad \max_i \left( \sum_{j=1}^m \TypedForecast{t}{i}{j} \TypedContract{t}{j} \right) - c_i
                          \ge \max_i \left( \sum_{j=1}^m \TypedForecast{t}{i}{j} \TypedContract{t'}{j} \right) - c_i
  \text{.}
\end{align}

Unless otherwise stated, we restrict our attention to incentive-compatible menus of deterministic contracts. As with single contracts, we write $\Profit(\CostVector, \ForecastTensor, \RewardVector, \ContractMatrix)$ to denote the expected profit for the principal from offering the menu of deterministic contracts $\ContractMatrix$, and we write $\ODetMenu{}\PrincipalAgentProblem$ to be the profit of the best menu of deterministic contracts. 

\paragraph{Menus of randomized contracts}

Deterministic menus can be generalized further by offering distributions over contracts instead of individual contracts. This class of mechanisms was proposed by \citet{castiglioni2022designing}, and works as follows. The principal offers the agent a menu of distributions over contracts (from which the agent picks the distribution that maximizes their expected utility, with randomness in both which contract gets chosen and which outcome occurs). By the revelation principle, it suffices to specify a distribution for each type, and we can write a general menu of randomized contracts in the form $\ContractDistributions = \left(\mathcal{D}^{(1)}, \mathcal{D}^{(2)}, \ldots, \mathcal{D}^{(T)}\right)$, where an agent who reports their type to be $t$ receives the distribution $\mathcal{D}^{(t)}$. The principal then draws a contract from $\mathcal{D}^{(t)}$ and relays it to the agent, who takes an action in response. Such a menu is ``incentive-compatible'' (IC) if no type $t$ has an incentive to misreport their type as a different type $t' \ne t$:
\begin{align*}
  \forall t, t' \in [T] \quad
    \E_{x \sim \mathcal{D}^{(t)}} \left[ \max_i \left( \sum_{j=1}^m \TypedForecast{t}{i}{j} \Contract{j} \right) - c_i \right]
    \ge \E_{x \sim \mathcal{D}^{(t')}} \left[ \max_i \left( \sum_{j=1}^m \TypedForecast{t}{i}{j} \Contract{j} \right) - c_i \right]
  \text{.}
\end{align*}

We restrict our attention to incentive-compatible menus of randomized contracts. As before, we write $\Profit(\CostVector, \ForecastTensor, \RewardVector, \ContractDistributions)$ to denote the expected profit for the principal from offering the menu of randomized contracts $\ContractDistributions$, and we write $\ORndMenu{}\PrincipalAgentProblem$ to be the profit of the best menu of randomized contracts.

\paragraph{Linear contracts}
We now discuss how to restrict the previous three classes to linear contracts. Contracts with the property that $\Contract{j} = \alpha \cdot \Reward{j}$ for all $j \in [m]$ and some $\alpha \geq 0$ are called \emph{linear contracts}.

The most straightforward class to restrict is that of single contracts, where all we do is require that the chosen contract $\ContractVector$ is linear. We write $\OLinear{}\PrincipalAgentProblem$ to be the profit of the best (single) linear contract.

The next class to restrict is menus of deterministic contracts, where we require that each individual contract offered by the menu is linear. Specifically, for $\ContractMatrix = \left( \TypedContractVector{1}, \TypedContractVector{2}, \ldots, \TypedContractVector{T}\right)$, we would like each $\TypedContractVector{t}$ to be linear. Interestingly, this turns out to be no more powerful than (single) linear contracts, because agents might as well pick the offered linear contract with the highest $\alpha$. As a result, $\OLinear{}\PrincipalAgentProblem$ also designates the profit of the best menu of deterministic linear contracts.

The final class to restrict is menus of randomized contracts, where we require that each distribution offered by the menu only contains linear contracts in its support. Specifically, for $\Gamma = \left( \mathcal{\gamma}^{(1)}, \mathcal{\gamma}^{(2)}, \ldots, \mathcal{\gamma}^{(T)} \right)$ we would like the support of $\mathcal{\gamma}^{(1)}$ to be linear contracts. We will show this is more powerful than (single) linear contracts. We write $\ORndMenuLinear{}\PrincipalAgentProblem$ to be the profit of the best menu of randomized linear contracts. These menus will be our main object of study in Section \ref{sec:menu_rlc}; as we will see, unlike menus of deterministic linear contracts, these are occasionally more powerful than single linear contracts. 

\begin{figure}
\centering
\begin{tikzpicture}[%
  auto,
  scale=1.0,
  ladderrung/.style={
    rectangle,
    draw=black,
    inner sep=4pt,
    align=center,
  },
  ]
  \node[ladderrung] (rndmenu) at (0, 4) {Menus of\\Randomized Contracts};
  \node[ladderrung] (detmenu) at (-2.5, 2.5) {Menus of\\Deterministic Contracts};
  \node[ladderrung] (rndmenulinear) at (2.5, 1.75) {Menus of Randomized\\Linear Contracts};
  \node[ladderrung] (single) at (-2.5, 1) {Single Contracts};
  \node[ladderrung] (linear) at (0, 0) {Linear Contracts};

  \draw (rndmenu) -- (detmenu) -- (single) -- (linear);
  \draw (rndmenu) -- (rndmenulinear) -- (linear);
\end{tikzpicture}
\caption{Hierarchy of contracting mechanisms for typed contract problems. Edges link more general, powerful classes (higher) to more restricted, weaker classes (lower). The left (menus of deterministic contracts and single contracts) and right (menus of randomized linear contracts) sides of the diagram are incomparable; for some instances a left class is more powerful and for other instances a right class is more powerful.}
\label{fig:ladder}
\end{figure}
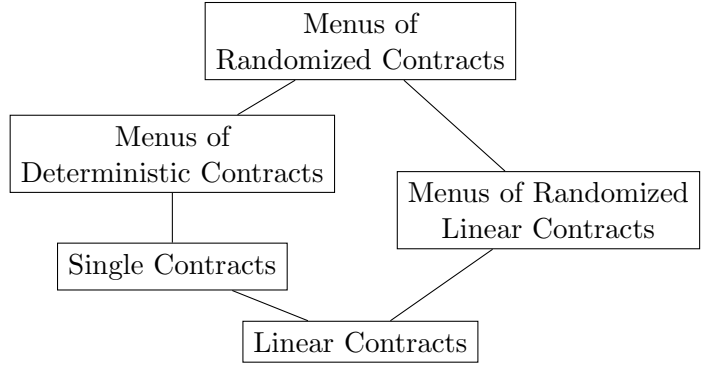
This completes our overview of five classes of contracting mechanisms for the principal. Our five classes of interest are depicted in Figure~\ref{fig:ladder}, which also illustrates which classes contain each other.

\subsection{Known upper bounds}

\citet{guruganesh2021contracts} proved an upper bound of $O(n \log T)$ between a single linear contract and the maximum achievable welfare in a typed principal agent problem. Every class mentioned above is at least as powerful as single linear contracts and at most as powerful as extracting all welfare, so this $O(n \log T)$ bound applies to any pair of classes as well. However, this upper bound hinges on two key assumptions: (i) all types have the same set of costs and (ii) the distribution over types is uniform. Removing either of these assumptions breaks the proof, leaving only a $O(nT)$ bound. In this paper, we will work in the more general setting where the costs may be vary with types and the distribution is nonuniform. However, all proofs can be adapted to work in the more restricted setting above.

\section{Gaps between menus and single contracts}
In this section, we show that deterministic menus of contracts can be much more powerful than a single contract:
\begin{itemize}
    \item When the number of types $T$ is a constant, there is a principal-agent problem instance for which optimal deterministic menu of contracts can extract $\Omega(n)$ times as much profit as the optimal single contract.
    \item When the number of actions $n$ is a constant, there is an instance where optimal deterministic menu of contracts can extract $\Omega(\log T)$ times as much profit as the optimal single contract.
\end{itemize}
\subsection{\texorpdfstring{$\Omega(n)$}{O(n)}-gap with three types}
\begin{theorem}\label{thm:Omega_n_gap}
    There is a principal-agent problem $\PrincipalAgentProblem$ with $T=3$ agent types and $n$ actions, for which optimal deterministic menu can extract $\Omega(n)$ times as much profit as optimal single contract can: 
    \[\ODetMenu \PrincipalAgentProblem\ge \Omega(n) \cdot \OSingle \PrincipalAgentProblem.\]
\end{theorem}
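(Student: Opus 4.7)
The plan is to exhibit a specific $3$-type, $n$-action instance and then analyze its menu and single-contract optima directly. I would start with a ``main'' type (type 1) whose action set creates a welfare versus single-action tradeoff of the kind used in the Dütting et al.\ geometric construction: type 1 has $n$ actions whose costs and outcome distributions are arranged so that welfare is $\Theta(1)$, but the only way to achieve anything close to welfare is to use a highly non-uniform payment vector that targets a specific ``welfare-maximizing'' outcome. In isolation this would only give a single type vs.\ single contract instance (which has no gap in general); the creative step is to surround type 1 with two auxiliary ``gadget'' types (types 2 and 3) whose role is to directly punish any non-uniform payment vector used by a single contract.

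For the menu lower bound, I would construct an explicit menu in which each type receives a contract tailored to its own action structure: type 1 receives the non-uniform payment vector needed to extract $\Theta(1)$ profit from it, while types 2 and 3 each receive a contract that pays only on outcomes they alone can produce (or that simply induces the null action). The non-trivial step here is to verify incentive compatibility across all six ordered pairs of types --- in particular checking that type 2 and type 3 have no incentive to deviate to type 1's payment-heavy contract, and that type 1 has no incentive to deviate to either auxiliary contract. Once IC is verified, I would choose the type prior so that type 1's contribution alone to the expected profit is $\Theta(1)$.

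For the single-contract upper bound, I would argue directly (not by reducing to a welfare-vs-linear gap, as the introduction emphasizes) that every contract $\ContractVector$ falls into one of a small number of cases indexed by the joint best-response $(i^*_1, i^*_2, i^*_3)$ of the three types. For each joint response I would translate the incentive inequalities $\sum_j \TypedForecast{t}{i^*_t}{j}\Contract{j} - \Cost{i^*_t} \ge \sum_j \TypedForecast{t}{i}{j}\Contract{j} - \Cost{i}$ into structural constraints on $\ContractVector$, and then show that under these constraints either types 2 or 3 contribute negative or negligible profit (absorbing type 1's gain), or type 1 itself must be taking an action that yields only $O(1/n)$ profit. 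Averaging over types with the chosen prior then gives $\OSingle\PrincipalAgentProblem \le O(1/n)$, which combined with the menu lower bound yields the desired $\Omega(n)$ ratio.

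The main obstacle is the design of types 2 and 3 together with the type prior. They need to be rich enough that for every candidate single contract, the joint-response case analysis forces the contract into the ``$O(1/n)$ profit'' regime, yet simple enough that the menu can accommodate them under IC without forcing type 1's tailored contract to be nearly as flat as the single contract would have to be. In particular, the gadget types must have outcome distributions that react sharply to any imbalance in $\ContractVector$ on the outcomes relevant to type 1, but must be ``inert'' on a type-1-tailored menu contract --- for instance by producing outcomes disjoint from those type 1 produces. Closing the case analysis so that no joint response leaves the single contract with $\Omega(1)$ profit is the creative heart of the argument.
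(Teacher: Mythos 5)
Your plan correctly identifies several of the structural ingredients the paper actually uses: a D\"utting-style geometric action ladder for a ``main'' type whose welfare can only be extracted with a carefully shaped payment vector, auxiliary types that exist to punish a single contract for using that shape, an explicit tailored menu, and a direct single-contract argument rather than a reduction to welfare-versus-linear. However, one of your stated design requirements is internally inconsistent, and the specific mechanism that actually closes the single-contract bound is missing. You ask that the gadget types ``react sharply to any imbalance in $\ContractVector$ on the outcomes relevant to type 1'' while also being ``inert on a type-1-tailored menu contract --- for instance by producing outcomes disjoint from those type 1 produces.'' These two requirements conflict: a gadget type whose outcomes are disjoint from type 1's is equally inert on \emph{any} contract that only pays on type-1 outcomes, so it cannot punish a single contract either. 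You flag resolving this as ``the creative heart of the argument,'' and indeed it is precisely what your sketch does not yet supply.

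The paper's resolution is that type 1's non-null actions land on two outcomes, a reward-$1$ outcome and a \emph{zero-reward} outcome, and the geometric structure forces any contract extracting near-welfare from type 1 to pay mainly through the zero-reward outcome (paying through the positive-reward outcome overpays the cheaper intermediate rungs). The key design choice is that type 2's non-null actions all hit that same zero-reward outcome with essentially the same probability regardless of which action type 2 plays, so any single contract that pays on it hands type 2 a lump transfer for nothing, wiping out the gain from type 1. The menu escapes because type 2 alone can reach a second positive-reward outcome, disjoint from type 1's support; the menu gives type 2 a contract paying on that outcome, calibrated to equal type 2's utility from the other menu item, while costing the principal nothing. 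The third type plays a purely technical role you also don't anticipate: it has overwhelming prior mass, lands deterministically on the slack ``probabilities sum to one'' outcome, and exists only to pin the payment on that slack outcome to be negligible. Finally, the paper's single-contract analysis is organized by the magnitude of the payment on the positive-reward outcome (rather than by the joint best-response triple you propose), using local comparisons between consecutive actions to derive lower bounds on the zero-reward-outcome payment; your joint-best-response case split is not obviously wrong, but it is not what the paper does and would likely be harder to close without first isolating the shared zero-reward outcome as the pivot of the whole construction.
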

We provide here the $\Omega(n)$-gap instance and explain the high-level ideas behind its construction and analysis. We defer the full proof of Theorem~\ref{thm:Omega_n_gap} (which is rather technical) to \Cref{subsec:apx-omega-n-gap}.

\paragraph*{Construction of $\Omega(n)$-gap instance}

We begin by providing a formal description of the instance, which is also recorded in Table \ref{tab:omega-n}. There are three agent types. There are four outcomes, and their rewards are $r_1=1$, $r_2=0$, $r_3=1$ and $r_4=0$. We will assume we have $2n+1$ non-null actions (assuming $n\ge 12$ without loss of generality), with costs given by $c_i=\frac{n^i-i}{n^{n+1}}$ for $i\in[n]$, $c_{n+i}=\frac{4(n^i-i)}{n^{n+1}}$ for $i\in[n]$, and $c_{2n+1}=0$.

We now construct the forecast tensor $\ForecastTensor$. First, we let $F^{(1)}_{2n+1,j}=0$, $F^{(1)}_{i+n,j}=0$ and $F^{(2)}_{i,j}=0$ for all $i\in[n]$ and $j\in\{1,2,3\}$, and we let $F^{(3)}_{i,j}=0$ for all $i\in[2n+1]$ and $j\in\{1,2,3\}$. That is, outcome $4$ always occurs when the type $1$ agent plays the last $n+1$ actions, when the type $2$ agent plays the first $n$ actions, and when the type $3$ agent plays any action. 

For the type $1$ agent, we let $F^{(1)}_{i,1}=n^{i-1-n}$ for all $i\in[n]$, and we let $F^{(1)}_{i,2}=\frac{n^i-i}{n^n-n}\cdot F^{(1)}_{n,2}$ for all $i\in[n]$, and moreover, we let $F^{(1)}_{i,3}=0$ for all $i\in[n]$. For the type $2$ agent, we let $F^{(2)}_{n+i,3}=4n^{i-1-n}$ for all $i\in[n]$, let $F^{(2)}_{n+i,2}=\frac{4}{n^{n-1}-1}\cdot F^{(1)}_{n,2}$ for all $i\in[n]$, and let $F^{(2)}_{n+i,1}=0$ for all $i\in[n]$. Furthermore, we let $F^{(2)}_{2n+1,1}=F^{(2)}_{2n+1,3}=0$, and $F^{(2)}_{2n+1,2}=\frac{4}{n^{n-1}-1}\cdot F^{(1)}_{n,2}$. The specific value of $F^{(1)}_{n,2}$ will not play any role in the analysis, and we can for example let $F^{(1)}_{n,2}=\frac{1}{3}$ for completeness (in Table \ref{tab:omega-n} we denote this value by $\gamma$). 

When $n\ge12$, it is easy to check that $F^{(t)}_{i,j}\le \frac{1}{3}$ for all $i\in[2n+1]$, $j\in\{1,2,3\}$ and $t\in\{1,2,3\}$. Therefore, we can let $F^{(t)}_{i,4}=1-\sum_{j\in\{1,2,3\}}F^{(t)}_{i,j}$ for all $i\in[2n+1]$ and $t\in\{1,2,3,4\}$ such that the forecast tensor $\ForecastTensor$ is well-defined. Finally, we specify the probability of agent types: $\Pr[\textrm{agent type}=1]=\Pr[\textrm{agent type}=2]=\frac{1}{2n^{2n}}$ and $\Pr[\textrm{agent type}=3]=1-\frac{1}{n^{2n}}$.


\begin{table}
\centering
\begin{tabular}{cccccc}
  \toprule
  \multirow{2}{*}{Type $1$} & \textbf{Outcome} $1$ & \textbf{Outcome} $2$ & \textbf{Outcome} $3$ & \textbf{Outcome} $4$ \\
                                    & Reward $1$           &  Reward $0$          & Reward $1$           & Reward $0$             \\ \midrule
  \textbf{Action} $i \in [n]$ & \multirow{2}{*}{$\frac{1}{n^{n-(i-1)}}$} & \multirow{2}{*}{$\frac{n^i - i}{n^n - n}\gamma$} & \multirow{2}{*}{$0$} & \multirow{2}{*}{$1 - \sum_{j=1}^{3}F^{(1)}_{i,j}$} \\ Cost $(n^i - i)/n^{n+1}$ \\
  \textbf{Action} $n + i, i\in [n]$ & \multirow{2}{*}{0} & \multirow{2}{*}{0} & \multirow{2}{*}{0} & \multirow{2}{*}{$1$} \\ Cost $4(n^{i} - i)/n^{n+1}$ \\
  \textbf{Action} $2n+1$ & \multirow{2}{*}{$0$} & \multirow{2}{*}{$0$} & \multirow{2}{*}{$0$} & \multirow{2}{*}{$1$} \\ Cost $0$ \\ \midrule
    \multirow{2}{*}{Type $2$} & \textbf{Outcome} $1$ & \textbf{Outcome} $2$ & \textbf{Outcome} $3$ & \textbf{Outcome} $4$ \\
                                    & Reward $1$           &  Reward $0$          & Reward $1$           & Reward $0$             \\ \midrule
  \textbf{Action} $i \in [n]$ & \multirow{2}{*}{$0$} & \multirow{2}{*}{$0$} & \multirow{2}{*}{$0$} & \multirow{2}{*}{$1$} \\ Cost $(n^i - i)/n^{n+1}$ \\
  \textbf{Action} $n + i, i\in [n]$ & \multirow{2}{*}{$0$} & \multirow{2}{*}{$\frac{4}{n^{n-1} - 1}\gamma$} & \multirow{2}{*}{$\frac{4}{n^{n-(i-1)}}$} & \multirow{2}{*}{$1 - \sum_{j=1}^{3}F^{(2)}_{n+i,j}$} \\ Cost $4(n^{i} - i)/n^{n+1}$ \\
  \textbf{Action} $2n+1$ & \multirow{2}{*}{$0$} & \multirow{2}{*}{$\frac{4}{n^{n-1} - 1}\gamma$} & \multirow{2}{*}{$0$} & \multirow{2}{*}{$1 - \sum_{j=1}^{3}F^{(2)}_{2n+1,j}$} \\ Cost $0$ \\ \midrule
  \multirow{2}{*}{Type $3$} & \textbf{Outcome} $1$ & \textbf{Outcome} $2$ & \textbf{Outcome} $3$ & \textbf{Outcome} $4$ \\
                                    & Reward $1$           &  Reward $0$          & Reward $1$           & Reward $0$             \\ \midrule
  \textbf{Action} $i \in [2n+1]$ & \multirow{2}{*}{$0$} & \multirow{2}{*}{$0$} & \multirow{2}{*}{$0$} & \multirow{2}{*}{$1$} \\ Cost $c_i$ \\
  \bottomrule
\end{tabular}
\caption{The $\Omega(n)$ gap instance of Theorem \ref{thm:Omega_n_gap}. Here we can choose $\gamma$ to be any (sufficiently small) positive constant; setting $\gamma = 1/3$ ensures that this forms a valid typed principal-agent problem for all $n \geq 12$. Agent types occur with probabilities $1/(2n)^{2n}$, $1/(2n)^{2n}$, and $1 - 2/(2n)^{2n}$ respectively. Outcome 4 can be thought of as a null action, and type 3 exists solely to incentivize the principal to transfer no reward on Outcome 4 (Lemma \ref{lemma:x4_upper_bound_Omega_n_gap}). The best single contract achieves a profit of $O(n^{-(n+1)})$, but the optimal menu (which contains two contracts, one transferring only on outcome 2, and one transferring only on outcome 3) achieves a profit of $\Omega(n^{-n})$.}
\label{tab:omega-n}
\end{table}
\paragraph{High-level idea} First of all, we include the zero-reward outcome $4$ merely to satisfy the technical constraint that the probabilities over outcomes sum up to $1$ for any type and action. Ideally, we would like to ignore the existence of outcome $4$, but in principle, a contract can specify a strictly positive payment for outcome $4$. Therefore, we introduce type $3$, which has much higher probability than the other two types and always results in outcome $4$ to make sure the payment for outcome $4$ is negligible (Lemma~\ref{lemma:x4_upper_bound_Omega_n_gap}) and hence does not interfere with the analysis. This is the only role of type $3$, and thus we will ignore outcome $4$ and type $3$ for the remainder of this discussion.

Essentially, the above $\Omega(n)$-gap instance is constructed such that the principal can only get low profit from type $2$ agent regardless of the contract (Lemma~\ref{lemma:agent_2_profit_upper_bound_Omega_n_gap}), and moreover, the only way to get high profit from type $1$ agent is using a contract that associates a sufficiently low payment with outcome $1$ and a sufficiently high payment with outcome $2$ (Lemma~\ref{lemma:agent_1_low_profit_for_large_x1_Omega_n_gap} and Lemma~\ref{lemma:lower_bound_of_x2_Omega_n_gap}), i.e., the contract must incentivize type $1$ agent to play a high-profit action through a high payment for outcome $2$. However, notice that in the $\Omega(n)$-gap instance, the type $2$ agent also has significant probability for outcome $2$ for every non-null action. Thus, if the contract specifies a high payment for outcome $2$, it makes a high payment to the type $2$ agent for almost nothing in return and hence achieves negative profit from the type $2$ agent. Overall, this lets us show that the combined profit from these two types of agents is low (the soundness part of the proof of Theorem~\ref{thm:Omega_n_gap}). This shows that any single contract has low profit.

On the other hand, notice that in this instance, outcome 3 has a strictly positive reward, and only type $2$ agent can possibly cause outcome 3 to happen. Therefore, if we use a menu of two contracts, we can use the first contract to incentivize type $1$ agent through a high payment for outcome 2 and use the second contract to lure type $2$ agent away from the first contract through a high payment for outcome 3. We can show that the second contract can simultaneously (i) be as attractive as the first contract for type $2$ agent and (ii) make zero profit (which is non-negative) from type $2$ agent (the completeness part of the proof of Theorem~\ref{thm:Omega_n_gap}). Thus, overall the menu achieves high profit (in particular, at least $\Omega(n)$ times the profit of the best single contract).

\subsection{\texorpdfstring{$\Omega(\log T)$}{O(log T)}-gap with two non-null actions}
\begin{theorem}\label{thm:Omega_log_T_gap}
    There is a principal-agent problem $\PrincipalAgentProblem$ with $T$ agent types and $n=2$ non-null actions, for which optimal deterministic menu can extract $\Omega(\log T)$ times as much profit as optimal single contract can: 
    \[\ODetMenu \PrincipalAgentProblem\ge \Omega(\log T) \cdot \OSingle \PrincipalAgentProblem.\]
\end{theorem}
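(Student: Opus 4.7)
The plan is to construct a $T$-type instance in which each type $t$ has a natural ``operating scale'' $\tau_t = \Theta(2^{-t})$, arranged so that a menu can serve all $\Theta(\log T)$ scales in parallel while any single contract is confined to $O(1)$ scales. Concretely, I would use $T$ types, each with two non-null actions (a ``safe'' action and a ``productive'' action) and $O(1)$ outcomes. The parameters would be chosen so that for type $t$: the payment threshold at which the agent switches from the safe action to the productive action is $\tau_t$; at a contract whose effective payment rate equals $\tau_t$ the principal earns a fixed profit $v$ from type $t$; and at a contract whose effective payment rate substantially exceeds $\tau_t$ the principal's profit from type $t$ becomes zero or negative (because the increased payment rate outstrips the extra reward produced by the productive action). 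Finally, the prior would be chosen so that each of the $\Theta(\log T)$ distinct scales contributes an equal $\Theta(v/\log T)$ to the aggregate, making the total menu benchmark $\Theta(v)$.

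For the menu (completeness) direction, I would offer the menu whose $t$-th contract transfers $\tau_t$ on the rewarded outcome and $0$ elsewhere. Monotonicity of the thresholds $\tau_t$ together with a standard revelation-principle calculation gives incentive compatibility, and summing the contribution from each of the $\Theta(\log T)$ scales yields a total menu profit of $\Omega(v)$. For the single contract (soundness) direction, the key lemma is that any single contract $\bm{x}$ gets positive profit only from types whose operating scale lies in a constant-size window around a single value. Given a contract, classify each type by the action it induces: types playing null or the safe action contribute at most negligibly (by construction of the safe action), while types playing the productive action pay at an effective rate $\geq \Omega(\tau_t)$. If the contract's effective payment rate is far above $\tau_t$, the principal loses money on type $t$; if it is far below, type $t$ does not act productively. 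Aggregating over types then bounds the single-contract profit by $O(v/\log T)$, giving the claimed gap.

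The main obstacle I anticipate is ensuring that a \emph{multi-dimensional} single contract --- one that places nonzero payments on several outcomes simultaneously --- cannot exploit the multi-outcome structure to align with many operating scales at once. In the $\Omega(n)$ construction of Theorem~\ref{thm:Omega_n_gap} this is handled by introducing a high-probability ``blocking'' type whose presence forces certain outcome payments to be essentially zero (Lemma~\ref{lemma:x4_upper_bound_Omega_n_gap}); I expect to need an analogous gadget here so that the analysis of arbitrary single contracts reduces to essentially a one-dimensional ``effective payment rate'' and the narrow-window argument above carries through. The remaining work would be parameter tuning --- choosing costs, outcome probabilities, and prior weights so that the thresholds $\tau_t$, the constant-profit property at $\tau_t$, and the negative-profit property away from $\tau_t$ all hold simultaneously, and verifying IC of the proposed menu --- which I expect to be routine once the gadget structure forcing one-dimensionality is in place.
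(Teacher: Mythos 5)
Your high-level picture---types with geometrically spaced ``operating scales,'' a menu serving all $\Theta(\log T)$ scales in parallel, a single contract trapped in an $O(1)$-size window, and a blocking gadget to neutralize the zero-reward outcome---matches the paper's outline. But two structural choices in your plan are not details to be tuned later; they would cause the proof to fail.

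First, with $O(1)$ outcomes your proposed menu cannot be incentive compatible. You say contract $t$ transfers $\tau_t$ on the rewarded outcome and $0$ elsewhere; since $\tau_t$ is strictly monotone, every type simply picks the contract with the largest $\tau_t$, and the menu collapses to a single contract (this is exactly why deterministic menus of linear contracts are no more powerful than a single linear contract, as discussed in Section~\ref{sec:prelims}). The paper fixes this with $m=\Theta(T)$ outcomes: each type $t<T$ has a \emph{personal} zero-reward outcome that only type $t$ can trigger via the productive action, and the menu gives every type the same payment on the rewarded outcome plus a personalized top-up on that type's personal outcome. IC then holds because type $t$ gains nothing from another type's personal outcome. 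Second, your soundness plan is to introduce a gadget that ``reduces to essentially a one-dimensional effective payment rate.'' The paper's soundness argument is inherently multi-dimensional and goes in the opposite direction: the zero-cost action $2$ places probability on \emph{every other type's} personal outcome, so any single contract that pays on many personal outcomes to chase many scales instead incentivizes action $2$ (which yields zero reward). This is what caps a single contract at $O(2^N)$ productive types at any one value of $x_1$ (Lemma~\ref{lemma:bound_upper_part_agents_Omega_log_T_gap}), and it is not recoverable from a one-dimensional surrogate---the $T$-dimensional outcome space is load-bearing on both sides of the argument. The blocking type you anticipate does appear in the paper (Lemma~\ref{lemma:x_T_upper_bound_Omega_log_T_gap}), but only to kill the last outcome; it does not flatten the problem to one dimension.
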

We first give the construction of the $\Omega(\log T)$-gap instances and explain the high-level idea. Then, we formally implement our idea through multiple lemmas and prove Theorem~\ref{thm:Omega_log_T_gap}.

\begin{table}
\centering
\begin{tabular}{ccccc}
  \toprule
  Type $(k-1)2^N + \ell$        & \textbf{Outcome} $1$ & \textbf{Outcome} $j \in \{2, 3, \ldots, T-1\}$ & \textbf{Outcome} $T$ \\
  ($k \in [N], \ell \in [2^N]$) & Reward $1$           &  Reward $0$                                    & Reward $0$           \\ \midrule
  \textbf{Action} $1$ & \multirow{2}{*}{$\frac{1}{2T(1 - 2^{-k})}$} & \multirow{2}{*}{$\frac{1}{2T(1 - 2^{-k})} \text{ if } j = (k-1)2^N + \ell$} & \multirow{2}{*}{$1 - \frac{1}{T(1 - 2^{-k})}$} \\ Cost $\frac{1}{2T}$ \\
  \textbf{Action} $2$ & \multirow{2}{*}{0} & $\frac{1}{2T(1 - 2^{-k})2^{k+1}} \text{ if } j \ne (k-1)2^N + \ell$ & \multirow{2}{*}{$1 - \frac{T-3}{2T(1 - 2^{-k})2^{k+1}}$} \\ Cost $0$ & & 0 \text{ otherwise} \\ \midrule
  \multirow{2}{*}{Type $N \cdot 2^N + 1$} & \textbf{Outcome} $1$ & \textbf{Outcome} $j \in \{2, 3, \ldots, T-1\}$ & \textbf{Outcome} $T$ \\
                                          & Reward $1$           &  Reward $0$                                    & Reward $0$           \\ \midrule
  \textbf{Action} $1$ & \multirow{2}{*}{$0$} & \multirow{2}{*}{$0$} & \multirow{2}{*}{$1$} \\ Cost $\frac{1}{2T}$ \\
  \textbf{Action} $2$ & \multirow{2}{*}{$0$} & \multirow{2}{*}{$0$} & \multirow{2}{*}{$1$} \\ Cost $0$ \\
  \bottomrule
\end{tabular}
\caption{The $\Omega(\log T)$ gap instance of Theorem \ref{thm:Omega_log_T_gap}. Agent type $(k - 1)2^N + \ell$ occurs with probability $2^{k-1} / 16^N$ for $k \in [N]$ and $\ell \in [2^N]$, and the final agent type $N \cdot 2^N + 1$ occurs with the remaining probability, which is the vast majority. This final agent type serves to tax the final outcome so it cannot be used to a meaningful extent. A single contract has difficulty using Outcome 1 alone to incentivize action 1, because of the varied breakpoints, and if the single contract uses too many other outcomes instead to incentivize action 1, most types of agents would prefer playing action 2.
}
\label{tab:omega-log-t-table}
\end{table}
\paragraph{Construction of $\Omega(\log T)$-gap instances}
We begin by providing a formal description of the instance, which is also recorded in Table \ref{tab:omega-log-t-table}. For convenience we let $T=N\cdot 2^{N}+1$ for $N\ge3$ (and we will prove an $\Omega(N)$-gap). There are two non-null actions and $T$ outcomes. Action $1$ has cost $c_1=\frac{1}{2T}$, and action $2$ has cost $c_2=0$. Outcome $1$ has reward $r_1=1$, and any other outcome $j\in\{2,3,\dots,T+1\}$ has reward $r_j=0$.

Now we construct the forecast tensor $\ForecastTensor$. We start with defining $\ForecastTensor$ for outcomes $[T-1]$. For all $k\in[N]$ and $\ell\in[2^N]$, first let $F^{((k-1)2^N+\ell)}_{1,1}=\frac{1}{2T(1-2^{-k})}$, and moreover for $j\in[T-1]$ such that $j\neq(k-1)2^N+\ell$ or $1$, let $F^{((k-1)2^N+\ell)}_{2,j}=\frac{F^{((k-1)2^N+\ell)}_{1,1}}{2^{k+1}}$. Then, let $F^{(t)}_{1,t}=F^{(t)}_{1,1}$ for all $t\in[T-1]$. All the other $F^{(t)}_{i,j}$ for $t\in[T]$, $i\in\{1,2\}$ and $j\in[T-1]$ that have not been defined are zero.

It is easy to check that $F^{(t)}_{i,j}\le\frac{1}{T}$ for all $t\in[T]$, $i\in\{1,2\}$ and $j\in[T-1]$. Thus, we can let $F^{(t)}_{i,T}=1-\sum_{j\in[T-1]}F^{(t)}_{i,j}$ for all $t\in[T]$ and $i\in\{1,2\}$, such that the forecast tensor $\ForecastTensor$ is well-defined.

Finally, we specify the probability of agent types: $\Pr[\textrm{agent type}=(k-1)2^N+\ell]=\frac{2^{k-1}}{16^N}$ for all $k\in[N]$ and $\ell\in[2^N]$, and $\Pr[\textrm{agent type}=T]=1-\frac{2^{N}-1}{8^N}$.


\paragraph{High-level idea}
Similar to the $\Omega(n)$-gap instance, here we also have an extra zero-reward outcome $T$ to make the forecast tensor well-defined, and we introduce type $T$, which has much larger probability than other types and always results in outcome $T$, in order to make sure that a contract with non-negative profit in expectation over type distribution can only make a negligible payment $x_T$ for outcome $T$ (Lemma~\ref{lemma:x_T_upper_bound_Omega_log_T_gap}), such that $x_T$ does not interfere with the analysis. Thus, let us ignore outcome $T$ and type $T$ in the following discussion.

In the above $\Omega(\log T)$-gap instance, only outcome $1$ generates strictly positive reward, and hence, only action $1$ generates strictly positive welfare for any type (because only action $1$ has strictly positive probability for outcome $1$). Moreover, the expected welfare generated by action $1$ times the probability of agent type is roughly the same (up to a constant factor) for all types: That is, for all $k\in[N]$ and $\ell\in[2^N]$,
\begin{equation}\label{eq:equal_revenue_Omega_log_T_gap}
     \Pr[\textrm{agent type}=(k-1)2^N+\ell]\cdot \left(F^{((k-1)2^N+\ell)}_{1,1}\cdot 1-c_1\right)=\frac{1}{4T\cdot16^N(1-2^{-k})}\in\left[\frac{1}{4T\cdot16^N},\frac{1}{2T\cdot16^N}\right].
\end{equation}
Therefore, to extract a constant fraction of the expected welfare of a random-type agent, the contract must incentivize many agent types $t$ to play action $1$. For the type $t$ agent, the only two outcomes that action $1$ can result into are outcomes $1$ and $t$.

Because the probability that action $1$ results in outcome $1$ is different between different types, the contract cannot incentivize different types using only payment for outcome $1$. Specifically, a small payment for outcome $1$ is not enough to cover the expected cost of action $1$ for many types, and a large payment for outcome $1$ would overpay many types and hence give up a lot of profit (Lemma~\ref{lemma:bound_lower_part_agents_Omega_log_T_gap}).

If we use a menu of contracts, we can have one contract $\ContractVector_t$ for each agent type $t$. Each contract has the same base payment for outcome $1$, which by itself is not enough to incentivize action $1$. Then, contract $\ContractVector_t$ can incentivize the type $t$ agent to play action $1$ by an additional payment through outcome $t$. Therefore, such menu can extract a constant fraction of the expected welfare of the agent (the completeness part of the proof of Theorem~\ref{thm:Omega_log_T_gap}).

However, if we adopt the above idea for the single contract rather than the menu, then, we need to make additional payments for many different outcomes in the \emph{single} contract. In the above $\Omega(\log T)$-gap instance, for the type $t$ agent, action $2$ has zero cost and has strictly positive probabilities for all but outcomes $1$ and $t$. Therefore, when faced with a single contract with additional payments for many different outcomes, the type $t$ agent would prefer playing action $2$, which generates zero profit for the principal. As a consequence, when we use a single contract, there cannot be many types of agents simultaneously preferring action $1$ (Lemma~\ref{lemma:bound_upper_part_agents_Omega_log_T_gap}), and hence, overall the expected profit is low (the soundness part of the proof of Theorem~\ref{thm:Omega_log_T_gap}).

\newtheorem{remark}[theorem]{Remark}
\begin{remark}
Under the assumption that the probability of every agent type with non-zero welfare is the same in the prior distribution of the agent type,~\citet{guruganesh2021contracts} proved that linear contracts can achieve a $\Omega(1/(n\log T))$ fraction of the profit of the optimal deterministic menu. Thus, our $\Omega(\log T)$-gap instance (where $n=3$) is tight for this setting.
\end{remark}
\begin{proof}
    Note that in the $\Omega(\log T)$-gap instance, only agent types $[T-1]$ has non-zero welfare, but their probabilities in the prior distribution are different. However, we can construct a new instance with equal probabilities of agent types from the $\Omega(\log T)$-gap instance by making copies of each agent type such that the number of copies is proportional to the probability of that agent type. That is, in the new instance, for each $t\in[T-1]$, we make $16^N\cdot\Pr[\textrm{agent type}=t]$ (it is easy to check $16^N\cdot\Pr[\textrm{agent type}=t]$ is an integer) copies of agent type $t$, and we make $16^N$ copies of agent type $T$.  In the new instance, all the types have the same probability in the new prior distribution. Clearly, the analysis of Theorem~\ref{thm:Omega_log_T_gap} still works by replacing the probability of each agent type with the number of copies of that agent type.
\end{proof}
    
\section{Menus of randomized linear contracts}
\label{sec:menu_rlc}


We now switch our attention to menus involving linear contracts. As we have already mentioned in Section \ref{sec:prelims}, simple menus of (deterministic) linear contracts are no more powerful than a single linear contract. Instead, our main object of study are \textit{menus of randomized linear contracts}, formed by specializing the menus of randomized contracts in \citet{castiglioni2022designing} to the special case of linear contracts.


Recall that a randomized linear contract is a probability distribution $\gamma$ over $\mathbb{R}_{\geq 0}$ representing all
the possible linear contracts (including linear contracts with $\alpha > 1$, where the principal will pay out more than their entire reward). A menu of randomized linear contracts is defined by a tuple $\Gamma = \left(\gamma^{(1)},\dots,\gamma^{(T)} \right)$ where each $\gamma^{(i)}$ is a distribution over the transfer coefficients.
The interaction between the principal and an agent of type $t$ proceeds as follows:
\begin{enumerate}
\item The principal publicly commits to a menu $\Gamma= \left(\gamma^{(1)},\dots,\gamma^{(T)}\right).$
\item The agent reports a type $\hat{t}$ to the principal, possibly different from the true type $t.$
\item The principal draws a transfer coefficient $\alpha \sim \gamma^{(\hat{t})}$  and communicates it to the agent.
\item The agent plays the action that maximizes his utility when offered $\alpha $ of the expected reward. 
\end{enumerate}

The very first question we might ask ourselves is: are randomized linear contracts actually more powerful than individual linear contracts. The answer (in contrast to the situation with menus of deterministic linear contracts) is \textit{yes}.

Before we present the example, we will establish some common notation for the rest of the section. Let $U_{t}(\alpha)$ be the utility obtained by type $t$ when offered a linear contract with coefficient $\alpha$. If action $i$ has cost $c_i$ and expected reward $R^{(t)}_i$ for the principal, then in fact $U_{t}(\alpha) = \max_{i} (R^{(t)}_i \alpha - c_i)$. From this we can observe that each $U_{t}$ is a convex, piece-wise linear function with at most $n$ pieces. Since we are only dealing with menus of randomized linear contracts, it suffices to specify the slopes $U'_t$ of these piecewise linear functions to completely specify the problem instance. 

\begin{example}[Menus of Randomized Linear Contracts Differ From Linear Contracts]
\label{ex:rlc-v-linear}
  Consider the following typed principal-agent problem with two types (and the underlying distribution is uniform), where $\epsilon > 0$ is a small constant. Let the slopes of their utility functions be defined as follows:
  \begin{align*}
    &U_1'(\alpha) = 
    \begin{cases}
    0  &\text{ if } \alpha \in [0,1 - \epsilon) \\
    \frac{1}{2\epsilon} - 1  &\text{ if } \alpha \in [1 - \epsilon, 1]
    \end{cases}
    \qquad 
    &U_2'(\alpha) = 
    \begin{cases}
    \frac{1}{2} &\text{ if } \alpha \in [0,\frac12) \\
    1 &\text{ if } \alpha \in [\frac12,1]
    \end{cases}
\end{align*}

This problem is designed so that any linear contract where $\alpha$ is a breakpoint (one of $\{0, 1/2, 1-\epsilon\}$) yields a total utility of $1/2$; linear contracts at non-breakpoints only get less utility. The following menu achieves slightly more than $1/2$ utility:
\begin{align*}
  \gamma^{(1)}(\alpha) = \begin{cases}
    2/3 & \text{ if } \alpha = 0 \\
    1/3 & \text{ if } \alpha = 1 - \epsilon \\
    0          & \text { otherwise}
  \end{cases}
  \qquad
  \gamma^{(2)}(\alpha) = \begin{cases}
    1 & \text{ if } \alpha = \frac12 \\
    0 & \text { otherwise}
  \end{cases}
\end{align*}
Type one is indifferent between the two menu items because it has zero utility at all breakpoints. From type two's point of view, its menu item has utility $1/4$ and the other menu item has utility $\frac13 (\frac14 + \frac12 - \epsilon) < 1/4$. The first menu item extracts a profit of $\frac16 - \frac13 \epsilon$ from the first type. The second menu item extracts a profit of $\frac12$ from the second type. The total profit is $\frac23 - \frac13 \epsilon$, which is better for $\epsilon < \frac12$. This completes the example.
\end{example}

In the remainder of this section, we study the following three aspects of randomized linear contracts. First, in Section \ref{sec:structure}, we explore various structure of randomized linear contracts: e.g., How large must the support of $\gamma^{(t)}$ be in the optimal menu of randomized linear contracts? Is it necessary for the support to contain $\alpha > 1$? Second, in Section \ref{sec:rlc_alg}, we present an efficient algorithm for computing the optimal menu of randomized linear contracts. Finally, in Sections \ref{sec:rlc_gap} and \ref{sec:rlc_ub}, we explore (and bound) the gap in power between menus of randomized linear contracts and linear contracts.

\subsection{The structure of randomized linear contracts} \label{sec:structure}

We begin by establishing some basic structural properties of the randomized linear contracts in belonging to the optimal menu. In particular, we will show that all such contracts are supported on breakpoints $\alpha$ where an agent of one type would switch from playing one action to another.


Let $p$ be the total number of distinct breakpoints over all $T$ functions $U_{t}(\alpha)$ (in particular, $p \leq nT$). We will label these breakpoints as $\alpha_1\dots  \leq \alpha_{p}$, and for convenience of notation write $\alpha_0 = 0$. The first claim we prove is that, without loss of generality, every optimal menu is supported on this set of breakpoints (including $\alpha_0$), and possibly one other point larger than $\alpha_p$.

\begin{lemma} \label{lem:brkpoint}
For any principal-agent problem with $T$ types and $n$ actions, there exists an optimal menu $\Gamma = (\gamma^{(1)},\dots,\gamma^{(T)})$ of randomized linear contracts such that, for each $t \in [T]$,

$$\mathrm{supp}\left(\gamma^{(t)}\right) \subseteq \{\alpha_0, \alpha_1, \dots, \alpha_{p}, \alpha^{t}_{p+1}\}$$

\noindent
for some choice of $\alpha^{t}_{p+1} > \alpha_{p}$ for each type $t$. In other words, all randomized contracts in this menu only place weight on points of the form $\alpha_i$ or one other (undetermined) point.


\end{lemma}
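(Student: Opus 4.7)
The core observation is that every utility function $U_t$ is affine on each subinterval $[\alpha_i, \alpha_{i+1}]$ (for $i = 0, 1, \dots, p-1$) and on the tail $[\alpha_p, \infty)$. This is by construction: the $\alpha_i$ are exactly the union of breakpoints of the $U_t$, so no $U_t$ has a kink in the interior of any such subinterval. The plan is to take an arbitrary optimal menu $\Gamma$ and perform, independently for each type $t$ and each subinterval, a mean-preserving collapse: replace the mass of $\gamma^{(t)}$ on that subinterval by a two-point distribution on its endpoints (or, for the tail, by a single atom at its mean, which gives rise to $\alpha^{t}_{p+1}$) with the same total mass and same mean.

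Next, I would verify that the resulting menu $\Gamma'$ is still incentive-compatible and achieves at least the original profit. For IC, note that since $U_{t'}$ is affine on each subinterval, $\mathbb{E}_{\alpha \sim \gamma^{(t)}}[U_{t'}(\alpha)]$ is invariant under any mean-preserving modification of $\gamma^{(t)}$ within a subinterval, so both sides of each IC inequality are preserved exactly. For the principal's profit on a subinterval $[\alpha_i, \alpha_{i+1}]$ and type $t$: in the open interior, the agent plays a unique best action $a$, so the principal's profit $(1-\alpha) R_a^{(t)}$ is affine in $\alpha$; at either endpoint, under principal-favorable tie-breaking, action $a$ remains available and tied, so the principal's realized expected reward there is at least $R_a^{(t)}$. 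A direct calculation using the endpoint weights $(\alpha_{i+1} - \bar\alpha)/(\alpha_{i+1} - \alpha_i)$ and $(\bar\alpha - \alpha_i)/(\alpha_{i+1} - \alpha_i)$ that preserve the mean $\bar\alpha$ then shows that the new expected profit on this subinterval is at least the old one. An analogous argument handles the tail $[\alpha_p, \infty)$, with $\alpha^{t}_{p+1}$ placed at the collapsed mean (and omitted, i.e., chosen with zero mass, if either the tail is empty or the mean equals $\alpha_p$).

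The main subtle point is the treatment of ties at the breakpoints $\alpha_i$: an agent of type $t$ is generally indifferent between two (or more) actions there, and without a favorable tie-breaking convention, shifting mass from the interior to an endpoint could in principle strictly reduce the principal's realized profit. Invoking principal-favorable tie-breaking (standard in contract theory, and justifiable by an $\varepsilon$-perturbation of payments if desired) sidesteps this and makes every local replacement weakly improve profit. Combining the replacements across types, subintervals, and agent-type probabilities yields the desired menu $\Gamma'$ with $\mathrm{supp}(\gamma'^{(t)}) \subseteq \{\alpha_0, \alpha_1, \dots, \alpha_p, \alpha^{t}_{p+1}\}$ for each $t$, matching the conclusion.
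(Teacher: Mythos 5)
Your proposal is correct and takes essentially the same approach as the paper: both perform a mean-preserving redistribution of probability mass onto the breakpoints (exploiting the fact that each $U_{t'}$ is affine on every subinterval $[\alpha_i,\alpha_{i+1}]$ to preserve the IC constraints exactly), collapse the tail mass on $(\alpha_p,\infty)$ to its conditional mean to produce $\alpha^t_{p+1}$, and invoke principal-favorable tie-breaking at the breakpoints so that the realized expected reward there is at least as large as in the interior, making profit weakly increase. The only cosmetic difference is that you collapse each subinterval's mass in one step via its conditional mean, whereas the paper moves each non-breakpoint point's mass to its two neighboring breakpoints individually; by linearity these produce the same resulting distribution.
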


We defer the full proof to Appendix~\ref{subsec:apx-lem-brkpoint} but present a sketch below.
\begin{proof}[Proof Sketch]
Note that any mass put on a break point $\alpha \in [\alpha_j,\alpha_{j+1})$ can be 
moved to both neighboring breakpoints without affecting the constraints and only increasing the objective.  For the probability mass put on the points above the last breakpoint, we can create a new point based on its conditional expectation in a way that maintains the constraints and improves the objective. 
\end{proof}


One obvious peculiarity of the characterization in Lemma \ref{lem:brkpoint} is that it leaves open the option of placing mass above the largest breakpoint $\alpha_{p}$, without any bound on the size of $\alpha^t_{p+1}$. In fact, a priori, there is nothing preventing the optimal menu from containing contracts supported on coefficients $\alpha$ greater than $1$, even though by offering such a contract, the principal is \textit{guaranteed} to lose utility. Even barring placing mass beyond $\alpha_{p}$, it could still be the case that some breakpoints $\alpha_i$ themselves are greater than $1$; these would normally correspond to actions that are not reasonable to incentivize with a single contract (they have negative net welfare for the principal and agent combined). Is it ever useful to incentivize these actions when offering a menu of randomized linear contracts?

Interestingly, the answer is (again) \textit{yes}. We say that a randomized linear contract is \emph{bounded} if always chooses a transfer coefficients bounded above by $1$. Otherwise we say a randomized linear contract is \emph{unbounded}. In the example below, we show that a menu of unbounded randomized linear contracts can extract strictly larger revenue for the principal than a menu of bounded randomized linear contracts.

\begin{lemma}\label{lem:unbounded_rlc_example}
There exists a principal-agent problem with $T=2$ types and $n=3$ actions such that the profit obtained by the optimal menu of bounded randomized linear contracts is strictly smaller than the revenue achieved by the optimal menu of unbounded randomized linear contracts. 
\end{lemma}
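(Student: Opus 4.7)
The plan is to exhibit a specific principal-agent instance with $T=2$ types and $n=3$ actions per type (a null plus two non-null actions) in which the optimal \emph{unbounded} menu of randomized linear contracts strictly outperforms any \emph{bounded} one. The governing idea is to include, for one type, a ``high reward, high cost'' action whose breakpoint in the type's utility function $U_t(\alpha)$ lies strictly above $1$. Such an action cannot be induced by any deterministic linear contract (it has negative welfare from the principal's perspective), and, more importantly, cannot be induced \emph{in expectation} by any randomized linear contract supported in $[0,1]$. Yet by mixing mass at $\alpha=0$ (where the agent plays null, zero profit) with mass at some $\alpha^{*}>1$ (where the agent plays the high-reward action, negative profit), the menu designer can supply that type a large block of utility, which relaxes the IC constraints enough to let the principal extract strictly more elsewhere.

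Concretely, I would set type $1$'s utility function $U_1$ to have an initial breakpoint $\alpha_1\in (0,1)$ (from null to a modest action) and a second breakpoint $\alpha_2>1$ (from the modest action to a high-reward, high-cost action whose single-contract welfare is negative), and set type $2$'s utility function $U_2$ to have all its breakpoints inside $[0,1]$ and a strictly smaller slope above its top breakpoint than $U_1$ has above $\alpha_2$. I would then exhibit a concrete menu $\Gamma=(\gamma^{(1)},\gamma^{(2)})$ in which $\gamma^{(1)}$ places mass $p$ on $\alpha=0$ and mass $1-p$ on some $\alpha^{*}\ge \alpha_2$, while $\gamma^{(2)}$ is a point mass at a breakpoint of $U_2$ in $[0,1]$. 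I would then check the two IC inequalities by hand, using that the ``lure'' $\alpha^{*}$ is much more attractive to type~$1$ than to type~$2$ (because $R_{1}^{\max}>R_{2}^{\max}$ above the top breakpoint), and then compute the principal's expected profit explicitly.

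To upper bound the profit obtainable by any \emph{bounded} menu, I would use a bounded-support analogue of Lemma~\ref{lem:brkpoint}: since $U_t(\alpha)$ is piecewise linear and the profit $(1-\alpha)R_t(\alpha)$ is piecewise linear on the same partition, a standard mass-shifting argument shows that every optimal bounded menu can be taken to be supported inside $\{0,\alpha_{1},\ldots,\alpha_{q},\hat\alpha\}\cap[0,1]$, where $\alpha_1,\ldots,\alpha_q$ are the breakpoints lying in $[0,1]$ and $\hat\alpha\in(\alpha_q,1]$ is a single extra point per type. The optimization over bounded menus therefore reduces to a small, explicit linear program in $O(Tq)$ variables, which I would solve either directly or by exhibiting a dual certificate to show its value is strictly below the profit of the unbounded menu constructed above.

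The main obstacle I expect is tuning the parameters of the instance so the strict gap is unambiguous: the bounded menu still has nontrivial freedom (it can randomize between $0$, $\alpha_1$, and $1$, and it can trade off extraction from type $2$ against IC for type $1$), and one must verify that in no choice of such mixture can the bounded menu match the unbounded profit. The delicate point is that the ``lure'' $\alpha^{*}>1$ is genuinely essential: any substitute mass placed at $\alpha\le 1$ either fails to induce type~$1$'s high-reward action (since $\alpha_2>1$) or leaks too much utility to type~$2$ (violating the IC in the opposite direction). A careful parameter choice together with a short LP/dual-certificate argument on the few-variable bounded LP should close the gap quantitatively.
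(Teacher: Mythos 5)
Your overall strategy coincides with the paper's: build a two-type, three-action instance in which one type's utility function has a breakpoint strictly above $1$, exhibit a randomized linear menu that places some mass above $1$ as a ``lure'' for that type, and then certify an upper bound on all bounded menus by restricting the menu LP to breakpoints in $[0,1]$ and producing a feasible dual solution. The paper's instance and your dual-certificate plan match at that level of granularity.

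However, the specific construction you sketch for $\gamma^{(1)}$ --- mass $p$ at $\alpha = 0$ and mass $1-p$ at a single $\alpha^{*} > 1$, with $U_1'(\alpha) = 0$ on an initial interval $[0,\alpha_1)$ --- has a genuine structural problem. With those choices the profit the principal earns from type~$1$ under $\gamma^{(1)}$ is $p \cdot 0 + (1-p)(1-\alpha^{*}) U_1'(\alpha^{*}) < 0$: the $\alpha = 0$ atom contributes nothing (since the type plays null there), and the above-$1$ atom is strictly loss-making. Such a $\gamma^{(1)}$ is dominated by simply offering $\gamma^{(2)}$ to both types, which earns nonnegative profit from type~$1$. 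More generally, mass at any $\alpha' \in (0,1]$ supplies utility to type~$1$ at \emph{positive} principal profit, so almost all of the utility delivered in $\gamma^{(1)}$ should sit in $[0,1]$; the above-$1$ mass is useful only as a tiny screening wedge that widens the gap between $U_1$ and $U_2$. Consistent with this, the paper's $\gamma^{(1)}$ is a \emph{three}-point distribution with weight $3/5$ at $0$, $79/200$ at $2/3$, and only $1/200$ at $5/3$, and its $U_1$ has strictly positive slope ($1/5$) already at $\alpha = 0$. Without these two features the unbounded menu would not beat the bounded dual bound, so you would need to revise your instance and your proposed support for $\gamma^{(1)}$ before the plan can close.
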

We provide the instance below but defer the full proof to the appendix. 
\begin{proof}[Proof Sketch]
Since we are only concerned with linear contracts, it suffices to specify the utility functions $U_1(\alpha), U_2(\alpha): [0,\infty) \to [0,\infty)$.  In fact it suffices to specify the derivatives $U_i'(\alpha)$ (since we always have that $U_t(\alpha) = \int_{0}^{\alpha}U'_t(x) dx$). 

We specify the slopes below:
\begin{align*}
&U_1'(\alpha) = 
\begin{cases}
\frac{1}{5}  &\text{ if } \alpha \in [0,\frac23) \\
\frac{1}{4}  &\text{ if } \alpha \in [\frac23,\frac43) \\
1 &\text{ if } \alpha \in [\frac43,\infty) 
\end{cases}
\qquad 
&U_2'(\alpha) = 
\begin{cases}
\frac{1}{20} &\text{ if } \alpha \in [0,\frac13) \\
\frac{1}{4} &\text{ if } \alpha \in [\frac13,\frac23) \\
\frac{7}{20} &\text{ if } \alpha \in [\frac23,\infty) 
\end{cases}
\end{align*}

We show (computationally) that in the above instance, there exists a menu of unbounded randomized linear contracts which extracts a revenue strictly greater than $31/100$. On the other hand, the optimal menu of bounded randomized linear contracts extracts a revenue less than $31/100$.
\end{proof}

\subsection{Computing the optimal menu of randomized linear contracts}\label{sec:rlc_alg}

We now switch our focus to the computational problem of efficiently computing the optimal menu of randomized linear contracts. Ultimately, we will show that we can find this optimal menu by solving a concise (polynomially-sized) linear program.

We begin by writing out a much larger program in terms of the variables $\gamma^{(i)}$ (where recall that each $\gamma^{(i)}:[0, \infty) \rightarrow [0, \infty)$ encodes a distribution over the positive reals via its pdf). The optimal menu will satisfy the following mathematical program:
\begin{align}
\sup_{\gamma^{(1)}, \dots, \gamma^{(T)}}
\sum_{i=1}^T &\int_{b=0}^{\infty} U'_i(b)(1-b) \gamma^{(i)}(b) db  \tag{Rev} \label{eqn:rev} \\
&\int_{b=0}^{\infty} \gamma^{(i)}(b) db = 1 \qquad  \forall i \tag{Prob} \label{eqn:prob}\\
&\int_{b=0}^{\infty} U_i(b) ( \gamma^{(i)}(b) - \gamma^{(i')}(b) ) db \geq 0 \qquad \forall i,i' \tag{IC} \label{eqn:ic} \\
& \gamma^{(i)}(b) \geq 0 \qquad \forall b \in \R_{\geq 0} \notag
\end{align}

The first constraint simply states that each $\gamma^{(i)}$ is a probability distribution. \Cref{eqn:ic} is the incentive compatibility constraint which enforces type that $t$ will choose the contract $\gamma^{(t)}$. 
Finally, the objective (Equation~\ref{eqn:rev}) maximizes the expected revenue for the principal over all possible menus of randomized linear contracts. 

In the following theorem, we show how to rewrite this program  as a concise linear program (relying heavily on our characterization in Lemma \ref{lem:brkpoint} of the support of these distributions). 


\begin{theorem} \label{thm:polytime_menu_rlc}
Given an instance principal agent problem with $T$ types and $n$ actions per type with break points $\alpha_0=0\leq \dots \alpha_p $, we can compute a menu $\Gamma = (\gamma^{(1)},\dots,\gamma^{(T)})$ which achieves the optimal revenue in time that is polynomial in $\poly(T,n,|F|)$ where $|F|$ denotes the bit-complexity of the probability matrix. 
\end{theorem}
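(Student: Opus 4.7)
The plan is to leverage Lemma~\ref{lem:brkpoint} to reduce the infinite-dimensional program in (\ref{eqn:rev}) to a finite linear program with $\poly(nT)$ variables and constraints. By that lemma, there is an optimal menu where each $\gamma^{(t)}$ is supported on the common set of breakpoints $\{\alpha_0, \alpha_1, \ldots, \alpha_p\}$ (with $p \le nT$) together with at most one type-specific tail point $\alpha^{(t)}_{p+1} > \alpha_p$. I would directly encode this support by introducing, for each $t \in [T]$ and $j \in \{0, 1, \ldots, p, p+1\}$, a nonnegative probability variable $q^{(t)}_j$ with $\sum_j q^{(t)}_j = 1$. The contributions from the fixed breakpoints $\alpha_0, \ldots, \alpha_p$ to every IC constraint and to the objective are trivially linear, since the quantities $U_{t'}(\alpha_j)$ and the per-unit revenues $U'_t(\alpha_j)(1 - \alpha_j)$ are precomputable instance constants; at breakpoints, I break ties in the principal's favor by the standard $\varepsilon$-perturbation argument, taking the right derivative when $\alpha_j < 1$ and the left derivative when $\alpha_j > 1$.

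The core technical step is linearizing the tail contribution, which a priori contains the product $q^{(t)}_{p+1} \alpha^{(t)}_{p+1}$. I would exploit the fact that each $U_{t'}$ is affine on $[\alpha_p, \infty)$ with slope $s_{t'} := \max_i R^{(t')}_i$, so that $U_{t'}(\alpha) = U_{t'}(\alpha_p) + s_{t'}(\alpha - \alpha_p)$ for all $\alpha \ge \alpha_p$. Introducing the auxiliary variable $\mu^{(t)} := q^{(t)}_{p+1}(\alpha^{(t)}_{p+1} - \alpha_p) \ge 0$, the tail contribution of $\gamma^{(t)}$ to type $t'$'s expected utility collapses to $q^{(t)}_{p+1} U_{t'}(\alpha_p) + s_{t'} \mu^{(t)}$, and the tail contribution to the principal's revenue from type $t$ becomes $s_t\bigl((1-\alpha_p) q^{(t)}_{p+1} - \mu^{(t)}\bigr)$. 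Both expressions are linear in the LP variables, so every IC constraint and the objective become linear in $\{q^{(t)}_j\} \cup \{\mu^{(t)}\}$, yielding an LP with $O(nT^2)$ variables and $O(T^2)$ non-trivial constraints that can be solved in polynomial time by standard methods. Given an optimum $(q^*, \mu^*)$, one recovers the menu by setting $\alpha^{(t)}_{p+1} := \alpha_p + \mu^{(t)*}/q^{(t)*}_{p+1}$ whenever $q^{(t)*}_{p+1} > 0$, and omitting the tail otherwise.

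The main obstacle I anticipate is the degenerate LP solution in which $q^{(t)*}_{p+1} = 0$ yet $\mu^{(t)*} > 0$; this corresponds to no finite-support distribution and could represent a ``contract at infinity'' whose LP value exceeds that of any realizable menu. I would address this by augmenting the LP with the linear constraint $\mu^{(t)} \le M \cdot q^{(t)}_{p+1}$, where $M = \poly(n, T, |F|)$ is an explicit upper bound on $\alpha^{(t)}_{p+1} - \alpha_p$ that is guaranteed to be attained in some optimum. Establishing such an $M$ is the principal technical step, and I would prove it via a perturbation: if $\alpha^{(t)}_{p+1}$ exceeds the threshold, then either some competing type $t'$'s IC is violated (because $s_{t'} \mu^{(t)}$ scales linearly in $\alpha^{(t)}_{p+1}$) or shrinking $\alpha^{(t)}_{p+1}$ while rebalancing mass among the breakpoints weakly improves the objective while preserving feasibility. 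The related unboundedness phenomenon appeared implicitly in Lemma~\ref{lem:unbounded_rlc_example}, which shows that the useful range of $\alpha$ can extend beyond $[0,1]$ and must therefore be characterized carefully rather than truncated a priori.
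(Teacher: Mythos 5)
Your approach is the same as the paper's: use Lemma~\ref{lem:brkpoint} to restrict to the common breakpoints plus one per-type tail atom, then linearize the unknown tail location by tracking a ``first-moment'' auxiliary variable and solve the resulting polynomial-size LP. Your variable $\mu^{(t)} = q^{(t)}_{p+1}(\alpha^{(t)}_{p+1} - \alpha_p)$ is just an affine shift of the paper's $z_t = \alpha^{t}_{p+1}\gamma^{(t)}_{p+1}$ (namely $\mu^{(t)} = z_t - \alpha_p\,\gamma^{(t)}_{p+1}$), and your linearized IC and objective terms agree with the paper's (Primal) after this change of variables; your size estimates are also correct, so the two LPs are interchangeable.

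Where you diverge is in explicitly flagging the degenerate solution $q^{(t)}_{p+1}=0$ but $\mu^{(t)}>0$ and proposing to excise it via a constraint $\mu^{(t)} \le M\,q^{(t)}_{p+1}$. This is a genuine subtlety that the paper glosses over: the paper's recovery formula $\alpha^{t}_{p+1}=z_t/\gamma^{(t)}_{p+1}$ is simply undefined when $\gamma^{(t)}_{p+1}=0$, and the corollary's bit-complexity bound for $\alpha^t_{p+1}$ tacitly assumes the denominator is nonzero. However, note that you also do not close the gap: you assert that an explicit $M=\poly(n,T,|F|)$ exists such that adding $\mu^{(t)}\le M\,q^{(t)}_{p+1}$ loses no objective value, and you sketch a perturbation argument, but you do not carry it out. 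In particular, adding that constraint changes the feasible region, and without the promised bound one cannot conclude the constrained LP still attains the original supremum; equivalently, you have not ruled out that the unconstrained LP optimum is a ``contract at infinity'' strictly above every realizable menu. So your write-up identifies a real issue the paper leaves implicit, but the key step you need (either proving the bound $M$, or showing that every optimal basic feasible solution has $\mu^{(t)}=0$ whenever $q^{(t)}_{p+1}=0$) remains a sketch rather than a proof.
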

\begin{proof}
We can write a linear program which computes the optimal menu by essentially solving the mathematical program described above. Using~\Cref{lem:brkpoint}, we know that each randomized linear contract will only put mass on the breakpoints and one additional point $\alpha^t_{p+1}$ above the last break point. One difficulty is that we don't know the value (or even have a bound on) the final break point $\alpha^i_{p+1}.$  To get around this, we instead track the variables $z_t:= \alpha^i_{p+1}\cdot \gamma^{(i)}_{p+1}$ and $\gamma^{(i)}_{p+1}$ (i.e., the total mass above $\alpha_p$ and the average coefficient). 

We can therefore represent each menu as a vector $(\gamma^{(t)}_0,\dots,\gamma^{(t)}_{p},\gamma^{(t)}_{p+1} )$ where $\gamma^{(t)}_i$ denotes the probability of selecting a contract with transfer coefficient $\alpha_i$ when $i\leq p$. $\gamma^{(t)}_{p+1}$ denotes the probability of choosing a point higher than the breakpoints.
Rewriting the original mathematical program using the above simplification, we get the following linear program:

\begin{align}
  & \max & \sum_{t} \sum_{i=0}^p \gamma^{(t)}_i \cdot U'_t(\alpha_i) (1-\alpha_i) +  U'_t(\alpha_p) \left( \gamma^{(t)}_{p+1}  - z^t\right)  \tag{Primal} \label{eqn:lp-primal} \\
  & s.t. & \sum_{i=0}^p U^t(\alpha_i) \cdot (\gamma^{(t)}_i - \gamma^{(t')}_i ) + U'_t(\alpha_p)  \left( z_t - z_{t'}  - \alpha_p \cdot \gamma^{(t)}_{p+1} +\alpha_p \cdot \gamma^{(t')}_{p+1} \right) \notag \\
  & & {} + U_t(\alpha_p) (\gamma^{(t)}_{p+1} - \gamma^{(t)}_{p+1} ) &\geq 0 \notag \\
  & & \alpha_p \cdot \gamma^{(t)}_{p+1} &\leq z^t \notag \\
  & & \sum_{i=0}^{p+1} \gamma^{(t)}_{i} &= 1 \notag
\end{align}

Furthermore, any solution the above linear program can be converted to an optimal menu $\Gamma = (\gamma^{(1)},\dots,\gamma^{(T)})$ with the same objective value. Simply let $\alpha^t_{p+1}= z_t/\gamma^{(t)}_{p+1}$ and $\gamma^{(t)}_{p+1}$ denotes the probability that the contract for type $t$ plays the transfer coefficient $\alpha^t_{p+1}.$ The resulting menu satisfies the original mathematical program and has the same revenue as the objective. 
\end{proof}

One corollary of this proof is that it implies an upper bound on the maximum transfer coefficient (although not a particularly strong one). 

\begin{corollary}
There is an upper bound $B$ which is polynomial in the size of the $\PrincipalAgentProblem$ such that the optimal menu will use a transfer coefficient whose size is at most $B$. 
\end{corollary}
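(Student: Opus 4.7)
The plan is to invoke the standard bit-complexity bound on basic feasible solutions (BFSs) of the linear program \eqref{eqn:lp-primal} from the proof of Theorem~\ref{thm:polytime_menu_rlc}. Recall that the menu is recovered from the LP solution via $\alpha^{t}_{p+1}=z^{t}/\gamma^{(t)}_{p+1}$, so it suffices to upper bound $z^{t}$ and lower bound $\gamma^{(t)}_{p+1}$ (whenever the latter is positive). The remaining transfer coefficients in the optimal menu are the breakpoints $\alpha_0,\ldots,\alpha_p$, which arise as ratios $(c_j-c_k)/(R^{(t)}_{k}-R^{(t)}_{j})$ of input quantities and therefore already have bit-complexity polynomial in $|\ForecastTensor|+|\RewardVector|+|\CostVector|$.

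First, I would verify that the LP has $O(T^{2}n)$ variables and $O(T^{2})$ constraints, and that every coefficient $\alpha_i$, $U_t(\alpha_i)$, $U'_t(\alpha_i)$ is a rational whose bit-complexity is $\poly(|\ForecastTensor|,|\RewardVector|,|\CostVector|)$. This is a routine check: each $U'_t(\alpha_i)$ is one of the reward values $R^{(t)}_j$ and each $U_t(\alpha_i)$ is a short rational combination of input quantities evaluated at an $\alpha_i$ that is itself a short rational.

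Second, I would argue that the LP attains its optimum at a BFS of polynomial bit-complexity. Since each utility $U_t$ is nondecreasing and convex, $U'_t(\alpha_p)\ge 0$. In the non-degenerate case $U'_t(\alpha_p)>0$, the objective contains the term $-U'_t(\alpha_p)\cdot z^{t}$, so $z^{t}$ cannot grow without bound at the optimum. The degenerate case $U'_t(\alpha_p)=0$ corresponds to type $t$ having no further action to be incentivized beyond $\alpha_p$; here we may simply set $\alpha^{t}_{p+1}=\alpha_p$ without affecting the objective. Applying the classical Cramer's-rule bound to an optimal BFS then yields $z^{t}\le 2^{\poly(|\ForecastTensor|,|\RewardVector|,|\CostVector|)}$ and, whenever $\gamma^{(t)}_{p+1}>0$, $\gamma^{(t)}_{p+1}\ge 2^{-\poly(|\ForecastTensor|,|\RewardVector|,|\CostVector|)}$, because every positive entry of a BFS is at least the reciprocal of a subdeterminant of the constraint matrix. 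Taking $B$ to be the maximum of $\alpha_p$ and the resulting bound on $z^{t}/\gamma^{(t)}_{p+1}$ completes the argument.

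The main obstacle is confirming that the LP's optimum is attained at a BFS whose coordinates really do enjoy polynomial bit-complexity; this requires boundedness of the optimum, which is subtle because the IC constraints couple the $z^{t}$ variables across types. Once one rules out the (easily handled) degenerate case above and observes that the coefficients of the $z^{t}$ in the objective are strictly negative in the remaining cases, boundedness follows and the polynomial bit-complexity of a BFS is then immediate from the standard size estimate for basic solutions of rational linear programs.
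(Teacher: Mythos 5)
Your proposal is correct and takes essentially the same route as the paper: both rely on the observation that $z^t$ and $\gamma^{(t)}_{p+1}$ are variables of the linear program from Theorem~\ref{thm:polytime_menu_rlc}, so an optimal basic feasible solution has entries of bit-complexity $\poly(n,T,|F|)$, and the recovered coefficient $\alpha^{t}_{p+1}=z^{t}/\gamma^{(t)}_{p+1}$ therefore inherits a polynomial bound. The paper states this in two lines and leaves the supporting facts implicit; you flesh them out — the polynomial bit-complexity of the LP's coefficients, the boundedness of the optimum (via the strictly negative objective coefficient of $z^{t}$ when $U'_{t}(\alpha_p)>0$, with the degenerate case $U'_{t}(\alpha_p)=0$ handled by collapsing $\alpha^{t}_{p+1}$ to $\alpha_p$), and the Cramer's-rule lower bound on positive BFS entries — which is a welcome and correct tightening of the argument.
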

\begin{proof}
Since $z_t$ and $\gamma^{(t)}_{p+1}$ are both variables in the linear program, we know that their size is bounded by $\poly(n,T,|F|).$ Finally, the breakpoint $\alpha^t_{p+1}$ is a ratio of these quantities and therefore is also bounded by $\poly(n,T,|F|).$
\end{proof}

Although we know that the largest transfer is bounded in bit complexity and can be computed in polynomial time, it remains an open question how powerful these  menus of bounded randomized linear contracts really are.
\begin{displayquote}
\textbf{Open Question}: What is the gap in revenue between a menu of bounded randomized linear contracts when compared to menus of unbounded randomized linear contracts? What is the best upper bound on the value of the largest transfer coefficient in the support of any contract?
\end{displayquote}

%
%
%

\label{subsec:linear-vs-rand-menu-linear}

\subsection{Upper bounds for menus of randomized linear contracts}\label{sec:rlc_ub}

While Theorem \ref{thm:rlc_gap} demonstrates a super-constant gap between linear contracts and menus of randomized linear contracts, the instance in the proof require both the number of actions and the number of types to grow without bound. It is natural to ask how this gap depends on each of these parameters -- the number of actions and the number of types -- individually. For example:
\begin{displayquote}
\textbf{Open Question:} What is the dependence on the gap between randomized linear contracts and single linear contracts as a function of the number of actions $n$ (keeping the number of types fixed)? As a function of the number of types $T$ (keeping the number of actions fixed)?
\end{displayquote}

We do not answer this question in generality -- however, interestingly, we show that in the special case where the number of actions is fixed to $2$ (one of the simplest non-trivial cases of the above question), the gap is bounded by a (universal) constant. 
\begin{theorem} \label{thm:Omega_1_lower}
  Suppose we have a principal-agent problem $\PrincipalAgentProblem{}$ with $n = 2$ actions (a null action and a non-null action). Then linear contracts can extract a constant fraction of the profit that randomized menus of linear contracts can:
  \begin{align*}
    \OLinear \PrincipalAgentProblem \ge \Omega(1)\cdot \ORndMenuLinear \PrincipalAgentProblem.
  \end{align*}
\end{theorem}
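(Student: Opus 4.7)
My plan is to reduce the problem to a standard single-buyer, single-item Bayesian revenue maximization problem, to which Myerson's lemma applies and yields that the optimal mechanism is a posted price (equivalent in our setting to a single linear contract). Since each type has exactly one non-null action with expected reward $R_t$ and cost $c_t$, the utility function $U_t(\alpha) = R_t \max(\alpha - \alpha_t^*, 0)$ has a single breakpoint at $\alpha_t^* = c_t/R_t$. Write $v_t := \max(0, 1 - \alpha_t^*)$ and $w_t := p_t R_t$; types with $v_t = 0$ contribute zero profit to any mechanism and can be discarded. After substituting $\beta := 1 - \alpha$, a randomized linear contract is a distribution $\gamma^{(t)}$ over $\beta$, and type $t$ plays the non-null action precisely when $\beta \le v_t$. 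Defining the allocation $X_t := \Pr_{\beta \sim \gamma^{(t)}}[\beta \le v_t]$ and the scaled payment $\Pi_t := \mathbb{E}_{\beta \sim \gamma^{(t)}}[\beta \cdot \mathbf{1}[\beta \le v_t]]$, the menu's profit rewrites as $\sum_t w_t \Pi_t$.

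The crux of the argument is to show that our IC constraints imply the standard Myerson IC constraints $v_t X_t - \Pi_t \ge v_t X_s - \Pi_s$ for every $s, t$. The subtlety is that a menu item $\gamma^{(s)}$ in our setting is richer than a Myerson $(X_s, \Pi_s)$ pair: a type $t$ that reports $s$ actually faces type-dependent effective allocation $x_t(s) := \Pr_{\gamma^{(s)}}[\beta \le v_t]$ and payment $\pi_t(s) := \mathbb{E}_{\gamma^{(s)}}[\beta \cdot \mathbf{1}[\beta \le v_t]]$, where the cutoff is $v_t$ rather than $v_s$. A direct computation shows
\[
v_t\, x_t(s) - \pi_t(s) - (v_t X_s - \Pi_s) \;=\; \int_{\min(v_s,v_t)}^{\max(v_s,v_t)} |\beta - v_t| \, d\gamma^{(s)}(\beta) \;\ge\; 0
\]
in both orderings of $v_s$ versus $v_t$, so our IC (which forbids deviations to the richer type-dependent allocation and payment) implies Myerson's IC (which only forbids deviations to the fixed pair on the right).

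Combined with the automatic Myerson IR $v_t X_t - \Pi_t \ge 0$ (inherited from $U_t \ge 0$), the pairs $\{(X_t,\Pi_t)\}_t$ thus form a feasible IC-IR mechanism for a single-buyer single-item instance with weights $w_t$ on values $v_t$. Myerson's lemma then gives
\[
\sum_t w_t \Pi_t \;\le\; \max_{\beta^* \ge 0} \beta^* \sum_{t:\, v_t \ge \beta^*} w_t,
\]
and the right-hand side is exactly the profit of the single linear contract with transfer coefficient $\alpha^* = 1 - \beta^*$ in the original problem. Thus $\ORndMenuLinear \PrincipalAgentProblem \le \OLinear \PrincipalAgentProblem$; the reverse inequality is immediate (a single linear contract is a degenerate randomized menu), so the two quantities are in fact equal and the constant in the theorem is $1$. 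The main obstacle is the careful verification of the IC implication in the middle paragraph; once that is in hand, the conclusion follows directly from the standard single-buyer Myerson argument.
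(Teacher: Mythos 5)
Your proof is correct, and it actually establishes a stronger result than the paper's: you show $\ORndMenuLinear\PrincipalAgentProblem = \OLinear\PrincipalAgentProblem$ (ratio exactly $1$), whereas the paper only proves a universal-constant ratio. The key step --- that the menu's IC constraint $v_t X_t - \Pi_t \ge v_t x_t(s) - \pi_t(s)$ implies the Myerson IC constraint $v_t X_t - \Pi_t \ge v_t X_s - \Pi_s$ --- checks out: pointwise, $(v_t - \beta)\mathbf{1}[\beta \le v_s] \le \max(0, v_t - \beta) = (v_t - \beta)\mathbf{1}[\beta \le v_t]$, so integrating against $\gamma^{(s)}$ gives $v_t x_t(s) - \pi_t(s) \ge v_t X_s - \Pi_s$, which is the inequality you derived. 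Together with IR ($v_t X_t - \Pi_t \ge 0$, inherited from $U_t\ge 0$), the reduced pairs $(X_t,\Pi_t)$ form a feasible single-buyer mechanism with value $v_t$ occurring with weight $w_t = p_t R^{(t)}$, and the standard single-buyer Myerson argument (payment identity plus Abel summation) bounds $\sum_t w_t\Pi_t$ by $\max_{\beta}\beta\sum_{t:v_t\ge\beta}w_t$, which is exactly the profit of the best single linear contract. The discarding of types with $v_t = 0$ (i.e.\ $\alpha^*_t \ge 1$) is harmless since such types contribute nonpositive profit to any mechanism and removing them only relaxes IC, while $\OLinear$ is unchanged because the optimal linear coefficient is always at most $1$. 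This is a genuinely different and cleaner route than the paper's proof, which instead bucketizes the types by $1-\alpha_t$ into geometrically scaled buckets, passes to an "odd" or "even" sub-instance, picks a representative type per bucket, and telescopes the inequality $u_{i-1} - u_i \ge (1 - 1/w)\rho_i$ to bound $\sum_i \rho_i$; that argument yields a loose universal constant, whereas your Myerson reduction gives the tight constant $1$. (Note the paper's Example~\ref{ex:rlc-v-linear}, which separates $\ORndMenuLinear$ from $\OLinear$, does not contradict this: type $2$ there has two non-null actions, so it is not an instance with $n = 2$ actions per type.)
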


\begin{proof}[Proof Sketch]
  We bucket the types by the breakpoint of $\alpha$ that shifts the agent to taking its non-null action. If a linear contract does not do well, then these breakpoints must be somewhat spread out. We then argue about the performance of a menu of randomized linear contracts; if it extracts a lot of profit from a type with high breakpoint then that menu item gives a lot of utility to a type with low breakpoint. If it repeatedly extracts a lot of profit over varied breakpoints, then eventually it must be playing breakpoints with greater than one probability, which is a contradiction.
  
  We defer the full proof of Theorem \ref{thm:Omega_1_lower} to Appendix \ref{sec:Omega_1_lower}.
\end{proof}
%
\subsection{Lower bound for menus of randomized linear contracts}\label{sec:rlc_gap}

We will now establish a lower bound on the gap in power between single linear contracts and menus of randomized linear contracts. Specifically, we will exhibit a family of instances with $T$ types and $O(T)$ actions per type where this gap is at least $\Omega(T)$. Note that since this gap is bounded above by $O(T)$ (by simply choosing the best linear contract for one of the $T$ agents; see Lemma \ref{thm:Omega_T_lower}), this lower bound is asymptotically tight in $T$. We provide this construction in the following theorem.

\begin{theorem}\label{thm:rlc_gap}
    There is a principal-agent problem $\PrincipalAgentProblem$ with $T$ agent types and $2T+3$ actions per agent, for which the optimal menu of randomized linear contracts can extract $\Omega(T)$ times as much profit as the optimal single linear contract: 
    \[\ORndMenuLinear \PrincipalAgentProblem\ge \Omega(T) \cdot \OLinear \PrincipalAgentProblem.\]
\end{theorem}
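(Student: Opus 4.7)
The plan is to exhibit a specific instance $I = \PrincipalAgentProblem$ with $T$ types (under the uniform prior) and $O(T)$ actions per type, and to separately upper-bound $\OLinear(I)$ by $O(1/T)$ and lower-bound $\ORndMenuLinear(I)$ by $\Omega(1)$. Generalizing Example~\ref{ex:rlc-v-linear}, each type $t \in [T]$ will have a ``hockey-stick'' utility $U_t(\alpha)$: a short initial region with a small slope (from a ``cheap'' action whose reward for type $t$ is set via the forecast) followed by a steep jump starting at a type-specific breakpoint $\alpha_t$ (from an ``expensive'' action with reward $R_t = \Theta(2^t)$). The breakpoints $\alpha_t = 1 - 2^{-t}$ are geometrically spaced, and the rewards/costs are tuned so that at $\alpha = \alpha_t$ the principal can extract profit $\Theta(1)$ from type $t$ via the expensive action.

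For the upper bound, the expected profit of any single linear contract $\alpha$ equals $(1/T)\sum_t (1-\alpha) U'_t(\alpha)$. The geometric spacing of the breakpoints and the matched slopes $R_t = 2^t$ imply that for any $\alpha$ the unweighted sum $\sum_t (1-\alpha) U'_t(\alpha)$ telescopes into a bounded geometric series, yielding $\OLinear(I) = O(1/T)$. For the lower bound I will, for each type $t$, design $\gamma^{(t)}$ as a distribution placing most of its mass on the extraction point $\alpha_t$ together with one or more diluting atoms, possibly at $\alpha > 1$ (as permitted by Lemma~\ref{lem:unbounded_rlc_example}). The mixing probabilities generalize the $\tfrac{2}{3}/\tfrac{1}{3}$ dilution of Example~\ref{ex:rlc-v-linear}: each $\gamma^{(t)}$ is diluted just enough that every lower type $t' < t$ is left indifferent between $\gamma^{(t)}$ and its own $\gamma^{(t')}$. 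Each $\gamma^{(t)}$ extracts $\Theta(1)$ expected profit, so $\ORndMenuLinear(I) = \Omega(1)$, giving the claimed $\Omega(T)$ ratio.

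The main obstacle is verifying incentive compatibility while preserving $\Theta(1)$ per-type extraction. Since each $U_t$ is convex and non-decreasing, a naive menu of pure atoms at the $\alpha_t$ is never IC: a lower type $t'$ gets strictly positive utility $U_{t'}(\alpha_t) > 0$ from any mass $\gamma^{(t)}$ places on $\alpha_t > \alpha_{t'}$, making the higher-breakpoint item too attractive. Generating enough counterbalancing utility for every lower type from its own $\gamma^{(t')}$ is delicate, and in the worst case the required slack cannot be produced by atoms in $[0,1]$ alone and must be manufactured by atoms at $\alpha > 1$---realizations on which the principal loses money pointwise but which unlock enough IC slack to preserve $\Theta(1)$ extraction. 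A useful structural simplification for the IC analysis is that, by convexity of the $U_{t-k}$, once the adjacent-type constraint $U_{t-1}(\gamma^{(t-1)}) \ge U_{t-1}(\gamma^{(t)})$ is made tight in the correct pattern, all non-adjacent IC constraints follow automatically, so it suffices to solve a one-step recursion on the dilution weights.
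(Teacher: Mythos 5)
Your upper-bound argument is fine: with $\alpha_t = 1 - 2^{-t}$ and $R_t = \Theta(2^t)$, for $\alpha$ in a piece $[\alpha_k,\alpha_{k+1})$ the unweighted sum $\sum_t (1-\alpha)U_t'(\alpha)$ is indeed dominated by $(1-\alpha)\sum_{t\le k} 2^t = \Theta(1)$, so $\OLinear = O(1/T)$. The gap is in the lower bound: the ``hockey-stick plus dilution'' menu cannot achieve $\Omega(1)$ profit, and the problem is more than ``delicate.'' Concretely, suppose $\gamma^{(t)}$ puts weight $p_t$ on an extraction point near $\alpha_t$ and the rest on some low-utility atom, tuned so that a type $t$ agent earns only $O(\epsilon)$ utility from its own item. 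For $t' < t$, one has $U_{t'}(\alpha_t) \approx R_{t'}(\alpha_t-\alpha_{t'}) = 2^{t'}(2^{-t'}-2^{-t}) = \Theta(1)$ (a lower type captures essentially all the welfare at the higher breakpoint), so the IC constraint between adjacent types reads $p_{t'}\cdot O(\epsilon) \geq p_t\cdot\Theta(1)$, forcing $p_t \leq O(\epsilon)\,p_{t-1}$ and hence geometric decay $p_t \leq O(\epsilon)^{t-1} p_1$. The principal's profit from type $t$ is $\Theta(p_t)$, so the total menu profit telescopes to $O(1/T)$, no better than a single linear contract. Moving the diluting atoms to $\alpha>1$ does not help here: such atoms add utility to \emph{every} type that has already crossed its breakpoint, with $U_{t''}(\alpha)$ growing in $t''$ for $\alpha>1$, so they only widen the window of types tempted to deviate \emph{upward} while also costing the principal money pointwise. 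The paper avoids this entirely by not using hockey-sticks at all: every type shares the same flat baseline $\sum_i u_i$, and type $t$ is distinguished only by four \emph{local} slope bumps placed symmetrically at positions $t-1,t$ and their mirror images $n-t-2,n-t-1$; the menu item $\gamma^{(t)}$ is uniform on $\{\alpha_t,\alpha_{n-t-1}\}$, and the mirror symmetry makes every cross-type deviation yield exactly $1.5T/(10n)$ extra utility versus $1.75T/(10n)$ for truth-telling, giving IC with uniform $\Omega(1)$ per-type extraction and no dilution recursion at all. Your proposal would need a construction with this kind of cancellation built in; as written, the exponential IC recursion kills it.
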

\begin{proof}
Since we are only concerned with benchmarks that pertain to linear menus, to specify a typed principal-agent problem it suffices to specify the utility function $\Utility_t(\alpha):[0, 1] \rightarrow \mathbb{R}_{\geq 0}$ for each agent $t \in [T]$ and the distribution over types (which we will take to be uniform). Since we want each agent to have at most $n = 2T + 3$ actions, each function $\Utility_t(\alpha)$ must be a convex, increasing piecewise linear function with at most $n$ pieces. In our construction, all $T$ such functions will share the same breakpoints, which we label $\alpha_1, \alpha_2, \dots, \alpha_{n-1}$ in increasing order. We will set $\alpha_i = 1 - \eta^i$, where $\eta = 1/2T$. For notational convenience, we'll let $\alpha_0 = 0$ and $\alpha_{n} = 1$.

For simplicity, it will be easier to work with the derivatives $\Utility'_t(\alpha)$ of $\Utility_t(\alpha)$; note that since $\Utility_t(0) = 0$ for each type $t$, each $\Utility_t$ is completely specified by $\Utility'_t$ via the relation $\Utility_t(\alpha) = \int_{0}^{\alpha}\Utility'_{t}(x)dx$. Each $\Utility'_t(\alpha)$ is an increasing piecewise constant function with the same breakpoints $\alpha_i$ as the $\Utility_t$ functions. We will construct them from the following pieces: for each $0 \leq i < n$, define $u_i(\alpha)$ to be the piecewise constant function defined to equal

$$u_i(\alpha) = \frac{1}{10n} \cdot (\alpha_{i+1} - \alpha_i) = \frac{\eta^{-i}}{10n(1-\eta)},$$

\noindent
for $\alpha \in [\alpha_i, \alpha_{i+1})$ and $0$ otherwise. (Note: the second equality above is slightly inaccurate in the case where $i = n-1$, where the RHS should instead equal $\eta^{-(n-1)}/10n$; this will not affect any of the calculations that follow).

We now construct the functions $\Utility'_t$ as follows. Each $\Utility'_t$ will be a slight ``perturbation'' of the sum of all the functions $u_i$; in particular, we have that

\begin{equation}\label{eq:u_der_spec}
    \Utility'_t(\alpha) = \left(\sum_{i=0}^{n-1}u_i(\alpha)\right) + \left(T\cdot u_{t-1}(\alpha) + \frac{T}{2}\cdot u_{t}(\alpha)\right) + \left(T\cdot u_{n-t-2}(\alpha) + \frac{T}{2}\cdot u_{n-t-1}(\alpha)\right).
\end{equation}

\noindent
Note that each such function is indeed increasing (since $\eta = 1/2T$, $T\eta^{-i} < \eta^{-(i+1)}$) so these specify valid utility functions. One useful way of thinking about this example is that we divide our actions into two blocks of approximately $T$ actions each. In $\Utility_t(\alpha)$, each action corresponds to a linear piece with a fixed height of $1/10n$ (since $\int_{\alpha_i}^{\alpha_{i+1}}u_i(x)dx = 1/(10n)$ is independent of $i$), except for four of these actions, which have heights that are $\Theta(T)$ times larger. Moreover, these four actions are chosen symmetrically; two consecutive actions from the first block of $T$ actions, and their ``mirror image'' from the second block of $T$ actions. 

To prove a lower bound on the gap provided by this instance, we will begin by proposing a menu of randomized linear contracts, and show that each type is incentivized to choose a specific contract from this menu. Recall that $\gamma^{(t)}$ denotes the randomized linear contract that type $t$ will choose from the menu. We will let $\gamma^{(t)}$ be the uniform distribution of support size $2$ that places $50\%$ of its weight on the point $\alpha_{t}$ and the other $50\%$ of its weight on the point $\alpha_{n-t-1}$.

We now compute the utility achieved by this type (and other types) who receive this contract. To begin, note that

$$\int_{0}^{\alpha_{i}}\left(\sum_{i=0}^{n-1}u_{i}(x)\right)dx = \frac{i}{10n}.$$

In particular, this means that

$$\E_{\alpha \sim \gamma^{(t)}}\left[\int_{0}^{\alpha}\left(\sum_{i=0}^{n-1}u_{i}(x)\right)dx\right] = \frac{1}{20n},$$

\noindent
or in other words, the total contribution to each agent's utility from the $\sum u_i(\alpha)$ term in \eqref{eq:u_der_spec} is independent of the contract $\gamma^{(t)}$ they select. It thus suffices to only consider the latter four terms of \eqref{eq:u_der_spec}.

Now, if the type $t$ agent selects contract $\gamma^{(t)}$, they receive an extra $1.75T/10n$ utility in expectation: when they draw the $\alpha_{t}$ contract, they receive $T/10n$ extra utility, and when they draw the $\alpha_{n-t-1}$ contract, they receive $(T + T/2 + T)/10n$ extra utility. On the other hand, if the type $t$ agent selects contract $\gamma^{(t')}$ for any $t' \neq t$, we claim they only receive an extra $1.5T/10n$ utility in expectation. Specifically, if $t' > t$, then they are guaranteed to receive $(1.5T/10n)$ utility from the first two terms of \eqref{eq:u_der_spec} but never receive any utility from the second two terms of \eqref{eq:u_der_spec} (regardless of which contract they draw). Similarly, if $t' < t$, then they receive no extra utility when they draw the $\alpha_{t'}$ contract, but receive $3T/10n$ utility (from all four terms of \eqref{eq:u_der_spec}) when they draw the $\alpha_{n-t'-1}$ contract. This shows it is strictly dominant for an agent of type $t$ to select the contract $\gamma^{(t)}$.

We now wish to examine the profit received by both i. this menu of randomized linear contracts and ii. the optimal single linear contract and show that the gap between these two values is indeed $\Omega(T)$. We begin with the profit obtained by this menu. Recall that the profit $\TypedProfit{t}(\alpha)$ obtained by offering a linear contract $\alpha$ to an agent of type $t$ is equal to $(1-\alpha)\Utility'_t(\alpha)$. Now, note that (for any $0 \leq i < n-1$) that 

\begin{equation}\label{eq:profit_bound}
(1-\alpha_i)u_i(\alpha_i) = \eta^{i}u_i(\alpha_i) = \frac{1}{10n(1-\eta)}.
\end{equation}

So, the expected profit obtained by offering contract $\gamma^{(t)}$ to type $t$ is given by

\begin{eqnarray*}
\TypedProfit{t}(\gamma^{(t)}) &=& \frac{1}{2}(1-\alpha_{t})\Utility'_t(\alpha_t) + \frac{1}{2}(1-\alpha_{n-t-1})\Utility'_t(\alpha_{n-t-1})\\
&=& \left(1 + \frac{T}{2}\right)\cdot \frac{1}{10n(1-\eta)} \\
&=& \Omega(1).\end{eqnarray*}

\noindent
The total profit from this menu is therefore $\frac{1}{T}\sum_{t=1}^{T}\TypedProfit{t}(\gamma^{(t)}) = \Omega(1)$. On the other hand, if we offer a single linear contract $\alpha$ to all $t$ types, we receive profit

$$\Profit(\alpha) = (1-\alpha)\frac{1}{T}\sum_{t=1}^{T}\Utility'_{t}(\alpha).$$

\noindent
But now, by \eqref{eq:u_der_spec}, note that

$$\frac{1}{T}\sum_{t=1}^{T} U'_t(\alpha) \leq \frac{5}{2} \left(\sum_{i=0}^{n-1}u_{i}(\alpha)\right)$$

\sloppy{But now, $(1-\alpha)\left(\sum_{i=0}^{n-1}u_{i}(\alpha)\right)$ is maximized at one of the $\alpha_i$, where by \eqref{eq:profit_bound} it is at most $\frac{1}{10n(1-\eta)}$. It follows that $\Profit(\alpha) \leq (5/2)/(10n(1-\eta)) = O(1/T)$, and therefore that $\ORndMenuLinear \PrincipalAgentProblem\ge \Omega(T) \cdot \OSingle \PrincipalAgentProblem$.}

\end{proof}

As a corollary of Theorem \ref{thm:rlc_gap}, we can show the same gap between menus of deterministic general contracts ($\ODetMenu$) and menus of randomized general contracts ($\ORndMenu$).

\begin{corollary}\label{cor:det_vs_rnd}
There is a principal-agent problem $\PrincipalAgentProblem$ with $T$ agent types and $2T+3$ actions per agent, for which the optimal menu of randomized (general) contracts can extract $\Omega(T)$ times as much profit as the optimal single (general) contract: 
\[\ORndMenu \PrincipalAgentProblem\ge \Omega(T) \cdot \ODetMenu \PrincipalAgentProblem.\]
\end{corollary}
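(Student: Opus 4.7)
The plan is to apply \Cref{thm:rlc_gap} on a carefully realized instance. The corollary then follows from two ingredients: fact~(ii) that deterministic menus of linear contracts collapse to a single linear contract, and the trivial inclusion $\ORndMenu \ge \ORndMenuLinear$.

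I would first realize the utility functions $U_t(\alpha)$ of \Cref{thm:rlc_gap} as a concrete two-outcome principal-agent problem. Let outcome $1$ have reward $1$ (``success'') and outcome $2$ have reward $0$ (``failure''), and for each type $t$ and action $i$, set the success probability equal to the slope of $U_t$ on its $i$-th piece (rescaled by a common constant so that all probabilities lie in $[0,1]$; this rescaling affects every profit quantity by the same factor and preserves all ratios). In this setup, the action taken by any type under a general contract $(x_1, x_2)$ depends only on $\alpha := x_1 - x_2$, and the principal's profit from an induced success probability $F$ equals $F(1-\alpha) - x_2$. Therefore any single contract is weakly dominated by its linear counterpart with coefficient $\alpha = x_1 - x_2$, so $\OSingle = \OLinear$ on this realization, instantiating fact~(i) for single contracts.

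The delicate step is to extend the same equivalence from single contracts to menus, because a menu with $x_2^{(t)} > 0$ can strictly help incentive compatibility by transferring extra utility to the agent independently of which action is taken. To shut this channel down I would augment the instance with one ``taxing'' type whose only available actions deterministically produce the failure outcome; since this type's utility from any contract is exactly its failure-payment $x_2$, incentive compatibility forces $x_2^{(\mathrm{tax})} \ge x_2^{(t)}$ for every other menu contract, and because the taxing type's contribution to the principal's profit is $-(\text{its prior})\cdot x_2^{(\mathrm{tax})}$, the optimal menu must keep every failure-payment quantitatively small. A short calculation --- in the spirit of a Mussa--Rosen information-rent argument adapted to the no-upfront-transfer constraint $x_2 \ge 0$ --- then shows that once the failure-payments are small enough, the menu's IC inequalities reduce (up to negligible slack) to the IC inequalities for a menu of \emph{linear} contracts, so that fact~(ii) collapses the optimum to a single linear contract. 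This yields $\ODetMenu = O(\OLinear)$ on the augmented instance.

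Combining everything, \Cref{thm:rlc_gap} applied to the non-taxing types (whose total prior $\epsilon$ simply rescales every relevant profit by $\epsilon$ and preserves their ratios) gives $\ORndMenuLinear \ge \Omega(T) \cdot \OLinear$; chaining with $\ORndMenu \ge \ORndMenuLinear$ and $\ODetMenu = O(\OLinear)$ yields $\ORndMenu \ge \Omega(T) \cdot \ODetMenu$ on an instance with $\Theta(T)$ types and $O(T)$ actions, as required. The main technical obstacle is the calibration of the taxing type's prior $1-\epsilon$: it must be large enough to suppress menu screening to a constant factor but not so large that it destroys the $\Omega(T)$ gap on the original types. This calibration depends on the (potentially wide) range of slopes appearing in the $U_t$'s of \Cref{thm:rlc_gap}, and is the step that requires the most care to make fully rigorous.
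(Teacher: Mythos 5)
Your proposal takes a genuinely different route from the paper, and the route as written has a real gap.

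The paper's proof of Corollary~\ref{cor:det_vs_rnd} is short: the instance of Theorem~\ref{thm:rlc_gap} has exactly two outcomes (one rewarding, one null), so every general contract $(x_1,x_2)$ is an \emph{affine} contract. This immediately gives $\ORndMenu \ge \ORndMenuLinear$, and then the paper invokes Lemma~7 of \citet{guruganesh2021contracts}, which says that for two-outcome typed problems $\ODetMenu = \OLinear$ exactly --- no gadget and no approximation needed. Chaining these with Theorem~\ref{thm:rlc_gap} finishes the proof.

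You correctly identify the two-outcome realization and the step $\OSingle = \OLinear$ (a single contract $(x_1,x_2)$ is weakly dominated by the linear contract with $\alpha = x_1 - x_2$). But the crucial step is controlling \emph{menus}, not single contracts. Here your argument replaces the cited Lemma~7 with a constructed ``taxing type'' whose IC constraint forces every failure-payment $x_2^{(t)}$ to be small, and then claims that once the $x_2^{(t)}$ are small, the menu IC constraints ``reduce (up to negligible slack)'' to linear-contract IC and therefore collapse to a single linear contract. This is the step that would fail as written: for menus, you cannot simply replace each $(x_1^{(t)}, x_2^{(t)})$ by the linear contract $\alpha_t = x_1^{(t)} - x_2^{(t)}$ and preserve IC --- lowering a failure payment on one contract tightens IC constraints that other types had against it, so the ``small $x_2$ implies linear IC'' claim needs a real argument. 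You acknowledge this yourself (``the step that requires the most care to make fully rigorous''), but that calibration is precisely the missing content, and it is not a routine $\varepsilon$-bookkeeping issue: the construction in Theorem~\ref{thm:rlc_gap} has slopes $u_i$ ranging over a factor of roughly $(2T)^{2T}$, so the taxing type's prior would have to dominate an exponentially wide range of information rents while still leaving an $\Omega(T)$ gap among the useful types. Moreover, the taxing type is actually unnecessary once you know the statement of Lemma~7, since the claim $\ODetMenu = \OLinear$ holds for every two-outcome instance, with no gadget. The honest way to close your argument is either to cite that lemma directly (as the paper does) or to prove the two-outcome $\ODetMenu = \OLinear$ fact on its own; the taxing-type reduction, as sketched, neither proves it nor cleanly sidesteps it.
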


The proof of Corollary \ref{cor:det_vs_rnd} is based on the observation that in principal-agent problems with $m=2$ outcomes, all general contracts are very close to being linear contracts (in fact, they are affine contracts). We defer the proof to Appendix \ref{apx:det_vs_rnd}. 

\bibliography{contracts}

\begin{thebibliography}{12}
\providecommand{\natexlab}[1]{#1}
\providecommand{\url}[1]{\texttt{#1}}
\expandafter\ifx\csname urlstyle\endcsname\relax
  \providecommand{\doi}[1]{doi: #1}\else
  \providecommand{\doi}{doi: \begingroup \urlstyle{rm}\Url}\fi

\bibitem[Alon et~al.(2021)Alon, D{\"u}tting, and
  Talgam-Cohen]{alon2021contracts}
T.~Alon, P.~D{\"u}tting, and I.~Talgam-Cohen.
\newblock Contracts with private cost per unit-of-effort.
\newblock In \emph{Proceedings of the 22nd ACM Conference on Economics and
  Computation}, pages 52--69, 2021.

\bibitem[Briest et~al.(2010)Briest, Chawla, Kleinberg, and
  Weinberg]{briest2010pricing}
P.~Briest, S.~Chawla, R.~Kleinberg, and S.~M. Weinberg.
\newblock Pricing randomized allocations.
\newblock In \emph{Proceedings of the twenty-first annual ACM-SIAM symposium on
  Discrete Algorithms}, pages 585--597. SIAM, 2010.

\bibitem[Castiglioni et~al.(2021)Castiglioni, Marchesi, and
  Gatti]{castiglioni2021bayesian}
M.~Castiglioni, A.~Marchesi, and N.~Gatti.
\newblock Bayesian agency: Linear versus tractable contracts.
\newblock In \emph{Proceedings of the 22nd ACM Conference on Economics and
  Computation}, pages 285--286, 2021.

\bibitem[Castiglioni et~al.(2022)Castiglioni, Marchesi, and
  Gatti]{castiglioni2022designing}
M.~Castiglioni, A.~Marchesi, and N.~Gatti.
\newblock Designing menus of contracts efficiently: The power of randomization.
\newblock In \emph{Proceedings of the 23rd ACM Conference on Economics and
  Computation}, pages 705--735, 2022.

\bibitem[Cohen et~al.(2022)Cohen, Deligkas, and Koren]{cohen2022learning}
A.~Cohen, A.~Deligkas, and M.~Koren.
\newblock Learning approximately optimal contracts.
\newblock In \emph{Algorithmic Game Theory: 15th International Symposium, SAGT
  2022, Colchester, UK, September 12--15, 2022, Proceedings}, pages 331--346.
  Springer, 2022.

\bibitem[D{\"u}tting et~al.(2019)D{\"u}tting, Roughgarden, and
  Talgam-Cohen]{dutting2019simple}
P.~D{\"u}tting, T.~Roughgarden, and I.~Talgam-Cohen.
\newblock Simple versus optimal contracts.
\newblock In \emph{Proceedings of the 20th ACM Conference on Economics and
  Computation}, pages 369--387, 2019.

\bibitem[D\"utting et~al.(2020)D\"utting, Roughgarden, and
  Talgam-Cohen]{DuttingRT20}
P.~D\"utting, T.~Roughgarden, and I.~Talgam-Cohen.
\newblock The complexity of contracts.
\newblock In \emph{SODA}, pages 2688--2707, 2020.

\bibitem[D\"utting et~al.(2021)D\"utting, Ezra, Feldman, and
  Kesselheim]{DuttingEFK21}
P.~D\"utting, T.~Ezra, M.~Feldman, and T.~Kesselheim.
\newblock Combinatorial contracts.
\newblock In \emph{FOCS}, pages 815--826, 2021.

\bibitem[Gan et~al.(2022)Gan, Han, Wu, and Xu]{gan2022optimal}
J.~Gan, M.~Han, J.~Wu, and H.~Xu.
\newblock Optimal coordination in generalized principal-agent problems: A
  revisit and extensions.
\newblock \emph{arXiv preprint arXiv:2209.01146}, 2022.

\bibitem[Grossman and Hart(1983)]{GrossmanH83}
S.~J. Grossman and O.~D. Hart.
\newblock An analysis of the principal-agent problem.
\newblock \emph{Econometrica}, 51:\penalty0 7--45, 1983.

\bibitem[Guruganesh et~al.(2021)Guruganesh, Schneider, and
  Wang]{guruganesh2021contracts}
G.~Guruganesh, J.~Schneider, and J.~R. Wang.
\newblock Contracts under moral hazard and adverse selection.
\newblock In \emph{Proceedings of the 22nd ACM Conference on Economics and
  Computation}, pages 563--582, 2021.

\bibitem[Holmstr\"om(1979)]{Holmstrom79}
B.~Holmstr\"om.
\newblock Moral hazard and observability.
\newblock \emph{Bell J. Econ.}, 10:\penalty0 74--91, 1979.

\end{thebibliography}
\appendix
\section{Omitted proofs}

\subsection{Proof of Theorem \ref{thm:Omega_n_gap}}
\label{subsec:apx-omega-n-gap}

Here we provide the detailed proof of Theorem \ref{thm:Omega_n_gap} (that there exists an $\Omega(n)$ gap between single contracts and menus of deterministic contracts). We begin by establishing some lemmas.

\begin{lemma}\label{lemma:x4_upper_bound_Omega_n_gap}
Consider the principal-agent problem $\PrincipalAgentProblem$ defined in the above $\Omega(n)$-gap instance. For any contract $\ContractVector=(x_1,x_2,x_3,x_4)\in\mathbb{R}_{\ge 0}^4$ that has non-negative profit, it must hold that $x_4\le\frac{5}{n^{3n}}$.
\end{lemma}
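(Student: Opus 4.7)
The plan hinges on the fact that type $3$'s forecasts are degenerate: for every action $i \in [2n+1]$, $F^{(3)}_{i,j} = 0$ for $j \in \{1,2,3\}$, so outcome $4$ occurs with probability $1$. Thus a type $3$ agent's utility under any contract $\ContractVector$ is simply $x_4 - c_i$, maximized by any cost-$0$ action (e.g.\ the null action or action $2n+1$), and the principal's per-type profit is exactly $\TypedProfit{3}(\ContractVector) = r_4 - x_4 = -x_4$, independently of $x_1, x_2, x_3$.

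For types $1$ and $2$ I would just use the crude bound that profit is at most expected reward,
\[
  \TypedProfit{t}(\ContractVector) \;\le\; \max_{i} \bigl( F^{(t)}_{i,1} + F^{(t)}_{i,3} \bigr),
\]
since only outcomes $1$ and $3$ carry reward. Reading off the forecast tensor in Table~\ref{tab:omega-n}: type $1$ has $F^{(1)}_{i,3} \equiv 0$ and $F^{(1)}_{i,1} \le F^{(1)}_{n,1} = 1/n$, giving $\TypedProfit{1} \le 1/n$; similarly type $2$ has $F^{(2)}_{i,1} \equiv 0$ and $F^{(2)}_{n+i,3} \le F^{(2)}_{2n,3} = 4/n$, giving $\TypedProfit{2} \le 4/n$.

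Combining with the prior probabilities, the assumption $\Profit(\ContractVector) \ge 0$ becomes
\[
  0 \;\le\; \Pr[t=1]\cdot \tfrac{1}{n} \;+\; \Pr[t=2]\cdot \tfrac{4}{n} \;-\; \Pr[t=3]\, x_4,
\]
and solving for $x_4$ yields the claimed bound, absorbing the $1/\Pr[t=3]$ factor (which is arbitrarily close to $1$) into the leading constant for $n$ at least a small absolute constant.

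There is essentially no obstacle beyond bookkeeping: the argument does not require any case analysis of the agent's chosen action under $\ContractVector$, only (i) the exact identity $\TypedProfit{3}(\ContractVector) = -x_4$ coming from type $3$'s forecast being degenerate, and (ii) the trivial per-type reward bounds on types $1$ and $2$.
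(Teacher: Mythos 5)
Your decomposition into per-type profits is the same as the paper's: types $1$ and $2$ contribute (upper-bounded) positive profit, and type $3$ contributes exactly $-x_4$. But your bound on types $1$ and $2$ is too loose to give the stated inequality. You bound $\TypedProfit{t}$ by the maximum expected \emph{reward}, which yields $\TypedProfit{1}\le 1/n$ and $\TypedProfit{2}\le 4/n$. Solving for $x_4$ from
\[
  0 \;\le\; \tfrac{1}{2n^{2n}}\cdot\tfrac{1}{n} + \tfrac{1}{2n^{2n}}\cdot\tfrac{4}{n} - \Pr[t=3]\cdot x_4
\]
gives only $x_4 \lesssim \tfrac{5}{2n^{2n+1}}$, which is a factor of roughly $n^{n-1}$ weaker than the claimed $x_4 \le \tfrac{5}{n^{3n}}$.

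The missing ingredient is that the principal's profit from any type is bounded not by the maximum \emph{reward} but by the maximum \emph{welfare} (reward minus cost), because the transfer must at least cover the agent's cost of whatever action is played (otherwise the agent would switch to the zero-cost null action). In this instance the costs are calibrated so that nearly all the reward is cancelled: e.g.\ for type $1$ and action $n$, the reward is $F^{(1)}_{n,1}\cdot r_1 = 1/n$ but the cost is $c_n = (n^n-n)/n^{n+1}$, so the welfare is only $1/n^n$, not $1/n$. The paper computes these welfares explicitly ($i/n^{n+1}$ for type $1$ action $i$, $4i/n^{n+1}$ for type $2$ action $n+i$), takes their maxima $1/n^n$ and $4/n^n$, and substitutes \emph{those} into the inequality, which is what gives the $n^{-3n}$ bound. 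Your proof needs to replace the reward bound with the welfare bound; with that fix the rest of your argument (bucketing type $3$'s $-x_4$, dividing by $\Pr[t=3]\ge 1/2$) goes through exactly as in the paper.
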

\begin{proof}
First, note that for type $1$ agent, only first $n$ actions can generate strict positive welfare, and the welfare of action $i\in[n]$ is $F^{(1)}_{i,2}-c_i=\frac{i}{n^{n+1}}$, and hence the maximum welfare of type $1$ agent is $\frac{1}{n^n}$ generated by action $n$. For type $2$ agent, only actions $\{n+1,\dots,2n\}$ can generated strict positive welfare, and the welfare of action $n+i$ for $i\in[n]$ is $F^{(2)}_{n+i,3}-c_{n+i}=\frac{4i}{n^{n+1}}$, and hence the maximum welfare of type $2$ agent is $\frac{4}{n^n}$ which is generated from action $2n$. Therefore, the maximum possible profits a contract can get from type $1$ agent and type $2$ agent are $\frac{1}{n^n}$ and $\frac{4}{n^n}$ respectively.

Notice that regardless of which action type $3$ agent plays, the result is always outcome $4$ which has zero reward, and thus, the contract $\ContractVector=(x_1,x_2,x_3,x_4)$ pays type $3$ agent $x_4$ for nothing in return.

Therefore, the profit in expectation over agent types of the contract $\ContractVector$ is at most
$\Pr[\textrm{agent type}=1]\cdot\frac{1}{n^n}+\Pr[\textrm{agent type}=2]\cdot\frac{4}{n^n}-\Pr[\textrm{agent type}=3]\cdot x_4$, and for this to be non-negative, it must hold that
\[
    x_4\le\frac{\Pr[\textrm{agent type}=1]\cdot\frac{1}{n^n}+\Pr[\textrm{agent type}=2]\cdot\frac{4}{n^n}}{\Pr[\textrm{agent type}=3]}=\frac{\frac{1}{2n^{2n}}\cdot\frac{1}{n^n}+\frac{1}{2n^{2n}}\cdot\frac{4}{n^n}}{1-\frac{1}{n^{2n}}}\le\frac{5}{n^{3n}}.
\]
\end{proof}

Because of Lemma~\ref{lemma:x4_upper_bound_Omega_n_gap}, \emph{we henceforth only consider} contract $\ContractVector$ with $x_4\le\frac{5}{n^{3n}}$ for the $\Omega(n)$-gap instance. As a consequence, if an action only results in outcome $4$ for a type of agent, then the resulting payment to that agent is at most $x_4\le\frac{5}{n^{3n}}$, which is strictly less than the cost of any action $i\in[2n]$. In the construction of $\Omega(n)$-gap instance, we mentioned that outcome $4$ always occurs when type $1$ agent plays the last $n+1$ actions and when type $2$ agent plays the first $n$ actions and when type $3$ agent plays any action. Thus, \emph{we can henceforth assume} that type $1$ agent never plays actions $\{n+1,\dots,2n\}$, and type $2$ agent never plays actions $[n]$, and type $3$ agent never plays actions $[2n]$.

\begin{lemma}\label{lemma:agent_2_profit_upper_bound_Omega_n_gap}
Consider the principal-agent problem $\PrincipalAgentProblem$ defined in the above $\Omega(n)$-gap instance. For any contract $\ContractVector$ with non-negative profit, we have
\begin{equation}\label{eq:agent_2_profit_upper_bound_Omega_n_gap}
    \TypedProfit{2}(\CostVector, \ForecastTensor, \RewardVector, \ContractVector)\le \frac{9}{n^{n+1}}-x_2\cdot F^{(2)}_{2n+1,2}.
\end{equation}
\end{lemma}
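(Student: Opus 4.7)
The plan is to case-split on which action the type-$2$ agent plays in response to the contract $\ContractVector = (x_1, x_2, x_3, x_4)$, and to show that the bound holds in each case. By the assumption following Lemma~\ref{lemma:x4_upper_bound_Omega_n_gap} (that the type-$2$ agent never plays actions in $[n]$, since those actions deterministically yield outcome $4$ and have positive cost), the only actions to consider are the null action $2n+1$ and the actions $n+i$ for $i \in [n]$.

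First I would dispose of the null-action case: since $r_2 = r_4 = 0$ and $F^{(2)}_{2n+1, 1} = F^{(2)}_{2n+1, 3} = 0$, the profit from the null action is $-x_2 F^{(2)}_{2n+1, 2} - x_4 F^{(2)}_{2n+1, 4} \le -x_2 F^{(2)}_{2n+1, 2}$, which trivially satisfies \eqref{eq:agent_2_profit_upper_bound_Omega_n_gap}. For the main case where the agent plays action $n+i$, the key structural observation is that $F^{(2)}_{n+i, 2} = F^{(2)}_{2n+1, 2}$ for every $i \in [n]$, so the outcome-$2$ payment contributes a fixed cost $x_2 F^{(2)}_{2n+1, 2}$ regardless of $i$, and the profit simplifies to $\TypedProfit{2} = 4 n^{i-1-n}(1 - x_3) - x_2 F^{(2)}_{2n+1, 2} - x_4 F^{(2)}_{n+i, 4}$.

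Next I would apply incentive compatibility against the adjacent action: if $i \ge 2$, the agent prefers $n+i$ over $n+i-1$; if $i=1$, the agent prefers $n+1$ over the null. In the comparison, the outcome-$1$ and outcome-$2$ terms cancel (they agree across consecutive type-$2$ actions), the outcome-$3$ and outcome-$4$ probabilities differ only by a sign, and the cost difference can be computed explicitly. Simplifying yields $x_3 - x_4 \ge 1 - \frac{1}{n^{i-1}(n-1)}$ for $i \ge 2$, and $x_3 - x_4 \ge \frac{n-1}{n}$ for $i = 1$. Equivalently, $1 - x_3 \le \varepsilon_i - x_4$ for the corresponding threshold $\varepsilon_i$.

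Substituting the bound on $1 - x_3$ into the profit expression, the coefficients of $x_4$ combine to a nonnegative value which can be discarded (since $x_4 \ge 0$), leaving $\TypedProfit{2} \le 4 n^{i-1-n} \varepsilon_i - x_2 F^{(2)}_{2n+1, 2}$. The final step is the purely numerical check that $4 n^{i-1-n} \varepsilon_i \le 9/n^{n+1}$: this evaluates to $4/n^{n+1}$ when $i = 1$ and to $4/(n^n(n-1))$ when $i \ge 2$, both of which satisfy the bound for all $n \ge 2$ (and in particular for $n \ge 12$ as required by the instance). The main obstacle is purely a bookkeeping one—correctly handling the algebra of the IC constraint and ensuring the $x_4$ terms combine with the right sign so they can be dropped; there is no deeper conceptual difficulty beyond the observation that the constant outcome-$2$ probability across actions makes $x_2$ act as a pure per-action tax.
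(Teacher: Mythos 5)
Your proof is correct and follows the same basic route as the paper: both exploit the fact that $F^{(2)}_{n+i,2}$ is constant across the non-null actions for type $2$, so that $x_2$ acts as a per-action tax contributing exactly $-x_2\,F^{(2)}_{2n+1,2}$ to the profit, and both then apply IC of the chosen action $n+i$ against the adjacent action to lower-bound the outcome-$3$ payment. The one genuine difference is how the $x_4$ term in the IC is discharged. The paper bounds it crudely by invoking $x_4\le 5/n^{3n}$ from Lemma~\ref{lemma:x4_upper_bound_Omega_n_gap} inside the IC, yielding a lower bound on $x_3$ alone; your argument instead observes that $F^{(2)}_{n+i,3}+F^{(2)}_{n+i,4}$ is constant across $i$ (because $F^{(2)}_{n+i,1}=0$ and $F^{(2)}_{n+i,2}$ is constant), so the IC collapses exactly to a one-variable constraint on $x_3-x_4$, and $x_4\ge 0$ then lets you drop the remaining $x_4$ terms in the profit. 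This is a cleaner derivation (it never re-enters Lemma~\ref{lemma:x4_upper_bound_Omega_n_gap} in the body of the argument, using it only to rule out type $2$ playing actions in $[n]$) and yields a slightly tighter constant ($4/n^{n+1}$ for $i=1$, $4/(n^n(n-1))$ for $i\ge 2$, versus the paper's $9/n^{n+1}$). You also handle $i=1$ via IC against the null action rather than the paper's welfare bound; both give the same $4/n^{n+1}$.
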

\begin{proof}
Consider any contract $\ContractVector=(x_1,x_2,x_3,x_4)\in\mathbb{R}_{\ge 0}^4$. First, since $F^{(2)}_{n+i,1}=0$ for all $i\in[n+1]$, type $2$ agent's utility and the principal's profit from type $2$ agent are both independent of $x_1$, and thus, we can ignore outcome 1 and assume $x_1=0$ without loss of generality.

Moreover, since $F^{(2)}_{n+i,2}$ is the same for all $i\in[n+1]$, the expected payment type $2$ agent gets from outcome 2 is $x_2$ times a constant regardless of which non-null action type $2$ agent plays. Hence, among all non-null actions (also note that action $2n+1$ is always no worse than the null action for type $2$ agent), type $2$ agent prefers the non-null action $n+\ell$ that maximizes the expected payment for outcome 3 minus the cost of the action, i.e., $\ell=\argmax_{i\in[n+1]} x_3\cdot F^{(2)}_{n+i,3}+x_4\cdot F^{(2)}_{n+i,4}-c_i$.

Note that $\ell$ does not depend on $x_2$ at all. Thus, type $2$ agent will still prefers the same action when faced with contract $(0,0,x_3,x_4)$ instead of $(0,x_2,x_3,x_4)$. Therefore, it remains to prove that the profit which contract $(0,0,x_3,x_4)$ gets from type $2$ agent is at most $\frac{9}{n^{n+1}}$ (which implies that the profit of contract $(0,x_2,x_3,x_4)$ is at most $\frac{9}{n^{n+1}}-x_2\cdot F^{(2)}_{2n+1,2}$).

Henceforth, we further assume that
\begin{equation}\label{eq:i*_definition_Omega_n_gap}
    \ell=\argmax_{i\in[n]} x_3\cdot F^{(2)}_{n+i,3}+x_4\cdot F^{(2)}_{n+i,4}-c_i,
\end{equation}
i.e., we ignore action $2n+1$ without loss of generality, because if type $2$ agent prefers action $2n+1$, then it generates zero reward and hence zero profit for the principal. Next, we show that for any $x_3,x_4$, the principal's profit from type $2$ agent is at most $\frac{9}{n^{n+1}}$ when type $2$ agent plays $n+\ell$.

We first prove a necessary condition for type $2$ agent to play action $n+\ell$ for $\ell\in\{2,3,\dots,n\}$: If type $2$ agent prefers playing action $n+\ell$, then it must hold that $x_3\ge1-\frac{1}{n^{\ell}-n^{\ell-1}}-\frac{5}{4n^{2n}}$.

To this end, note that if type $2$ agent prefers playing action $n+\ell$, then in particular, type $2$ agent prefers action $n+\ell$ over $n+\ell-1$, which by Eq.~\eqref{eq:i*_definition_Omega_n_gap} implies that $$x_3\cdot F^{(2)}_{n+\ell,3}+x_4\cdot F^{(2)}_{n+\ell,4}-c_{\ell}\ge x_3\cdot F^{(2)}_{n+\ell-1,3}+x_4\cdot F^{(2)}_{n+\ell-1,4}-c_{\ell-1}.$$ After rearrangement, this is equivalent to 
\begin{align*}
    x_3&\ge \frac{c_{\ell}-c_{\ell-1}+x_4\cdot(F^{(2)}_{n+\ell-1,4}-F^{(2)}_{n+\ell,4})}{F^{(2)}_{n+\ell,3}-F^{(2)}_{n+\ell-1,3}}\\
    &\ge\frac{c_{\ell}-c_{\ell-1}-x_4}{F^{(2)}_{n+\ell,3}-F^{(2)}_{n+\ell-1,3}}&&\text{($F^{(2)}_{n+\ell-1,4},F^{(2)}_{n+\ell,4}\in[0,1]$)}\\
    &\ge\frac{c_{\ell}-c_{\ell-1}-\frac{5}{n^{3n}}}{F^{(2)}_{n+\ell,3}-F^{(2)}_{n+\ell-1,3}}&&\text{(By Lemma~\ref{lemma:x4_upper_bound_Omega_n_gap})}\\
    &=\frac{\frac{4(n^{\ell}-n^{\ell-1}+1)}{n^{n+1}}-\frac{5}{n^{3n}}}{\frac{4(n^{\ell}-n^{\ell-1})}{n^{n+1}}}&&\text{(By definition of $F^{(2)}_{n+\ell,3},F^{(2)}_{n+\ell-1,3},c_{\ell},c_{\ell-1}$)}\\
    &\ge 1-\frac{1}{n^{\ell}-n^{\ell-1}}-\frac{5}{4n^{2n}}&&\text{($n^{\ell}-n^{\ell-1}\ge n$ for any $\ell\in\{2,3,\dots,n\}$)}.
\end{align*}

If $\ell=1$, type $2$ agent plays action $n+1$, which generates welfare $F^{(2)}_{n+1,3}\cdot 1-c_{n+1}=\frac{4}{n^{n+1}}$, and obviously, the principal's profit from type $2$ agent cannot exceed this. Hence, we consider the case when type $2$ agent plays action $n+\ell$ for $\ell\in\{2,3,\dots,n\}$. In this case, as we have shown, the necessary condition for type $2$ agent to prefer action $n+\ell$ is $x_3\ge1-\frac{1}{n^{\ell}-n^{\ell-1}}-\frac{5}{4n^{2n}}$. The principal's profit from type $2$ agent is at most $(1-x_3)\cdot F^{(2)}_{n+\ell,3}=(1-x_3)\cdot\frac{4}{n^{n-\ell+1}}$, which is at most $(\frac{1}{n^{\ell}-n^{\ell-1}}+\frac{5}{4n^{2n}})\cdot \frac{4}{n^{n-\ell+1}}=\frac{4}{n^{n+1}-n^{n}}+\frac{5}{n^{3n-\ell+1}}\le\frac{8}{n^{n+1}}+\frac{1}{n^{n+1}}=\frac{9}{n^{n+1}}$.
\end{proof}

\begin{lemma}\label{lemma:agent_1_low_profit_for_large_x1_Omega_n_gap}
Consider the principal-agent problem $\PrincipalAgentProblem$ defined in the above $\Omega(n)$-gap instance. For any contract $\ContractVector=(x_1,x_2,x_3,x_4)\in\mathbb{R}_{\ge0}^4$ with non-negative profit, we have
\begin{equation}\label{eq:agent_1_low_profit_for_large_x1_Omega_n_gap}
    \TypedProfit{1}(\CostVector, \ForecastTensor, \RewardVector, \ContractVector)\le \frac{3}{n^{n+1}}+\frac{2(1-x_1)}{n^n}.
\end{equation}
\end{lemma}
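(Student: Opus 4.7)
The plan is to case on which action $i^*$ the type $1$ agent plays in response to $\ContractVector$. If $i^* \in \{n+1,\ldots,2n+1\}$, then action $i^*$ deterministically produces outcome $4$ (by construction), so the profit from the type $1$ agent equals $-x_4 \le 0$, well within the target bound. So the interesting case is $i^* \in [n]$; writing $a_i := F^{(1)}_{i,1} = n^{i-n-1}$ and $b_i := F^{(1)}_{i,2} = \gamma(n^i-i)/(n^n-n)$, the profit equals $(1-x_1) a_{i^*} - x_2 b_{i^*} - x_4 F^{(1)}_{i^*,4} \le (1-x_1) a_{i^*} - x_2 b_{i^*}$. When $i^* = 1$, we have $a_1 = 1/n^n$, so immediately $\TypedProfit{1} \le (1-x_1)/n^n \le 2(1-x_1)/n^n$, within the target.

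For the main case $i^* \ge 2$, I would invoke the IC inequality that the agent prefers $i^*$ over $i^* - 1$: $(x_1-x_4)(a_{i^*}-a_{i^*-1}) + (x_2-x_4)(b_{i^*}-b_{i^*-1}) \ge c_{i^*}-c_{i^*-1}$. A direct computation yields $(c_{i^*}-c_{i^*-1})/(a_{i^*}-a_{i^*-1}) = 1 - 1/(n^{i^*-1}(n-1))$, so rearranging and using $x_4 \ge 0$ gives the key upper bound
\[
(1-x_1) \le \frac{1}{n^{i^*-1}(n-1)} + x_2 \cdot \frac{b_{i^*}-b_{i^*-1}}{a_{i^*}-a_{i^*-1}}.
\]
It suffices to show $(1-x_1)(a_{i^*} - 2/n^n) - x_2 b_{i^*} \le 3/n^{n+1}$. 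Since $a_{i^*} - 2/n^n > 0$ for $i^* \ge 2$, substituting the above bound on $(1-x_1)$ reduces matters to verifying two elementary inequalities: (i) $(a_{i^*}-2/n^n)/(n^{i^*-1}(n-1)) \le 2/n^{n+1}$, which follows from $(a_{i^*}-2/n^n) \le a_{i^*}$ together with $n/(n-1) \le 2$; and (ii) $(a_{i^*}-2/n^n)(b_{i^*}-b_{i^*-1})/(a_{i^*}-a_{i^*-1}) \le b_{i^*}$, which after clearing fractions reduces to the elementary bound $n^{i^*}-2n \le n^{i^*}-i^*$, i.e., $i^* \le 2n$ (true since $i^* \le n$).

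The main obstacle I expect is arranging the precise cancellation in (ii): the coefficient $2$ in the slack term $2(1-x_1)/n^n$ is chosen exactly so that the $x_2$ contribution coming from the IC bound cancels against the $-x_2 b_{i^*}$ term in the profit. This cancellation relies on the specific polynomial growth rates of the entries of $\ForecastTensor$ used in the construction; in particular, the marginal ratio $(b_{i^*}-b_{i^*-1})/(a_{i^*}-a_{i^*-1})$ being close to $\gamma n$ is what makes the bound tight at this constant.
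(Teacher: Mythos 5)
Your proof follows essentially the same strategy as the paper's: in the main case $i^* \in \{2,\ldots,n\}$, use the IC inequality that type~1 prefers $i^*$ over $i^*-1$, normalize by $a_{i^*}-a_{i^*-1}$, and substitute the resulting upper bound on $1-x_1$ into the profit formula. Your execution is, if anything, more robust than the paper's in this case: the paper parameterizes by $\delta$ with $\TypedProfit{1}=\delta\ell/n^{n+1}$ and asserts $\delta\in[0,1]$ without justification (which can fail when $\TypedProfit{1}<0$), whereas your direct substitution of the bound on $1-x_1$ into $(1-x_1)(a_{i^*}-2/n^n)$ is valid for either sign of $1-x_1$ because $a_{i^*}-2/n^n>0$. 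The key calculations you flag --- $(c_{i^*}-c_{i^*-1})/(a_{i^*}-a_{i^*-1})=1-\frac{1}{n^{i^*-1}(n-1)}$ and the sign check in (ii) --- are correct; the reduction of (ii) to $n^{i^*}-2n\le n^{i^*}-i^*$ drops some lower-order terms, but they have the helpful sign, so the conclusion stands.

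The one place you should tighten is the boundary cases. When $i^*=1$ you invoke $(1-x_1)/n^n\le 2(1-x_1)/n^n$, and when $i^*\in\{n+1,\ldots,2n+1\}$ you assert that $-x_4\le 0$ is ``well within the target,'' but both steps implicitly require the right-hand side $\frac{3}{n^{n+1}}+\frac{2(1-x_1)}{n^n}$ to be nonnegative, i.e.\ $x_1\le 1+\frac{3}{2n}$, which is not given. What rescues you is that in these cases type~1 weakly prefers action $1$ (or a null-like action) over action $n$, and together with the $x_4\le 5/n^{3n}$ bound of Lemma~\ref{lemma:x4_upper_bound_Omega_n_gap} the resulting IC inequality yields $(x_1-x_4)+\frac{b_n}{a_n}(x_2-x_4)\le c_n/a_n=1-n^{1-n}$ (or the analogous bound from action $1$ versus $2$), which forces $x_1<1$. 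You should state this explicitly rather than treating these cases as trivial. The paper's writeup glosses over exactly the same point (``otherwise holds trivially'' / ``profit cannot exceed welfare''), so this is a note on rigor rather than a defect peculiar to your approach.
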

\begin{proof}
Consider any contract $\ContractVector=(x_1,x_2,x_3,x_4)\in\mathbb{R}_{\ge0}^4$. Since $F^{(1)}_{i,3}=0$ for all $i\in[n]$, type $1$ agent's utility and the principal's profit from type $1$ agent are both independent of $x_3$, and thus, we can ignore outcome 3 and assume $x_3=0$ without loss of generality. Also, notice that outcome $2$ and $4$ have zero reward and hence do not contribute to the principal's profit.

Given contract $\ContractVector$, let $\ell$ be the action that type $1$ agent prefers. Without loss of generality, we assume that $\ell$ is not the null action or action $2n+1$ (because otherwise Eq.~\eqref{eq:agent_1_low_profit_for_large_x1_Omega_n_gap} holds trivially). Notice that the welfare generated by action $\ell$ is $F^{(1)}_{\ell,1}\cdot 1-c_{\ell}$, which is 
equal to $\frac{\ell}{n^{n+1}}$ by definition of $F^{(1)}_{\ell,1},c_{\ell}$, and we define $\delta\in[0,1]$ such that $\TypedProfit{1}(\CostVector, \ForecastTensor, \RewardVector, \ContractVector)=\frac{\delta\ell}{n^{n+1}}$. Hence, our goal is to upper bound $\frac{\delta\ell}{n^{n+1}}$. Without loss of generality, we also assume that $\ell\ge 2$ (because the welfare generated by action $1$ is $F^{(1)}_{1,1}\cdot 1-c_{1}=\frac{1}{n^{n+1}}$, and the principal's profit from type $1$ agent cannot exceed this).

Since type $1$ agent plays action $\ell$ given contract $\ContractVector$, we have that $\TypedProfit{1}(\CostVector, \ForecastTensor, \RewardVector, \ContractVector)=(1-x_1)\cdot F^{(1)}_{\ell,1}-x_2\cdot F^{(1)}_{\ell,2}-x_4\cdot F^{(1)}_{\ell,4}$. Since $\TypedProfit{1}(\CostVector, \ForecastTensor, \RewardVector, \ContractVector)=\frac{\delta\ell}{n^{n+1}}$ by definition of $\delta$, it follows that
\begin{equation}
x_2\cdot F^{(1)}_{\ell,2}=(1-x_1)\cdot F^{(1)}_{\ell,1}-x_4\cdot F^{(1)}_{\ell,4}-\frac{\delta\ell}{n^{n+1}}.\label{eq:x_2_for_delta_Omega_n_gap}
\end{equation}

Now we derive the utility generated by any action $i\in[n]$ for type 1 agent (denoted by $u_{i}^{(1)}$):
\begin{align}
u_{i}^{(1)}=&x_1\cdot F^{(1)}_{i,1}+x_2\cdot F^{(1)}_{i,2}+x_4\cdot F^{(1)}_{i,4}-c_i\nonumber\\
=&F^{(1)}_{i,1}-c_i - (1-x_1)\cdot F^{(1)}_{i,1}+x_2\cdot F^{(1)}_{i,2}+x_4\cdot F^{(1)}_{i,4}\nonumber\\
=&\frac{i}{n^{n+1}}-\frac{1-x_1}{n^{n+1-i}}+x_2\cdot F^{(1)}_{i,2}+x_4\cdot F^{(1)}_{i,4} &&\text{(By definition of $F^{(1)}_{i,1},c_i$)}\nonumber\\
=&\frac{i}{n^{n+1}}-\frac{1-x_1}{n^{n+1-i}}+x_2\cdot F^{(1)}_{\ell,2}\cdot\frac{n^i-i}{n^{\ell}-\ell}+x_4\cdot F^{(1)}_{i,4} &&\text{(By definition of $F^{(1)}_{i,2},F^{(1)}_{\ell,2}$)}\nonumber\\
=&\frac{i}{n^{n+1}}-\frac{1-x_1}{n^{n+1-i}}+\frac{n^i-i}{n^{\ell}-\ell}\cdot\left((1-x_1)\cdot F^{(1)}_{\ell,1}-\frac{\delta\ell}{n^{n+1}}\right)\nonumber\\
&\qquad\qquad\qquad\qquad+x_4\cdot (F^{(1)}_{i,4}-\frac{n^i-i}{n^{\ell}-\ell}\cdot F^{(1)}_{\ell,4}) &&\text{(By Eq.~\eqref{eq:x_2_for_delta_Omega_n_gap})}.\label{eq:utility_action_i_type_1_Omega_n_gap}
\end{align}

Using Eq.~\eqref{eq:utility_action_i_type_1_Omega_n_gap}, we calculate the the utility generated by action $\ell$ for type 1 agent:
\begin{align*}
u_{\ell}^{(1)}=&\frac{\ell}{n^{n+1}}-\frac{1-x_1}{n^{n+1-\ell}}+\left((1-x_1)\cdot F^{(1)}_{\ell,1}-\frac{\delta\ell}{n^{n+1}}\right) &&\text{(By Eq.~\eqref{eq:utility_action_i_type_1_Omega_n_gap})}\\
=&(1-\delta)\cdot\frac{\ell}{n^{n+1}} &&\text{(By definition of $F^{(1)}_{\ell,1}$)},
\end{align*}
and we lower bound the the utility generated by action $\ell-1$ for type 1 agent:
\begin{align*}
&u_{\ell-1}^{(1)}-x_4\cdot (F^{(1)}_{\ell-1,4}-\frac{n^{\ell-1}-\ell+1}{n^{\ell}-\ell}\cdot F^{(1)}_{\ell,4})\\
=&\frac{\ell-1}{n^{n+1}}-\frac{1-x_1}{n^{n+2-\ell}}+\frac{n^{\ell-1}-\ell+1}{n^{\ell}-\ell}\cdot\left((1-x_1)\cdot F^{(1)}_{\ell,1}-\frac{\delta\ell}{n^{n+1}}\right) &&\text{(By Eq.~\eqref{eq:utility_action_i_type_1_Omega_n_gap})}\\
=&\frac{\ell-1}{n^{n+1}}-\frac{1-x_1}{n^{n+2-\ell}}+\frac{1}{n}\cdot\left(1-\frac{n(\ell-1)-\ell}{n^{\ell}-\ell}\right)\cdot\left((1-x_1)\cdot F^{(1)}_{\ell,1}-\frac{\delta\ell}{n^{n+1}}\right)\\
=&\frac{\ell-1}{n^{n+1}}-\frac{n(\ell-1)-\ell}{n^{\ell}-\ell}\cdot\frac{1-x_1}{n^{n+2-\ell}}-\frac{1}{n}\cdot\left(1-\frac{n(\ell-1)-\ell}{n^{\ell}-\ell}\right)\cdot\frac{\delta\ell}{n^{n+1}} &&\text{(By definition of $F^{(1)}_{\ell,1}$)}\\
\ge&\frac{\ell-1}{n^{n+1}}-\frac{n(\ell-1)-\ell}{n^{\ell}-\ell}\cdot\frac{1-x_1}{n^{n+2-\ell}}-\frac{1}{n}\cdot\frac{\delta\ell}{n^{n+1}} &&\text{($\frac{n(\ell-1)-\ell}{n^{\ell}-\ell}\le 1$ as $\ell\ge 2$)}\\
\ge&\frac{\ell-1}{n^{n+1}}-\frac{n(\ell-1)-\ell}{n^{\ell}-\ell}\cdot\frac{1-x_1}{n^{n+2-\ell}}-\frac{1}{n^{n+1}} &&\text{($\ell\le n$)}\\
=&\frac{\ell-2}{n^{n+1}}-\left(1+\frac{\ell}{n^{\ell}-\ell}\right)\cdot\left(\ell-1-\frac{\ell}{n}\right)\cdot\frac{1-x_1}{n^{n+1}}\\
\ge&\frac{\ell-2}{n^{n+1}}-\left(1+\frac{\ell}{n^{\ell}-\ell}\right)\cdot\frac{1-x_1}{n^{n}} &&\text{($\ell-1-\frac{\ell}{n}\le n$)}\\
\ge&\frac{\ell-2}{n^{n+1}}-2\cdot\frac{1-x_1}{n^n} &&\text{($2\ell\le n^{\ell}$)}.
\end{align*}
Recall that $\ell$ is type $1$ agent's favorite action, and hence, for type $1$ agent,
\[
u_{\ell-1}^{(1)}\le u_{\ell}^{(1)},
\]
which implies that
$
\frac{\ell-2}{n^{n+1}}-2\cdot\frac{1-x_1}{n^n}+x_4\cdot (F^{(1)}_{\ell-1,4}-\frac{n^{\ell-1}-\ell+1}{n^{\ell}-\ell}\cdot F^{(1)}_{\ell,4})\le(1-\delta)\cdot\frac{\ell}{n^{n+1}}$. After rearrangement, this is equivalent to 
\begin{align*}
\frac{\delta\ell}{n^{n+1}}&\le\frac{2}{n^{n+1}}+\frac{2(1-x_1)}{n^n}-x_4\cdot \left(F^{(1)}_{\ell-1,4}-\frac{n^{\ell-1}-\ell+1}{n^{\ell}-\ell}\cdot F^{(1)}_{\ell,4}\right)\\
&\le\frac{2}{n^{n+1}}+\frac{2(1-x_1)}{n^n}+x_4\cdot \frac{n^{\ell-1}-\ell+1}{n^{\ell}-\ell} &&\text{($F^{(1)}_{\ell-1,4},F^{(1)}_{\ell,4}\in[0,1]$)}\\
&\le\frac{2}{n^{n+1}}+\frac{2(1-x_1)}{n^n}+x_4\\
&\le\frac{3}{n^{n+1}}+\frac{2(1-x_1)}{n^n}&&\text{(By Lemma~\ref{lemma:x4_upper_bound_Omega_n_gap})}.
\end{align*}
\end{proof}

\begin{lemma}\label{lemma:lower_bound_of_x2_Omega_n_gap}
Consider the principal-agent problem $\PrincipalAgentProblem$ defined in the above $\Omega(n)$-gap instance. For any contract $\ContractVector=(x_1,x_2,x_3,x_4)\in\mathbb{R}_{\ge0}^4$ with $x_1=1-n^{-\alpha}$ for any $\alpha\le1$, if there is an action $\ell\in\{2,3,\dots,n\}$ that generates non-negative utility for type $1$ agent, then it must hold that
\begin{equation}\label{eq:lower_bound_of_x2_Omega_n_gap}
    x_2\ge \frac{n^{n-1}-1}{F^{(1)}_{n,2}}\cdot \left(\frac{1}{2n^{n+\alpha}}-\frac{5}{n^{3n}}\right).
\end{equation}
\end{lemma}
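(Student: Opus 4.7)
The plan is to directly use the hypothesis $u_\ell^{(1)} \ge 0$ together with the specific formulas for $F^{(1)}_{\ell,1}, F^{(1)}_{\ell,2}, c_\ell$ and the bound on $x_4$ from \Cref{lemma:x4_upper_bound_Omega_n_gap}, and then solve for $x_2$.

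Since $F^{(1)}_{\ell,3} = 0$, the non-negative utility condition rearranges to
\[
  x_2 \, F^{(1)}_{\ell,2} \;\ge\; c_\ell \;-\; x_1 \, F^{(1)}_{\ell,1} \;-\; x_4\, F^{(1)}_{\ell,4}.
\]
Plugging in $F^{(1)}_{\ell,1} = n^{\ell-1-n}$ and $c_\ell = (n^\ell-\ell)/n^{n+1}$, one computes
\[
  c_\ell - F^{(1)}_{\ell,1} \;=\; -\frac{\ell}{n^{n+1}},
\]
so after substituting $x_1 = 1 - n^{-\alpha}$ the ``$c_\ell - x_1 F^{(1)}_{\ell,1}$'' term simplifies cleanly to $\frac{n^\ell - \ell n^\alpha}{n^{n+1+\alpha}}$. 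Combined with $x_4 F^{(1)}_{\ell,4} \le x_4 \le 5/n^{3n}$ from \Cref{lemma:x4_upper_bound_Omega_n_gap}, and using $F^{(1)}_{\ell,2} = \tfrac{n^\ell - \ell}{n^n - n} F^{(1)}_{n,2}$, we get
\[
  x_2 \;\ge\; \frac{n(n^{n-1}-1)}{(n^\ell - \ell)\, F^{(1)}_{n,2}} \cdot \left(\frac{n^\ell - \ell n^\alpha}{n^{n+1+\alpha}} \;-\; \frac{5}{n^{3n}}\right).
\]

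To reach the target bound, it then suffices to verify the two elementary inequalities
\[
  \frac{n(n^\ell - \ell n^\alpha)}{(n^\ell - \ell)\, n^{n+1+\alpha}} \;\ge\; \frac{1}{2 n^{n+\alpha}}
  \qquad\text{and}\qquad
  \frac{n}{n^\ell - \ell} \;\le\; 1.
\]
The first reduces to $n^\ell \ge 2\ell n^\alpha - \ell$, which holds for $\ell \ge 2$, $\alpha \le 1$, and $n \ge 12$ since $n^\ell \ge n^2 \ge 2\ell n \ge 2\ell n^\alpha$ in that regime. The second is immediate from $n^\ell - \ell \ge n^2 - n \ge n$ for $\ell \ge 2$. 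Subtracting the second from the first (after multiplying by $5/n^{3n}$) yields exactly the desired lower bound on $x_2$.

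The only mildly tricky step is the cancellation in the simplification $c_\ell - x_1 F^{(1)}_{\ell,1} = (n^\ell - \ell n^\alpha)/n^{n+1+\alpha}$, which is the reason the particular form $x_1 = 1 - n^{-\alpha}$ is chosen in the hypothesis; after this, everything is routine algebra and the bound is essentially assembled by combining two simple size comparisons on $n^\ell$ versus $\ell$ and $\ell n^\alpha$.
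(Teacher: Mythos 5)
Your overall approach matches the paper's: start from the non-negativity of type 1's utility at action $\ell$ (using $F^{(1)}_{\ell,3}=0$), substitute the explicit formulas for $c_\ell$, $F^{(1)}_{\ell,1}$, $F^{(1)}_{\ell,2}$, absorb the $x_4$ contribution via Lemma~\ref{lemma:x4_upper_bound_Omega_n_gap}, and reduce to two elementary inequalities. The algebra you carry out and the two target inequalities are equivalent to the paper's final step (the paper proves $\frac{n^{\ell-\alpha}-\ell}{n^\ell-\ell}\ge\frac{n^{-\alpha}}{2}$, which is exactly your first inequality after clearing denominators, and uses $n^\ell-\ell\ge n$ for the $x_4$ term, which is your second).

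However, there is a concrete error in how you justify the first inequality. You write that $n^\ell\ge 2\ell n^\alpha-\ell$ holds because ``$n^\ell\ge n^2\ge 2\ell n\ge 2\ell n^\alpha$,'' but the step $n^2\ge 2\ell n$ is equivalent to $n\ge 2\ell$, which is false for $\ell>n/2$ (and $\ell$ ranges all the way up to $n$). The conclusion $n^\ell\ge 2\ell n^\alpha$ is still true, but the correct route is the one the paper uses: observe that $n^{\ell-1}\ge 2\ell$ for all $\ell\ge 2$ and $n\ge 12$ (this holds because $n^{\ell-1}$ grows geometrically in $\ell$ while $2\ell$ is linear, with the base case $\ell=2$ giving $n\ge 4$), and then $n^\ell=n\cdot n^{\ell-1}\ge 2\ell n\ge 2\ell n^\alpha$. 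With that fix, the rest of your argument (the second inequality $n^\ell-\ell\ge n$ and the final assembly) is correct and tracks the paper's proof.
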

\begin{proof}
Consider any contract $\ContractVector=(x_1,x_2,x_3,x_4)\in\mathbb{R}_{\ge0}^4$ with $x_1=1-n^{-\alpha}$ with $\alpha\le1$. The utility of action $\ell\in\{2,3,\dots,n\}$ for type $1$ agent is $F^{(1)}_{\ell,1}\cdot x_1+F^{(1)}_{\ell,2}\cdot x_2+F^{(1)}_{\ell,4}\cdot x_4-c_{\ell}$. For this to be non-negative, it must hold that
\begin{align*}
    x_2&\ge\frac{c_{\ell}-F^{(1)}_{\ell,1}+F^{(1)}_{\ell,1}\cdot(1-x_1)-F^{(1)}_{\ell,4}\cdot x_4}{F^{(1)}_{\ell,2}}\\
    &=\frac{-\ell+n^{\ell-\alpha}}{F^{(1)}_{\ell,2}}\cdot\frac{1}{n^{n+1}}-\frac{F^{(1)}_{\ell,4}\cdot x_4}{F^{(1)}_{\ell,2}}&&\text{(By definition of $c_{\ell}, F^{(1)}_{\ell,1}, x_1$)}\\
    &=\frac{n^{n-1}-1}{F^{(1)}_{n,2}}\cdot\left(\frac{n^{\ell-\alpha}-\ell}{n^{\ell}-\ell}\cdot\frac{1}{n^{n}}-\frac{F^{(1)}_{\ell,4}\cdot x_4\cdot n}{(n^{\ell}-\ell)}\right)&&\text{(By definition of $F^{(1)}_{\ell,2}$)}\\
    &\ge\frac{n^{n-1}-1}{F^{(1)}_{n,2}}\cdot\left(\frac{n^{\ell-\alpha}-\ell}{n^{\ell}-\ell}\cdot\frac{1}{n^{n}}-x_4\right) &&\text{($F^{(1)}_{\ell,4}\le1$ and $n^{\ell}-\ell\ge n$ for $\ell\ge2$)}\\
    &\ge\frac{n^{n-1}-1}{F^{(1)}_{n,2}}\cdot\left(\frac{n^{\ell-\alpha}-\ell}{n^{\ell}-\ell}\cdot\frac{1}{n^{n}}-\frac{5}{n^{3n}}\right) &&\text{(By Lemma~\ref{lemma:x4_upper_bound_Omega_n_gap})}.
\end{align*}
It remains to prove that $\frac{n^{\ell-\alpha}-\ell}{n^{\ell}-\ell}\ge\frac{n^{-\alpha}}{2}$. To this end, note that $n^{\ell-1}\ge 2\ell$ for $\ell\ge 2$ and $n\ge 12$, and we derive that
\begin{align*}
    \frac{n^{\ell-\alpha}-\ell}{n^{\ell}-\ell}&=\frac{n^{-\alpha}-\ell/n^{\ell}}{1-\ell/n^{\ell}}\\
    &\ge\frac{n^{-\alpha}-n^{-1}/2}{1-\ell/n^{\ell}} &&\text{(By $n^{\ell-1}\ge 2\ell$)}\\
    &\ge n^{-\alpha}-\frac{n^{-1}}{2}\\
    &\ge \frac{n^{-\alpha}}{2} &&\text{(By $\alpha\le 1$)}.
\end{align*}
\end{proof}

We can now proceed to prove the main theorem.

\begin{proof}[Proof of Theorem~\ref{thm:Omega_n_gap}]\,

\textbf{Soundness.} We prove that any single contract $\ContractVector=(x_1,x_2,x_3,x_4)\in\mathbb{R}_{\ge0}^4$ can only get total profit at most $\frac{14}{n^{n+1}}$ from first two types of agents (recall type $3$ agent generates zero reward). By Lemma~\ref{lemma:agent_2_profit_upper_bound_Omega_n_gap}, contract $\ContractVector$ can only get profit at most $\frac{9}{n^{n+1}}$ from type $2$ agent. By Lemma~\ref{lemma:agent_1_low_profit_for_large_x1_Omega_n_gap}, if $x_1\ge1-\frac{1}{n}$, then contract $\ContractVector$ can only get profit at most $\frac{5}{n^{n+1}}$ from type $1$ agent, and thus, the total profit from two types of agents is at most $\frac{14}{n^{n+1}}$.

Therefore, it remains to analyze the case $x_1=1-n^{-\alpha}$ for $\alpha\le 1$. In this case, we can without loss of generality assume that type $1$ agent prefers an action $\ell\in\{2,\dots,n\}$, because otherwise type $1$ agent prefers either the null action (which generates zero profit) or action $1$ (which generates $F^{(1)}_{1,1}\cdot 1-c_1=\frac{1}{n^{n+1}}$ welfare and hence at most $\frac{1}{n^{n+1}}$ profit). Since type $1$ agent prefers action $\ell\in\{2,\dots,n\}$ over the null action, action $\ell$ must have non-negative utility, and it follows by Lemma~\ref{eq:lower_bound_of_x2_Omega_n_gap} that Ineq.~\eqref{eq:lower_bound_of_x2_Omega_n_gap} holds. Combining Ineq.~\eqref{eq:agent_2_profit_upper_bound_Omega_n_gap} in Lemma~\ref{lemma:agent_2_profit_upper_bound_Omega_n_gap} with Ineq.~\eqref{eq:lower_bound_of_x2_Omega_n_gap}, we have that
\begin{align*}
    \TypedProfit{2}(\CostVector, \ForecastTensor, \RewardVector, \ContractVector)&\le\frac{9}{n^{n+1}}-\frac{F^{(2)}_{2n+1,2}\cdot (n^{n-1}-1)}{F^{(1)}_{n,2}}\cdot\left(\frac{1}{2n^{n+\alpha}}-\frac{5}{n^{3n}}\right)\\
    &=\frac{9}{n^{n+1}}-\left(\frac{2}{n^{n+\alpha}}-\frac{20}{n^{3n}}\right) &&\text{(By definition of $F^{(2)}_{2n+1,2}$)}\\
    &\le\frac{10}{n^{n+1}}-\frac{2}{n^{n+\alpha}}.
\end{align*}
On the other hand, by Lemma~\ref{lemma:agent_1_low_profit_for_large_x1_Omega_n_gap}, we have that
\begin{align*}
\TypedProfit{1}(\CostVector, \ForecastTensor, \RewardVector, \ContractVector)&\le \frac{3}{n^{n+1}}+\frac{2(1-x_1)}{n^n}\\
&=\frac{3}{n^{n+1}}+\frac{2}{n^{n+\alpha}}.
\end{align*}
Hence, the total profit of contract $\ContractVector$ from first two types of agents is at most $\frac{13}{n^{n+1}}$ in this case.

\textbf{Completeness.} We show a menu of two contracts that gets total profit $\frac{1}{n^n}$ from first two types of agents and zero profit from type $3$ agent. The two contracts are $\ContractVector_1=(0,\frac{n^{n-1}-1}{n^{n}\cdot F^{(1)}_{n,2}},0,0)$ and $\ContractVector_2=(0,0,1,0)$. First, it is easy to check that given contract $\ContractVector_1$, every action among $[n]$ has zero utility for type $1$ agent (i.e., $F^{(1)}_{i,2}\cdot\frac{n^{n-1}-1}{n^{n}\cdot F^{(1)}_{n,2}}=c_i$ for all $i\in[n]$), and given contract $\ContractVector_2$, none of the actions can make strictly positive utility for type $1$ agent. Thus, we can assume that type $1$ agent chooses contract $\ContractVector_1$ and plays action $n$, and then the principal's profit from type $1$ agent is $F^{(1)}_{n,1}\cdot 1-F^{(1)}_{n,2}\cdot\frac{n^{n-1}-1}{n^{n}\cdot F^{(1)}_{n,2}}=\frac{1}{n^n}$.

On the other hand, type $2$ agent will get utility $F^{(2)}_{2n,3}\cdot 1-c_{2n}=\frac{4}{n^n}$ by choosing contract $\ContractVector_2$ and playing action $2n$. Moreover, if choosing contract $\ContractVector_1$, type $2$ agent will get utility $F^{(2)}_{\ell,2}\cdot\frac{n^{n-1}-1}{n^{n}\cdot F^{(1)}_{n,2}}-c_{\ell}=\frac{4}{n^n}-c_{\ell}$ for action $\ell\in\{n+1,n+2,\dots,2n+1\}$, which is at most $\frac{4}{n^n}$. Thus, we can assume that type $2$ agent chooses contract $\ContractVector_2$, and then the principal's profit from type $2$ agent is non-negative, because for any outcome, contract $\ContractVector_2$ does not pay more than its reward. It follows that the total profit from two types of agents is at least $\frac{1}{n^n}$.

Finally, recall that regardless of which action type $3$ agent plays, the result is always outcome $4$. Since the payment for outcome $4$ is zero for both contracts $\ContractVector_1,\ContractVector_2$, the profit from type $3$ agent is zero.
\end{proof}

\subsection{Proof of Theorem \ref{thm:Omega_log_T_gap}}

Here we provide the detailed proof of Theorem \ref{thm:Omega_log_T_gap} (that there exists an $\Omega(\log T)$ gap between single contracts and menus of deterministic contracts). 

\begin{lemma}\label{lemma:x_T_upper_bound_Omega_log_T_gap}
Consider the principal-agent problem $\PrincipalAgentProblem$ defined in the above $\Omega(\log T)$-gap instance. For any contract $\ContractVector=(x_1,\dots,x_{T})\in\mathbb{R}_{\ge 0}^T$ that has non-negative profit, it must hold that $x_T\le\frac{1}{16^N}$.
\end{lemma}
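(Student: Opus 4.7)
The plan is to mirror the argument used for Lemma~\ref{lemma:x4_upper_bound_Omega_n_gap}: bound the maximum profit the principal can possibly extract from all types other than $T$, observe that type $T$ contributes a certain loss proportional to $x_T$, and then use non-negativity of total profit to solve for $x_T$.

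First, I would compute the maximum welfare achievable for each type $t = (k-1)2^N + \ell$ with $k \in [N], \ell \in [2^N]$. Because outcome $1$ is the only outcome with positive reward, and (looking at Table~\ref{tab:omega-log-t-table}) action $2$ places zero probability on outcome $1$ for every such type, the only action capable of generating positive welfare is action $1$. Its welfare equals $F^{(t)}_{1,1} - c_1 = \frac{1}{2T(1-2^{-k})} - \frac{1}{2T} = \frac{1}{2T} \cdot \frac{2^{-k}}{1-2^{-k}}$. This is precisely the computation highlighted in Equation~\eqref{eq:equal_revenue_Omega_log_T_gap}: weighted by $\Pr[\mathrm{type}=t]=2^{k-1}/16^N$, each type's expected contribution to welfare is at most $\frac{1}{2T \cdot 16^N}$. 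Summing over all $T-1$ such types yields an upper bound of $\frac{T-1}{2T \cdot 16^N} < \frac{1}{2 \cdot 16^N}$ on the total expected profit extractable from types $1, \dots, T-1$.

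Next, I would handle type $T$. Since every action taken by type $T$ results deterministically in outcome $T$ (which carries zero reward), the contract pays $x_T$ to the type $T$ agent in exchange for nothing, producing a per-type loss of exactly $x_T$. Therefore the total expected profit of $\ContractVector$ is at most
\begin{equation*}
\frac{1}{2 \cdot 16^N} - \Pr[\mathrm{type}=T]\cdot x_T.
\end{equation*}
Requiring this to be non-negative, and using that $\Pr[\mathrm{type}=T] = 1 - \frac{2^N - 1}{8^N} \ge \tfrac12$ for $N\ge 3$, gives $x_T \le \frac{1}{16^N}$, as desired.

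There is no real obstacle here: the lemma is a straightforward ``tax on outcome $T$'' argument and essentially a verification that the probabilities are chosen with enough slack. The only thing to double-check is that the bound $\Pr[\mathrm{type}=T]\ge \tfrac12$ actually holds under the stated assumption $N\ge 3$ (indeed, $(2^N-1)/8^N \le 1/2$ already for $N=1$), and that no alternative action for types $1, \dots, T-1$ could yield larger welfare than action $1$, which is immediate from inspecting Table~\ref{tab:omega-log-t-table}.
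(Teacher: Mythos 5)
Your proposal is correct and follows essentially the same argument as the paper: bound the total welfare extractable from types $1,\dots,T-1$ (using that only action $1$ yields positive welfare), note that type $T$ unconditionally pays out $x_T$ for zero reward, and use non-negativity of the total profit together with $\Pr[\mathrm{type}=T]\ge\tfrac12$ to conclude $x_T\le 1/16^N$. The only cosmetic difference is that you invoke Equation~\eqref{eq:equal_revenue_Omega_log_T_gap} for the per-type weighted welfare bound and round it up to $\frac{1}{2\cdot 16^N}$ before solving, whereas the paper carries the exact sum $\frac{N}{8^N\cdot 2T}$ forward; both give the same final bound.
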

\begin{proof}
Note that only outcome 1 has strictly positive reward, and only action $1$ can result in outcome 1. Hence, the welfare of type $((k-1)2^N+\ell)$ agent is $F^{((k-1)2^N+\ell)}_{1,1}\cdot r_1 - c_1=\frac{2^{-k}}{2T(1-2^{-k})}$ for all $k\in[N]$ and $\ell\in[2^N]$, and the principal's profit from this type of agent is at most this amount.

Moreover, observe that type $T$ agent can only result in outcome $T$ regardless of which action is played. Thus, contract $(x_1,\dots,x_{T})$ pays type $T$ agent $x_T$ for nothing in return.

Altogether, the profit of contract $(x_1,\dots,x_{T})$ in expectation over agent types is at most
\begin{align*}
\sum_{k\in[N],\ell\in[2^N]} \frac{2^{k-1}}{16^N}\cdot \frac{2^{-k}}{2T(1-2^{-k})} - (1-\frac{2^{N}-1}{8^N})\cdot x_T&\le\sum_{k\in[N],\ell\in[2^N]} \frac{2^{k}}{16^N}\cdot \frac{2^{-k}}{2T} - \frac{1}{2}\cdot x_T \\
&=\frac{N}{8^N\cdot 2T}-\frac{1}{2}\cdot x_T,
\end{align*}
and for this to be non-negative, it must hold that $x_T\le\frac{N}{8^N\cdot T}\le\frac{1}{16^N}$.
\end{proof}

\begin{lemma}\label{lemma:bound_lower_part_agents_Omega_log_T_gap}
Consider the principal-agent problem $\PrincipalAgentProblem$ defined in the above $\Omega(\log T)$-gap instance. For any $d\in[N]$, for any contract $\ContractVector=(x_1,\dots,x_{T})\in\mathbb{R}_{\ge 0}^T$ with $x_1\ge1-2^{-d+1}$, we have
\begin{equation}\label{eq:bound_upper_part_agents_Omega_log_T_gap}
    \sum_{k\in[d],\ell\in[2^N]} \Pr[\textrm{agent type}=(k-1)2^N+\ell]\cdot\TypedProfit{(k-1)2^N+\ell}(\CostVector, \ForecastTensor, \RewardVector, \ContractVector)\le\frac{2}{8^N\cdot T}.
\end{equation}
\end{lemma}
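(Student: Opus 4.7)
The plan is to show that for every type $t = (k-1)2^N + \ell$ with $k \in [d]$ and $\ell \in [2^N]$, the principal's profit from type $t$ is at most $2^{1-d}/T$, uniformly in $k, \ell$, and then aggregate against the prior. Since type $t$ has only three available actions (the null action, action $1$, and action $2$), it suffices to show the principal's profit is at most $2^{1-d}/T$ no matter which of these the agent chooses.

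I would handle the per-type bound by cases. The null action delivers profit $0$. For action $2$, the construction sets $F^{(t)}_{2,1} = 0$ while every other outcome $j \in [T]$ carries reward $0$, so the profit equals $-\sum_{j} x_j F^{(t)}_{2,j} \le 0$. For action $1$, only outcomes $1$, $t$, and $T$ are reached, and among these only outcome $1$ has positive reward, so the profit equals
\begin{align*}
(1 - x_1)\, F^{(t)}_{1,1} - x_t\, F^{(t)}_{1,t} - x_T\, F^{(t)}_{1,T} \;\le\; (1 - x_1)\, F^{(t)}_{1,1}.
\end{align*}
Combining the hypothesis $x_1 \ge 1 - 2^{-d+1}$ with $F^{(t)}_{1,1} = 1/\bigl(2T(1 - 2^{-k})\bigr) \le 1/T$ (valid since $1 - 2^{-k} \ge 1/2$ for $k \ge 1$) bounds this by $2^{1-d}/T$, completing the per-type estimate.

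The final step is to sum this uniform per-type bound against the prior $\Pr[\textrm{agent type} = (k-1)2^N + \ell] = 2^{k-1}/16^N$:
\begin{align*}
\sum_{k=1}^{d}\sum_{\ell=1}^{2^N} \frac{2^{k-1}}{16^N} \cdot \frac{2^{1-d}}{T}
\;=\; \frac{2^N \cdot 2^{1-d}\,(2^d - 1)}{16^N \, T}
\;\le\; \frac{2^{N+1}}{16^N \, T}
\;=\; \frac{2}{8^N \, T},
\end{align*}
which is exactly the claimed inequality. No serious obstacle arises: the crucial structural fact is that the sole reward-bearing outcome can be reached only through action $1$, so forcing $x_1$ close to $1$ simultaneously destroys the only route to positive profit, and the exponentially growing prior over $k$ contributes only a factor $2^d$ that cancels the $2^{-d}$ in the per-type bound.
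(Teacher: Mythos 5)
Your proof is correct and takes essentially the same approach as the paper: bound the per-type profit by $(1-x_1)/T \le 2^{1-d}/T$ using the fact that outcome $1$ is the only reward-bearing outcome and occurs with probability at most $1/T$, then sum against the prior. The paper states the per-type bound more tersely (without your explicit three-way case analysis over the null action, action $1$, and action $2$), but the underlying idea and the final summation are identical.
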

\begin{proof}
Note that only outcome 1 has strictly positive reward, and only action $1$ can result in outcome 1. Moreover, since $F^{(t)}_{1,1}\le \frac{1}{T}$ for all $t\in[T]$, outcome 1 happens with probability at most $\frac{1}{T}$ regardless of the contract and the agent type. Thus, contract $(x_1,\dots,x_{T})$ gets profit at most $\frac{1-x_1}{T}$ (which in turn is at most $\frac{2^{-d+1}}{T}$ by our assumption) regardless of the agent type. It follows that
\begin{align*}
    &\sum_{k\in[d],\ell\in[2^N]} \Pr[\textrm{agent type}=(k-1)2^N+\ell]\cdot\TypedProfit{(k-1)2^N+\ell}(\CostVector, \ForecastTensor, \RewardVector, \ContractVector)\\
    \le&\sum_{k\in[d],\ell\in[2^N]} \Pr[\textrm{agent type}=(k-1)2^N+\ell]\cdot\frac{2^{-d+1}}{T}\\
    =&\sum_{k\in[d],\ell\in[2^N]}\frac{2^{k-1}}{16^N}\cdot\frac{2^{-d+1}}{T}
    =\frac{2(1-2^{-d})}{8^N\cdot T}\le\frac{2}{8^N\cdot T}.
\end{align*}
\end{proof}

\begin{lemma}\label{lemma:bound_upper_part_agents_Omega_log_T_gap}
Consider the principal-agent problem $\PrincipalAgentProblem$ defined in the above $\Omega(\log T)$-gap instance. For any $d\in[N]$, given any contract $\ContractVector=(x_1,\dots,x_{T})\in\mathbb{R}_{\ge 0}^T$ with $x_1\in[1-2^{-d+1},1-2^{-d}]$ that has non-negative profit, we have
\[
|\{t\in\{d\cdot2^N+1,\dots,T-1\} \textrm{ s.t. type $t$ agent prefers playing action $1$}\}|\le 2^{N+2}+1.
\]
\end{lemma}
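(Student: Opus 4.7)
The plan is to derive a contradiction whenever $m := |S'|$ exceeds $2^{N+2}+1$, where $S'$ denotes the set of types $t \in \{d \cdot 2^N+1,\dots,T-1\}$ preferring action $1$. Writing each such $t = (k_t-1)2^N + \ell$ with $k_t \in \{d+1,\dots,N\}$, and letting $X := \sum_{j=2}^{T-1} x_j$, the IC condition that action $1$ beats action $2$ for type $t$ reads
\[
F^{(t)}_{1,1}(x_1 + x_t) + F^{(t)}_{1,T} x_T - \tfrac{1}{2T} \;\geq\; \tfrac{F^{(t)}_{1,1}}{2^{k_t+1}}(X - x_t) + F^{(t)}_{2,T} x_T,
\]
where $F^{(t)}_{1,1} = \tfrac{1}{2T(1-2^{-k_t})}$. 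Dividing through by $F^{(t)}_{1,1}$, absorbing the $x_T$-terms into an error $\eta := 2T x_T$ (which Lemma \ref{lemma:x_T_upper_bound_Omega_log_T_gap} bounds by $O(N/8^N)$), and using $x_1 \leq 1 - 2^{-d}$ together with $k_t \geq d+1$ (so that $(1 - 2^{-k_t}) - x_1 \geq 2^{-d} - 2^{-k_t} \geq 2^{-d-1}$), I would obtain the clean lower bound
\[
x_t\,(1 + 2^{-k_t-1}) \;\geq\; 2^{-d-1} + X \cdot 2^{-k_t-1} - \eta.
\]

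The next step is to sum this inequality over $t \in S'$. Since $S' \subseteq \{2,\dots,T-1\}$, I have $\sum_{t \in S'} x_t \leq X$; and since $k_t \geq d+1$, $2^{-k_t-1} \leq 2^{-d-2}$, so after summation the LHS is bounded above by $(1+2^{-d-2}) X$. Rearranging then yields
\[
X \cdot \Bigl[(1 + 2^{-d-2}) - \sum_{t \in S'} 2^{-k_t-1}\Bigr] \;\geq\; m\,(2^{-d-1} - \eta).
\]

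Finally, suppose for contradiction that $m \geq 2^{N+2}+2$. Because every $k_t \leq N$, we get $\sum_{t \in S'} 2^{-k_t-1} \geq m \cdot 2^{-N-1} \geq 2 + 2^{-N}$, which strictly exceeds $1 + 2^{-d-2}$. Thus the bracketed coefficient on the LHS is negative, forcing the LHS to be non-positive. On the other hand, $\eta \leq 4N/8^N$ is far smaller than $2^{-d-1}$ for $N \geq 3$, so the RHS is strictly positive --- a contradiction. This gives $m \leq 2^{N+2} + 1$. The main obstacle of this plan is the IC manipulation in the first paragraph: cleanly separating the constant base $2^{-d-1}$, the $X$-coupling term $X \cdot 2^{-k_t-1}$, and the $x_T$-induced error $\eta$, while keeping the constants sharp enough that the summed inequality can be rearranged into the form displayed above. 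Once that clean bound is in hand, the summing-plus-pigeonhole finish is essentially immediate.
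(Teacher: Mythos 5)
Your proposal is correct, and it takes a genuinely different route from the paper. The paper's argument is an iterative bootstrapping: from the incentive and participation constraints it first derives a uniform lower bound $x_t \ge 2^{-(N+2)}$ for every $t$ preferring action $1$, feeds this back into the IC inequality to double the lower bound whenever $|S_1| \ge 2^{N+2}+2$, and then iterates indefinitely to force each $x_t \to \infty$, contradicting non-negative profit. Your proof replaces the bootstrap with a single coefficient-comparison: you normalize the IC inequality, keep the type-dependent coefficient $2^{-k_t-1}$ rather than replacing it by the worst-case $2^{-N-1}$, sum over $t\in S'$, and observe that for $m \ge 2^{N+2}+2$ the aggregated coefficient on $X$ becomes negative while the constant term $m(2^{-d-1}-\eta)$ stays positive. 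Both approaches hinge on the same structural fact — that each type preferring action $1$ must have $x_t$ dominating a $\Theta(2^{-N})$ fraction of the total payment mass on outcomes $2,\dots,T-1$, which cannot hold for too many types simultaneously — but your one-shot summation is more direct and avoids the unbounded-payment argument, at the cost of a slightly more careful bookkeeping in the normalization step (separating the base term $2^{-d-1}$, the coupling term $X\cdot 2^{-k_t-1}$, and the $x_T$-error $\eta$). In fact your approach yields a marginally tighter bound than $2^{N+2}+1$ if one tracks the constants, though both suffice for the theorem.

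Two small points worth stating explicitly in a final write-up: (i) the normalization step divides by $F^{(t)}_{1,1}>0$ and the $x_T$-error is bounded by $\eta := 2Tx_T$ using $F^{(t)}_{1,1}\ge 1/(2T)$ and $|F^{(t)}_{1,T}-F^{(t)}_{2,T}|\le 1$; (ii) the step $(1-2^{-k_t})-x_1 \ge 2^{-d}-2^{-k_t}\ge 2^{-d-1}$ uses $k_t\ge d+1$, which follows because $t\ge d\cdot 2^N+1$ places $t$ in a block with index at least $d+1$. Both are correct as you sketched, but deserve a sentence each to make the argument self-contained.
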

\begin{proof}
Consider any $t\in\{d\cdot2^N+1,\dots,T-1\}$ such that type $t$ agent prefers playing action $1$ given contract $(x_1,\dots,x_{T})$. First, since type $t$ agent prefers action $1$ over the null action, action $1$ must have non-negative utility. Because for type $t$ agent, action 1 only has strictly positive probability for outcomes $1$, $t$ and $T$, it follows that $F^{(t)}_{1,1}\cdot x_1+F^{(t)}_{1,t}\cdot x_t+F^{(T)}_{1,T}\cdot x_T\ge c_1$. After rearrangement, this is equivalent to
\begin{align}\label{eq:IR_Omega_log_T_gap}
   x_t&\ge\frac{c_1-F^{(t)}_{1,1}\cdot x_1-F^{(T)}_{1,T}\cdot x_T}{F^{(t)}_{1,t}}\nonumber\\
   &\ge\frac{c_1-F^{(t)}_{1,1}\cdot x_1-1/16^N}{F^{(t)}_{1,t}}&&\text{(By Lemma~\ref{lemma:x_T_upper_bound_Omega_log_T_gap} and $F^{(T)}_{1,T}\le 1$)}\nonumber\\
   &=\frac{1/(2T)}{F^{(t)}_{1,1}}-x_1-\frac{1/16^N}{F^{(t)}_{1,1}}&&\text{(By definition of $F^{(t)}_{1,t}$ and $c_1$)}\nonumber\\
   &\ge1-2^{-d-1}-x_1-\frac{2T}{16^N}&&\text{($\frac{1}{2T}\le F^{(t)}_{1,1}\le\frac{1}{2T(1-2^{-d-1})}$ for $t\ge d\cdot 2^N+1$)}\nonumber\\
   &\ge2^{-d-1}-\frac{2T}{16^N}&&\text{(By assumption $x_1\le1-2^{-d}$)}\nonumber\\
   &\ge\frac{1}{2^{N+1}}-\frac{2N}{8^N}\ge\frac{1}{2^{N+2}}&&\text{(By $d\le N$ and $T\ge N\cdot2^N$ and $N\ge 3$)}.
\end{align}

Moreover, since type $t$ agent prefers action $1$ over action $2$, action $1$ must have higher utility than action $2$. Namely, $F^{(t)}_{1,1}\cdot x_1+F^{(t)}_{1,t}\cdot x_t+F^{(T)}_{1,T}\cdot x_T-c_1\ge\sum_{\substack{s\in[T-1]\\s\neq t,1}} F^{(t)}_{2,s} \cdot x_{s}+F^{(t)}_{2,T}\cdot x_T$ (recall $F^{(t)}_{2,1},F^{(t)}_{2,t}=0$). After rearrangement, this is equivalent to
\begin{align}\label{eq:IC1_Omega_log_T_gap}
    x_t&\ge \frac{c_1-F^{(t)}_{1,1}\cdot x_1+(F^{(t)}_{2,T}-F^{(t)}_{1,T})\cdot x_T}{F^{(t)}_{1,t}}+\sum_{\substack{s\in[T-1]\\ s\neq t,1 }} \frac{F^{(t)}_{2,s}}{F^{(t)}_{1,t}}\cdot x_{s}\nonumber\\
    &\ge \frac{c_1-F^{(t)}_{1,1}\cdot x_1-1/16^N}{F^{(t)}_{1,t}}+\sum_{\substack{s\in[T-1]\\ s\neq t,1 }} \frac{F^{(t)}_{2,s}}{F^{(t)}_{1,t}}\cdot x_{s} &&\text{(By Lemma~\ref{lemma:x_T_upper_bound_Omega_log_T_gap} and $F^{(t)}_{2,T}-F^{(t)}_{1,T}\ge -1$)}\nonumber\\
    &\ge \frac{c_1-F^{(t)}_{1,1}\cdot x_1-1/16^N}{F^{(t)}_{1,t}}+\sum_{\substack{s\in[T-1]\\ s\neq t,1 }} \frac{1}{2^{N+1}}\cdot x_{s} &&\text{(By definition of $F^{(t)}_{2,s}$)}\nonumber\\
    &\ge\frac{1}{2^{N+2}}+\sum_{\substack{s\in[T-1]\\ s\neq t,1 }} \frac{1}{2^{N+1}}\cdot x_{s},
\end{align}
where the final step follows from the second to the last steps of the derivation of Ineq.~\eqref{eq:IR_Omega_log_T_gap}.

Now let $S_1=\{t\in\{d\cdot2^N+1,\dots,T-1\} \textrm{ s.t. type $t$ agent prefers playing action $1$}\}$, and assume for contradiction $|S_1|\ge 2^{N+2}+2$. By Ineq.~\eqref{eq:IC1_Omega_log_T_gap}, we have that for all $t\in S_1$
\begin{align}\label{eq:IC2_Omega_log_T_gap}
    x_t&\ge\frac{1}{2^{N+2}}+\sum_{\substack{s\in[T-1]\\ s\neq t,1 }} \frac{1}{2^{N+1}}\cdot x_{s}\nonumber\\
    &\ge\sum_{\substack{s\in S_1\\ s\neq t,1 }} \frac{1}{2^{N+1}}\cdot x_{s}\nonumber\\
    &\ge(|S_1|-2)\cdot \frac{1}{2^{2N+3}} &&\text{(By Ineq.~\eqref{eq:IR_Omega_log_T_gap})}\nonumber\\
    &\ge\frac{1}{2^{N+1}} &&\text{(By assumption $|S_1|\ge 2^{N+2}+2$)}.
\end{align}

Notice that Ineq.~\eqref{eq:IC2_Omega_log_T_gap} gives a new lower bound of $x_t$ for all $t\in S_1$, which is twice as large as the lower bound given by Ineq.~\eqref{eq:IR_Omega_log_T_gap}. Using this larger lower bound, we can repeat the above derivation of Ineq.~\eqref{eq:IC2_Omega_log_T_gap} (i.e., we use the larger lower bound instead in the third inequality of the derivation) and double the lower bound again. By repeating such derivation arbitrarily many times, we have that $x_t$ for all $t\in S_1$ is arbitrarily large, and thus, contract $\ContractVector$ makes arbitrarily large payment to each type of agent in $S_1$. Therefore, contract $\ContractVector$ must have strictly negative profit (because the reward generated by any type of agent is always bounded), which contradicts our assumption in the lemma statement.
\end{proof}

\begin{proof}[Proof of Theorem~\ref{thm:Omega_log_T_gap}]
\textbf{Soundness.} We show that any contract $\ContractVector=(x_1,\dots,x_{T})\in\mathbb{R}_{\ge 0}^T$ with non-negative profit at most has expected profit $\frac{6}{N\cdot16^N}$ (the expectation is over the distribution of agent type). Let $d\in[N]$ be such that $x_1\in[1-2^{-d+1},1-2^{-d}]$. Now we upper bound the expected profit of contract $\ContractVector$ as follows:
\begin{align}\label{eq:soundness1_Omega_log_T_gap}
    \textrm{expected profit}&=\sum_{k\in[N],\ell\in[2^N]} \Pr[\textrm{agent type}=(k-1)2^N+\ell]\cdot\TypedProfit{(k-1)2^N+\ell}(\CostVector, \ForecastTensor, \RewardVector, \ContractVector)\nonumber\\
    &=\sum_{k\in[d],\ell\in[2^N]} \Pr[\textrm{agent type}=(k-1)2^N+\ell]\cdot\TypedProfit{(k-1)2^N+\ell}(\CostVector, \ForecastTensor, \RewardVector, \ContractVector)\nonumber\\
    &\qquad+\sum_{t\in\{d\cdot2^N+1,\dots N\cdot2^N\}} \Pr[\textrm{agent type}=t]\cdot\TypedProfit{t}(\CostVector, \ForecastTensor, \RewardVector, \ContractVector)\nonumber\\
    &\le\frac{2}{8^N\cdot T}+\sum_{t\in\{d\cdot2^N+1,\dots N\cdot2^N\}} \Pr[\textrm{agent type}=t]\cdot\TypedProfit{t}(\CostVector, \ForecastTensor, \RewardVector, \ContractVector)\,\,\,\,\text{(By Lemma~\ref{lemma:bound_lower_part_agents_Omega_log_T_gap})}.
\end{align}
Moreover, because only action $1$ can generate strictly positive reward, we have that
\begin{align}\label{eq:soundness2_Omega_log_T_gap}
    &\sum_{t\in\{d\cdot2^N+1,\dots N\cdot2^N\}} \Pr[\textrm{agent type}=t]\cdot\TypedProfit{t}(\CostVector, \ForecastTensor, \RewardVector, \ContractVector)\nonumber\\
    &=\sum_{\substack{t\in\{d\cdot2^N+1,\dots N\cdot2^N\}\\\textrm{ s.t. type $t$ agent prefers action $1$}}} \Pr[\textrm{agent type}=t]\cdot\TypedProfit{t}(\CostVector, \ForecastTensor, \RewardVector, \ContractVector)\nonumber\\
    &\le\sum_{\substack{t\in\{d\cdot2^N+1,\dots N\cdot2^N\}\\\textrm{ s.t. type $t$ agent prefers action $1$}}} \Pr[\textrm{agent type}=t]\cdot \textrm{expected welfare of action $1$ for type $t$ agent}\nonumber\\
    &=\sum_{\substack{t\in\{d\cdot2^N+1,\dots N\cdot2^N\}\\\textrm{ s.t. type $t$ agent prefers action $1$}}} \Pr[\textrm{agent type}=t]\cdot (F^{(t)}_{1,1}\cdot 1-c_1)\nonumber\\
    &\le\sum_{\substack{t\in\{d\cdot2^N+1,\dots N\cdot2^N\}\\\textrm{ s.t. type $t$ agent prefers action $1$}}} \frac{1}{2T\cdot 16^N} \qquad\text{(By Ineq.~\eqref{eq:equal_revenue_Omega_log_T_gap})}\nonumber\\
    &\le \frac{2^{N+2}+1}{2T\cdot 16^N}\le \frac{4}{T\cdot 8^N} \qquad\qquad\qquad\qquad\quad\,\text{(By Lemma~\ref{lemma:bound_upper_part_agents_Omega_log_T_gap})}.
\end{align}
Combining Ineq.~\eqref{eq:soundness1_Omega_log_T_gap} and Ineq.~\eqref{eq:soundness2_Omega_log_T_gap}, we get that the expected profit of contract $\ContractVector$ is at most $\frac{6}{T\cdot 8^N}\le\frac{6}{N\cdot16^N}$.

\textbf{Completeness.} We show that there is a menu of $T-1$ contracts $\{\ContractVector_1,\dots,\ContractVector_{T-1}\}$ that achieve expected profit $\frac{1}{16^{N+1}}$. Specifically, for each $k\in[N]$ and $\ell\in[2^N]$, $\ContractVector_{(k-1)2^N+\ell}:=(x_1^{(k,\ell)},x_2^{(k,\ell)},\dots,x_{T}^{(k,\ell)})$ is defined as follows: let $x_1^{(k,\ell)}=\frac{1}{2}$, and let $x_{(k-1)2^N+\ell}^{(k,\ell)}=\frac{1}{2}-\frac{1}{2^{k+1}}$ if $(k-1)2^N+\ell\neq1$, and let the other entries be zero.

Because all the contracts in the menu make zero payment for outcome $T$, and type $T$ agent can only result in outcome $T$ which has zero reward, we can ignore type $T$ agent.

Now we show that given the above menu, for all $k\in[N]$ and $\ell\in[2^N]$, type $((k-1)2^N+\ell)$ agent wants to pick contract $\ContractVector_{(k-1)2^N+\ell}$ and play action $1$. First, it is easy to check that for type $((k-1)2^N+\ell)$ agent, given contract $\ContractVector_{(k-1)2^N+\ell}$, action $2$ generates zero utility, and action $1$ generates utility
\begin{equation}\label{eq:utility_completeness_Omega_log_T_gap}
    \frac{1}{2}\cdot F^{((k-1)2^N+\ell)}_{1,1}+\left(\frac{1}{2}-\frac{1}{2^{k+1}}\right)\cdot F^{((k-1)2^N+\ell)}_{1,(k-1)2^N+\ell}-c_1=\left(1-\frac{1}{2^{k+1}}\right)\cdot F^{((k-1)2^N+\ell)}_{1,1}-c_1=\frac{2^{-k-2}}{T(1-2^{-k})},
\end{equation}
which is strictly positive. Thus, it remains to show that by picking any other contract $\ContractVector_{(k'-1)2^N+\ell'}$, type $((k-1)2^N+\ell)$ agent cannot get more than the utility in Eq.~\eqref{eq:utility_completeness_Omega_log_T_gap}. To this end, we notice that given contract $\ContractVector_{(k'-1)2^N+\ell'}$, type $((k-1)2^N+\ell)$ agent would get the following utilities by playing action $1$ and action $2$ respectively:
\begin{align*}
    \textrm{utility of action $1$}&=\frac{1}{2}\cdot F^{((k-1)2^N+\ell)}_{1,1}-c_1\le 0,\\
    \textrm{utility of action $2$}&=(\frac{1}{2}-\frac{1}{2^{k'+1}})\cdot F^{((k-1)2^N+\ell)}_{2,(k'-1)2^N+\ell}\le \frac{1}{2}\cdot F^{((k-1)2^N+\ell)}_{2,(k'-1)2^N+\ell}=\frac{2^{-k-3}}{T(1-2^{-k})}.
\end{align*}

Finally, we derive the expected profit of the above menu (given that we have shown type $((k-1)2^N+\ell)$ agent picks contract $\ContractVector_{(k-1)2^N+\ell}$ and plays action $1$) as follows:
\begin{align*}
    \textrm{expected profit}&= \sum_{k\in[N],\ell\in[2^N]}\Pr[\textrm{agent type}=(k-1)2^N+\ell]\\
    &\qquad\qquad\qquad\times\left(F^{((k-1)2^N+\ell)}_{1,1}\cdot \left(1-x_1^{(k,\ell)}\right)-F^{((k-1)2^N+\ell)}_{1,(k-1)2^N+\ell}\cdot x_{(k-1)2^N+\ell}^{(k,\ell)}\right)\\
    &=\sum_{k\in[N],\ell\in[2^N]}\Pr[\textrm{agent type}=(k-1)2^N+\ell]\cdot F^{((k-1)2^N+\ell)}_{1,1}\cdot \left(1-x_1^{(k,\ell)}- x_{(k-1)2^N+\ell}^{(k,\ell)}\right)\\
    &=\sum_{k\in[N],\ell\in[2^N]}\frac{2^{k-1}}{2T(1-2^{-k})\cdot16^N}\cdot\frac{1}{2^{k+1}}=\sum_{k\in[N],\ell\in[2^N]}\frac{1}{8T\cdot16^N(1-2^{-k})}\\
    &=\frac{T-1}{8T\cdot16^N(1-2^{-k})}\ge\frac{1}{ 16^{N+1}}.
\end{align*}
\end{proof}

\subsection{Bounding the gap between linear contracts and menus of randomized linear contracts}

\begin{lemma} \label{thm:Omega_T_lower}
The optimal linear contract achieves a $O(T)$ approximation with respect to the optimal menu of randomized linear contracts. In particular, 
  \begin{align*}
    \OLinear \PrincipalAgentProblem \ge \Omega(1/T) \cdot \ORndMenuLinear \PrincipalAgentProblem. 
  \end{align*}
\end{lemma}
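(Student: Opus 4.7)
The plan is a two-stage averaging argument that distills the optimal menu of randomized linear contracts into a single linear contract. Let $\Gamma = (\gamma^{(1)}, \ldots, \gamma^{(T)})$ be an optimal menu and write $V := \ORndMenuLinear \PrincipalAgentProblem$. Since $V$ decomposes as a sum of $T$ per-type contributions, by pigeonhole there exists some $t^* \in [T]$ whose contribution already accounts for at least a $1/T$ fraction of the total:
\[
\Pr[\text{type}=t^*] \cdot \E_{\alpha \sim \gamma^{(t^*)}}\bigl[\TypedProfit{t^*}(\alpha)\bigr] \;\ge\; V/T.
\]

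Next I would extract a single transfer coefficient $\alpha^*$ from the support of $\gamma^{(t^*)}$ via a second averaging. The key observation is that since all rewards $\Reward{j}$ are non-negative, the per-type profit takes the form $\TypedProfit{t}(\alpha) = (1-\alpha) \TypedExpectedReward{t}{i_t(\alpha)}$, where $i_t(\alpha)$ is type $t$'s best-response action; in particular, $\TypedProfit{t}(\alpha) \geq 0$ for $\alpha \le 1$ and $\TypedProfit{t}(\alpha) \leq 0$ for $\alpha > 1$. Thus
\[
\E_{\alpha \sim \gamma^{(t^*)}}\bigl[\TypedProfit{t^*}(\alpha)\bigr] \;\le\; \E_{\alpha \sim \gamma^{(t^*)}}\bigl[\TypedProfit{t^*}(\alpha) \cdot \mathbf{1}[\alpha \le 1]\bigr] \;\le\; \max_{\alpha \in \mathrm{supp}(\gamma^{(t^*)}) \cap [0,1]} \TypedProfit{t^*}(\alpha),
\]
and the rightmost maximum is attained at some $\alpha^* \in [0,1]$ (which must exist in the support, since otherwise the leftmost expectation would be non-positive, contradicting $V/T > 0$ in the non-trivial case).

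Finally, I would offer $\alpha^*$ as a single linear contract to every type. Since $\alpha^* \le 1$ and rewards are non-negative, each per-type contribution $\TypedProfit{t}(\alpha^*)$ is non-negative, and by construction the type-$t^*$ contribution alone is at least $V/T$. Hence $\OLinear \PrincipalAgentProblem \geq \Pr[\text{type}=t^*] \cdot \TypedProfit{t^*}(\alpha^*) \geq V/T$, which is exactly the claimed $\Omega(1/T)$ bound. The only mildly subtle point is handling the possibility that $\gamma^{(t^*)}$ places mass above $1$ (which, per Lemma~\ref{lem:unbounded_rlc_example}, genuinely can occur at the optimum); the argument above dispatches this by observing that such contracts contribute non-positively to the menu's profit, so we lose nothing by restricting attention to the $[0,1]$-portion of the support when extracting $\alpha^*$.
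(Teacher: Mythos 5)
Your proof is correct and takes essentially the same approach as the paper's (one-paragraph) argument: decompose the menu's profit by type, pick the best-contributing type, and observe that a single linear coefficient from that type's distribution (restricted to $[0,1]$, where per-type profit $(1-\alpha)R^{(t)}_{i_t(\alpha)}$ is non-negative for \emph{every} type) already delivers a $1/T$ fraction. The only difference is presentational: you pigeonhole over types first and then average over the support, while the paper picks the best single linear contract per type and then pigeonholes; and you spell out the $\alpha>1$ issue that the paper leaves implicit (the best per-type linear contract automatically lies in $[0,1]$ because $\alpha>1$ yields non-positive profit).
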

\begin{proof}
For each type, we can compute the best linear contract. This contract generates at least as much revenue as any component of the randomized linear contract. Hence, we can extract at least an $\Omega(1/T)$ fraction of the optimal revenue. 
\end{proof}

\subsection{Proof of Lemma~\ref{lem:brkpoint}}
\label{subsec:apx-lem-brkpoint}

\begin{proof}[Proof of Lemma~\ref{lem:brkpoint}]
 Given any optimal menu $\tilde{\Gamma} = \left(\tilde{\gamma}^{(1)},\dots,\tilde{\gamma}^{(T)} \right)$, we will construct a new menu $\Gamma=\left(\gamma^{(1)},\dots,\gamma^{(T)} \right)$ which satisfies the above constraints and whose revenue is at least as large as $\tilde{\Gamma}$. In particular, we will demonstrate how to redistribute probability mass not supported on one of these points to these points while leaving the performance of the menu unchanged. 

 For any point $\alpha \leq \alpha_p$ that is not a break point, we can write $\alpha = r\cdot \alpha_j + (1-r)\cdot \alpha_{j+1}$ as a convex combination of two adjacent breakpoints. Now we simply increment $\gamma^{(t)}(\alpha_j) $ by $r\cdot \tilde{\gamma}^{(t)}(\alpha)$ and $\gamma^{(t+1)}(\alpha_{j+1})$ by$(1-r) \cdot \tilde{\gamma}^{(t)}(\alpha_{j+1})$. Since we are simply moving the mass at a given point to its two neighbors, and all utility functions $U_t$ are linear in between any pair of consecutive breakpoints, the utility of each agent for each randomized linear contract in the menu remains the same (and hence each agent still selects the same contract from the menu and plays the same action).
 
 The profit obtained by the principal by offering linear contract $\alpha$ to the agent of type $t$ is given by $(1-\alpha)U'_{t}(\alpha)$ (when taking the derivative, we arbitrarily break ties in the favor of the principal). Since $U_i'(\alpha) = U_i'(\alpha_j) \leq U_i'(\alpha_{j+1})$, we note that the revenue achieved by the new menu is at least what is achieved by $\tilde{\Gamma}$.

It remains to deal with the probability mass above $\alpha_{p}$. Define the point 
$$\alpha^{i}_{p+1}:= \frac{ \int_{b=\alpha_p}^{\infty} b \cdot \tilde{\gamma}^{(i)}(b) db}{\int_{b=\alpha_p}^{\infty}\tilde{\gamma}^{(i)}(b) db}$$ 
whenever the $\int_{b=\alpha_p}^{\infty}\gamma^{(i)}(b) db >0$.  By definition,  $\alpha^i_{p+1} \geq \alpha_p$. We will set $\gamma^{(i)}(\alpha^i_{p+1}) = \int_{b=\alpha_p}^{\infty}\gamma^{(i)}(b) db $.
Since we are simply contracting the mass in this last interval $(\alpha_p,\infty)$, the resulting distribution is still a probability distribution. 
Furthermore, 
$$\int_{b=\alpha_{p}}^{\infty} f(b) \tilde{\gamma}^{(i)}(b) db = f(\alpha_{p+1}) \gamma^{(i)}(\alpha_{p+1})$$
for any linear function $f$. Again, since (for this range of $\alpha$) all utilities for each agent (and the principal's profit) are linear functions of $\tilde{\gamma}$, they are preserved under the new contract $\gamma$.

\end{proof}
\subsection{Proof of Lemma~\ref{lem:unbounded_rlc_example}}

\begin{proof}[Proof of~\Cref{lem:unbounded_rlc_example}]
Recall that the breakpoints in the instance are multiples of $1/3$, i.e.~$\alpha_i = i/3$ for $i \leq 6.$
First, we exhibit a menu whose revenue is large $0.31625$. Consider the menu $ \Gamma = (\gamma^{(1)}, \gamma^{(2)})$
\begin{align*}
  \gamma^{(1)}(\alpha) = \begin{cases}
    \frac{3}{5}    &\text{ if } \alpha = 0  \\
    \frac{79}{200} &\text{ if } \alpha = \frac23 \\
    \frac{1}{200}  &\text{ if } \alpha = \frac53 \\
    0              &\text{ otherwise}
  \end{cases}
\qquad 
  \gamma^{(2)}(\alpha) = \begin{cases}
    1 &\text{ if } \alpha = \frac13 \\
    0 &\text{ otherwise} 
  \end{cases}
\end{align*}
The first menu item extracts a revenue of $\frac{359}{2400}$ and the second extracts a revenue of $\frac16$ to get a total revenue of $\frac{253}{800}=0.31625$.

To show that no bounded randomized linear contract cannot get a fractional value of greater than $0.31625$, we exhibit a dual solution to the dual of linear program presented in Section~\ref{sec:menu_rlc}.
\begin{align*}
& \min & s_1 + s_2 \\
& s.t. &
  s_1 &\ge \left( U_1(\alpha_j) \lambda_{1,2} - U_{2}(\alpha_j) \lambda_{2,1} \right) + U_{1}'(\alpha_j) (1-\alpha_j)  \qquad \forall 0 \le j \le 3  \\
& &  s_2 &\ge \left( U_2(\alpha_j) \lambda_{2,1} - U_{1}(\alpha_j) \lambda_{1,2} \right) + U_{2}'(\alpha_j) (1-\alpha_j)  \qquad \forall 0 \le j \le 3  \\
& & \lambda, s &\geq 0
\end{align*}

We derived this dual by restricting the breakpoints to just $\alpha_0 = 0, \alpha_1 = 1/3, \alpha_2 = 2/3, \alpha_3 = 1$. Hence any feasible dual solution is an upper bound on menus of \emph{bounded} randomized linear contracts. In particular, assigning values $\lambda_{1,2} = 1.857154080498256 $ , $\lambda_{2,1} = 0.8095306797975786$, $s_1 = 0.25$ and $s_2 = 0.056345$ achieves an objective of approximately $0.30635<0.31$. 
\end{proof}

\subsection{Proof of Corollary~\ref{cor:det_vs_rnd}}\label{apx:det_vs_rnd}

\begin{proof}[Proof of Corollary~\ref{cor:det_vs_rnd}]
Note that the example in Theorem \ref{thm:rlc_gap} only has one non-null outcome. We first claim that for such instances, every general contract is equivalent to an \textit{affine contract}: a contract that transfers $\alpha$ times the principal's reward plus an additive $\beta$ (in particular, a transfer of $x_0$ on the null outcome and $x_1$ on the non-null outcome is equivalent to the affine contract with $\beta = x_0$ and $\alpha = (x_1 - x_0)/r_1$). 

Because of this, the best menu of randomized (general) contracts performs equally as well as the best menu of randomized affine contracts. Since all linear contracts are affine, this menu performs at least as well as the best menu of randomized linear contracts. We thus have that

$$\ORndMenu \PrincipalAgentProblem \geq \ORndMenuLinear\PrincipalAgentProblem.$$

On the other hand, Lemma 7 in \citet{guruganesh2021contracts} states that for typed principal-agent problems with two outcomes, $\ODetMenu\PrincipalAgentProblem = \OLinear\PrincipalAgentProblem$. Combining these two facts with Theorem \ref{thm:rlc_gap} we arrive at the theorem statement.
\end{proof}

\subsection{Proof of Theorem~\ref{thm:Omega_1_lower}}\label{sec:Omega_1_lower}

\begin{proof}[Proof of Theorem~\ref{thm:Omega_1_lower}]
Consider any $\PrincipalAgentProblem{}$ with $n = 2$ actions (a null action $0$ and a non-null action $1$), $T$ agent types, and $m$ outcomes. For each agent type $t$, we let $\Utility_t(\alpha):\mathbb{R}_{\geq 0} \rightarrow \mathbb{R}_{\geq 0}$ denote the utility function for type-$t$ agent given linear contract $\alpha$ (i.e., pay $\alpha \cdot r_j$ for each outcome $j$) as input variable. Notice that $\Utility_t(\alpha)=\max\{0,\,\sum_{j\in[m]} \alpha\cdot F_{1,j}^{(t)} r_j-c_1\}$ (i.e., the best of the utilities of actions $0$ and $1$). We let $R^{(t)}:=\sum_{j\in[m]} F_{1,j}^{(t)} r_j$. Then, the utility function can be simplified as $\Utility_t(\alpha)=\max\{0,\,R^{(t)}\cdot \alpha-c_1\}$. For each agent type $t$, we let $\alpha_t$ denote the minimum linear contract for which type-$t$ agent would like to play action $1$, i.e., $\alpha_t=\frac{c_1}{R^{(t)}}$.

\paragraph{Bucketizing the agent types}
Let $w=10$ (any constant larger than $1$ would work) and $\alpha_{\min}:=\min_{t\in[T]}\alpha_t$ and $\alpha_{\max}:=\max_{t\in[T]}\alpha_t$. Now we bucketize the agent types into $b$ buckets according to their $\alpha_t$, and $b$ is chosen such that $1-\alpha_{\min}\in[w^{b-1}(1-\alpha_{\max}), w^{b}(1-\alpha_{\max}))$. The $i$-th bucket $B_i$ contains every agent type $t$ such that $1-\alpha_t\in[w^{i-1}(1-\alpha_{\max}), w^{i}(1-\alpha_{\max}))$. Now consider two principal-agent problem instances by modifying the instance $\PrincipalAgentProblem{}$:
\begin{itemize}
    \item Odd instance $(\CostVector_1, \ForecastTensor_1, \RewardVector_1)$: Remove every agent type that belongs to a bucket with even index. Add a dummy agent type (which always results into the dummy outcome with zero reward) such that its probability is the sum of the probabilities of the removed agent types.
    \item Even instance $(\CostVector_2, \ForecastTensor_2, \RewardVector_2)$: Same as above except that we now remove the agent types in the buckets with odd indices.
\end{itemize}

Obviously $\OLinear(\CostVector, \ForecastTensor, \RewardVector)\ge\OLinear(\CostVector_1, \ForecastTensor_1, \RewardVector_1),\OLinear(\CostVector_2, \ForecastTensor_2, \RewardVector_2)$, and
\begin{align*}
    \ORndMenuLinear(\CostVector, \ForecastTensor, \RewardVector)\le &\ORndMenuLinear(\CostVector_1, \ForecastTensor_1, \RewardVector_1)\\
    &+ \ORndMenuLinear(\CostVector_2, \ForecastTensor_2, \RewardVector_2).
\end{align*}

Assume that $\ORndMenuLinear(\CostVector_1, \ForecastTensor_1, \RewardVector_1)\ge\ORndMenuLinear(\CostVector_2, \ForecastTensor_2, \RewardVector_2)$ without loss of generality. Then, we have that
\[
\frac{\OLinear(\CostVector, \ForecastTensor, \RewardVector)}{ \ORndMenuLinear(\CostVector, \ForecastTensor, \RewardVector)}\ge\frac{\OLinear(\CostVector_1, \ForecastTensor_1, \RewardVector_1)}{2\cdot\ORndMenuLinear(\CostVector_1, \ForecastTensor_1, \RewardVector_1)}.
\]
Henceforth, it suffices to show $\OLinear(\CostVector_1, \ForecastTensor_1, \RewardVector_1)\ge\Omega(1)\cdot\ORndMenuLinear(\CostVector_1, \ForecastTensor_1, \RewardVector_1)$. In the principal-agent problem $(\CostVector_1, \ForecastTensor_1, \RewardVector_1)$, if two agent types $t,t'$ belong to two different buckets (say $t\in B_i,t'\in B_{i'}$ for $i<i'$), then $i+1<i'$ because $i,i'$ must be odd. It follows that $w(1-\alpha_{t})\le 1-\alpha_{t'}$ by definition of the buckets.

\paragraph{Choosing one agent type from each bucket} Notice that for each bucket $B_i$, linear contract $\beta_i:=1-w^{i-1}(1-\alpha_{\max})$ can extract $\frac{1}{w}$-fraction of the welfare of any agent type $t\in B_i$. Indeed, for any $t\in B_i$, we have $\beta_i\ge \alpha_t$, and hence type-$t$ agent would play action $1$ and generates a profit $(1-\beta_i)R^{(t)}$ for the principal, which is $\frac{1-\beta_i}{1-\alpha_t}\ge\frac{1-\beta_i}{w^{i}(1-\alpha_{\max})}=\frac{1}{w}$ fraction of type-$t$ agent's welfare.

Suppose the optimal linear contract only achieves expected profit $p_{\ell}$ for $(\CostVector_1, \ForecastTensor_1, \RewardVector_1)$. Then the expected welfare of bucket $B_i$ (i.e., $\sum_{t\in B_i} \Pr[\textrm{agent type}=t]\cdot\textrm{type-$t$ agent's welfare}$) for any odd $i$ is at most $w\cdot p_{\ell}$, because we have shown there is a linear contract that extracts $\frac{1}{w}$-fraction of the welfare of any agent type $t\in B_i$. Moreover, suppose the optimal menu of randomized linear contracts $\Gamma$ extracts $\eta_i$ fraction of the expected welfare of bucket $B_i$. Then the expected profit of $\Gamma$ is at most $\sum_{\textrm{odd } i\in[b]}\eta_i w\cdot p_{\ell}$. Since $w$ is a constant, it suffices to prove $\sum_{\textrm{odd } i\in[b]}\eta_i=O(1)$.

Moreover, since $\Gamma$ extracts $\eta_i$ fraction of the expected welfare of bucket $B_i$, there should be at least one agent type in $B_i$ from which $\Gamma$ extracts at least $\eta_i$ fraction of the welfare.
We choose one such agent type for each bucket $B_i$ with odd $i\in[b]$.

Summarizing what we have done so far, we have chosen $k=\left\lceil\frac{b}{2}\right\rceil$ agent types, which we denote by $t_1,t_2,\dots,t_k$ (ordered s.t. $\alpha_{t_i}$ is non-decreasing in $i$), such that
\begin{itemize}
    \item $w(1-\alpha_{t_{i+1}})\le 1-\alpha_{t_i}$ for each $i\in[k-1]$ (recall this follows from picking the odd instance),
    \item and menu $\Gamma$ extracts certain $\rho_i$ fraction of the welfare of the type-$t_i$ agent for each $t_i$, such that $\sum_{i\in[k]}\rho_i\ge\sum_{\textrm{odd } i\in[b]}\eta_i$ (this is by our choice of the agent types),
\end{itemize}
and our final step is to show $\sum_{i\in[k]}\rho_i=O(1)$, which implies $\sum_{\textrm{odd } i\in[b]}\eta_i=O(1)$.

\paragraph{Proving that $\sum_{i\in[k]}\rho_i=O(1)$}
Let $D_i(\alpha)$ denote the CDF of type-$t_i$ agent's favorite randomized linear contract for each $i\in[k]$ (we will also use $D_i$ to refer to type-$t_i$ agent's favorite randomized linear contract). Then, we have that
\begin{align*}
    &\underbrace{\frac{\textrm{profit of $D_i$ from type-$t_i$ agent}}{\textrm{type-$t_i$ agent's welfare}}}_{=\rho_i}=\int_{\alpha_{t_i}}^{\infty} \frac{1-\alpha}{1-\alpha_{t_i}} dD_i(\alpha),\\
    &\underbrace{\frac{\textrm{type-$t_i$ agent's utility from $D_i$}}{\textrm{type-$t_i$ agent's welfare}}}_{:=u_i}=\int_{\alpha_{t_i}}^{\infty} \frac{\alpha-\alpha_{t_i}}{1-\alpha_{t_i}} dD_i(\alpha),
\end{align*}
where the integrals start from $\alpha_{t_i}$ because $\alpha_{t_i}$ is the minimum $\alpha$ for which type-$t_i$ agent would like to play action $1$. Note that the LHS of the first equation is just $\rho_i$ by definition, and we let $u_i$ denote the LHS of the second equation (observe that $\rho_i+u_i=1$). Without loss of generality, we assume that $u_i\le 1$ for all $i\in[k]$, because otherwise we can first remove all the agent types $t_i$ with $u_i>1$ and then prove $\sum_{i\in[k]\textrm{ s.t. }u_i\le1}\rho_i=O(1)$ for the remaining agent types (which obviously implies $\sum_{i\in[k]}\rho_i=O(1)$).

Now we establish a lower bound for $u_{i-1}-u_i$. Because type-$t_{i-1}$ agent prefers $D_{i-1}$ over $D_i$, we have that
\[
\underbrace{\frac{\textrm{type-$t_{i-1}$ agent's utility from $D_{i-1}$}}{\textrm{type-$t_{i-1}$ agent's welfare}}}_{=u_{i-1}}\ge \underbrace{\frac{\textrm{type-$t_{i-1}$ agent's utility from $D_{i}$}}{\textrm{type-$t_{i-1}$ agent's welfare}}}_{:=u_{i-1}'},
\]
where we denote the RHS by $u_{i-1}'$. Thus we can prove a lower bound for $u_{i-1}'-u_i$ instead. To this end, we derive that
\begin{align*}
    u_{i-1}'-u_i&=\int_{\alpha_{t_{i-1}}}^{\infty} \frac{\alpha-\alpha_{t_{i-1}}}{1-\alpha_{t_{i-1}}} dD_i(\alpha)-\int_{\alpha_{t_i}}^{\infty} \frac{\alpha-\alpha_{t_i}}{1-\alpha_{t_i}} dD_i(\alpha)\\
    &\ge \int_{\alpha_{t_i}}^{\infty} \frac{\alpha-\alpha_{t_{i-1}}}{1-\alpha_{t_{i-1}}} dD_i(\alpha)-\int_{\alpha_{t_i}}^{\infty} \frac{\alpha-\alpha_{t_i}}{1-\alpha_{t_i}} dD_i(\alpha) &&\text{(By $\alpha_{t_i}\ge \alpha_{t_{i-1}}$)}\\
    &=\int_{\alpha_{t_i}}^{\infty}\frac{(1-\alpha_{t_i})(\alpha-\alpha_{t_{i-1}})-(1-\alpha_{t_{i-1}})(\alpha-\alpha_{t_i})}{(1-\alpha_{t_i})(1-\alpha_{t_{i-1}})}dD_i(\alpha)\\
    &=\int_{\alpha_{t_i}}^{\infty}\frac{-\alpha_{t_{i-1}}-\alpha_{t_i}\alpha+\alpha_{t_i}+\alpha_{t_{i-1}}\alpha}{(1-\alpha_{t_i})(1-\alpha_{t_{i-1}})}dD_i(\alpha)\\
    &=\int_{\alpha_{t_i}}^{\infty} \frac{1-\alpha}{1-\alpha_{t_i}}\cdot\frac{\alpha_{t_i}-\alpha_{t_{i-1}}}{1-\alpha_{t_{i-1}}} dD_i(\alpha)\\
    &\ge \int_{\alpha_{t_i}}^{\infty} \frac{1-\alpha}{1-\alpha_{t_i}}\cdot\left(1-\frac{1}{w}\right) dD_i(\alpha)&&\text{(By $w(1-\alpha_{t_i})\le(1-\alpha_{t_{i-1}})$)}\\
    &=\left(1-\frac{1}{w}\right)\rho_i &&\text{(By definition of $\rho_i$)}.
\end{align*}
Therefore, we have shown that $u_{i-1}-u_i\ge \left(1-\frac{1}{w}\right)\rho_i$. Taking a telescoping sum, we have that $u_{1}-u_{k}=\sum_{i\in\{2,\dots,k\}}u_{i-1}-u_{i}\ge \left(1-\frac{1}{w}\right)\sum_{i\in\{2,\dots,k\}}\rho_i$. It follows that $\sum_{i\in[k]}\rho_i=O(1)$ because $\rho_1,u_1\le 1$ and $u_k\ge 0$, and $w$ is a constant.
\end{proof}
\end{document}